\pdfminorversion=5
\documentclass[11pt, oneside]{article}

\usepackage{vmargin}   
\setmarginsrb{24mm}{0mm}{24mm}{10mm}{10mm}{10mm}{10mm}{10mm}

\usepackage{graphicx,epsfig}
\usepackage[cp1250]{inputenc}
\usepackage{amsmath}
\usepackage{amssymb}
\usepackage{amsthm}
\usepackage{cite}
\usepackage{colortbl}
\usepackage{hhline}
\usepackage{enumerate}

\usepackage{multirow}
\usepackage{pdflscape}
\usepackage{tikz}

\newtheorem{thm}{Theorem}[section]
\newtheorem{lem}[thm]{Lemma}
\newtheorem{cor}[thm]{Corollary}
\newtheorem{prop}[thm]{Proposition}

\theoremstyle{definition}
\newtheorem{rem}[thm]{Remark}
\newtheorem{defn}[thm]{Definition}
\newtheorem{ex}[thm]{Example}
\newtheorem{con}[thm]{Conjecture}

\newcommand\diag[3]{%
	\multirow{-#1}{#2}{%
		\hskip-\tabcolsep\hskip-0.1mm
		$\vcenter{\vskip-0.5mm
			\begin{tikzpicture}
				\node[minimum width={#2+2\tabcolsep+1.0mm}, minimum height={#1\baselineskip+0.6mm}] (box) {};
				\draw (box.south east) -- (box.north west);
				\node[anchor=south west] at (box.south west) {#3};
			\end{tikzpicture}
		}$
		\hskip-\tabcolsep
	}
}

\newcommand{\pow}[1]{\mathcal{P}(#1)}
\newcommand{\powk}[2]{\mathcal{P}_{#1}(#2)}

\newcommand{\Pol}{\mathcal{Z}} 
\newcommand{\B}{\mathcal{B}} 
\newcommand{\bfg}{\boldsymbol{g}} 
\newcommand{\G}{\Gamma} 
\newcommand{\Gm}{\min \G}
\newcommand{\GPPD}{\G=\G(\Pi,\Pol,\Delta)}

\newcommand{\NN}{\mathbb{N}}
\newcommand{\NO}{\mathbb{N}_0}
\newcommand{\RR}{\mathbb{R}}
\newcommand{\KK}{\mathbb{K}}

\newcommand{\W}{\bar{w}} 
\newcommand{\V}{\bar{v}} 
\newcommand{\U}{\bar{u}} 
\newcommand{\E}[1]{\bar{e}^{(#1)}} 
\newcommand{\supp}[1]{\mathrm{supp}(#1)}
\newcommand{\sgn}{\mathrm{sgn}}
\newcommand{\Span}{\mathrm{span}}

\newcommand{\seq}{\subseteq}
\newcommand{\set}[1]{\{ #1\}}

\newcommand{\ord}[2]{Ord(#1,#2)}
\newcommand{\pord}[2]{Ord^*(#1,#2)}

\begin{document}

\title{Access Structures Determined\\ by Uniform Polymatroids}

\author{Renata~Kawa\\
{\footnotesize Faculty of Science and Technology,} \\
{\footnotesize Jan D\l ugosz University, Cz\c estochowa, Poland}\\ 
{\footnotesize e-mail: (r.kawa@ujd.edu.pl)}\\
{\footnotesize ORCID: 0000-0002-3224-7476}
\\[5mm]
Mieczys\l aw~Kula\\
{\footnotesize Institute of Mathematics}\\
{\footnotesize University of Silesia, Katowice, Poland}\\ 
{\footnotesize ORCID: 0000-0001-5743-3809}
}

\date{ }

\maketitle

\begin{abstract}
A secret sharing scheme is a method of sharing a secret key among a finite set of participants in such a way that only certain specified subsets of participants can compute the key. The access structure of a secret sharing scheme is the family of these subsets of participants which are able to recover the secret. If the length in bits of every share is the same as the length of the secret, then the scheme is called ideal.
An access structure is said to be multipartite, if the set of participants is divided into several parts and all participants in the same part play an equivalent role.
The search for ideal secret sharing schemes for some special interesting families of multipartite access structures, has been carried out by many authors. 
In this paper a new concept of study of ideal access structures is proposed.  
We do not consider special classes of access structures defined by imposing certain prescribed assumptions, but we investigate all access structures obtained from uniform polymatroids using the method developed by Farr\`as, Mart\'i-Farr\'e and Padr\'o (cf.  Theorem \ref{thm:mingamma} and Definition  \ref{defn:constr:acc:str} below). 
They satisfy necessary condition to be ideal, i.e., they are matroid ports.
Moreover some objects in this family can be useful for the applications of secret sharing.
The choice of uniform polymatroids is motivated by the fact that each such polymatroid defines ideal access structures. The method presented in this article is universal and can be continued with other classes of polymatroids in further similar studies. 
Here we are especially interested in hierarchy of participants determined by the access structure and we distinguish two main classes: they are compartmented and hierarchical access structures. The vast majority of papers discussing hierarchical access structures consider access structures which are compartment or totally hierarchical.
The main results are summarized in Section \ref{sec:hier:pord}, which presents situations where partial hierarchy properties may arise.
In particular, hierarchical orders of obtained structures are described. It is surprising, that the hierarchical orders of access structures obtained from uniform polymatroids are flat, i.e., every chain has at most 2 elements. The ideality of some families of hierarchical access structures is proved in Section \ref{sec:ideal:acc:str}.
\end{abstract}

\noindent
\textbf{Keywords}
secret sharing - multipartite access structure - ideal access structure - partially hierarchical access structure - uniform polymatroid.

\section{Introduction}

A secret sharing scheme is a method of sharing a secret piece of data among a finite 
set of participants in such a way that only certain specified subsets of participants 
can compute the secret data. Secret sharing was originally introduced by Blakley 
\cite{gB} and Shamir \cite{aS} independently in 1979 as a solution for safeguarding 
cryptographic keys, but nowadays it is used in many cryptographic protocols.

Let $P$ be a finite set of participants and let $p_0\notin P$ be a special participant 
called the \emph{dealer}. Given a secret, the dealer computes the shares and distributes 
them secretly to the participants, so that no participant knows the share given to 
another one. It is required that only certain \emph{authorized} subsets of $P$ can 
recover the secret by pooling their shares together. It is easily seen that the family 
$\Gamma$ of all authorized sets, called an \emph{access structure}, is monotone 
increasing, which means that any superset of an authorized subset is also authorized. To 
avoid abnormal situations, we assume that $\emptyset \notin \Gamma$ and $P \in \Gamma$. 
If every unauthorized set of participants cannot reveal any information about the 
secret, regardless of the computational power available, then the secret sharing scheme 
is said to be \emph{perfect}. Such a scheme can be considered as unconditionally secure.

Ito, Saito, Nishizeki \cite{ISN} and Benaloh, Leichter \cite{BL} independently  proved, 
in a constructive way, that every monotone increasing family of subsets of $P$ admits a 
perfect secret sharing scheme. Therefore, every monotone increasing family of subsets of 
$P$ is referred to as an access structure. Obviously, every access structure is uniquely 
determined by the family of its minimal sets. An access structure is said to be 
\emph{connected} if every participant in $P$ is a member of a certain minimal authorized 
set. If an access structure is not connected, then every participant which does not 
belong to any minimal authorized set is called \emph{redundant} because its share is 
never necessary to recover the secret. 

Given a secret sharing scheme, let $S_0$ be the set of all possible secrets and let 
$S_p$ be the set of all possible values of shares that can be assigned to the 
participant $p$ for every $p\in P$. One can show that for every perfect secret sharing 
scheme the size of $S_0$ is not greater than the size of $S_p$ for all $p\in P$. A 
perfect secret sharing scheme is called \emph{ideal} if $|S_0|=|S_p|$ for all $p\in P$. 
In other words, the length in bits of every share is the same as the length of the 
secret. Shamir's threshold schemes \cite{aS} are the best known examples of ideal secret 
sharing schemes. The secret sharing schemes constructed for a given access structure 
in \cite{ISN} and \cite{BL} are very far from being ideal because the length of the 
shares grows exponentially with the number of participants. An access structure is said 
to be \emph{ideal}, if it is the access structure of an ideal secret sharing scheme. 
An access structure is said to be \emph{multipartite} if the set of participants 
is divided into several blocks which are pairwise disjoint and participants in 
individual blocks are equivalent (precise definition can be found in subsection 
\ref{ssec:multipartite}).
The study of multipartite access structures was initiated by Kothari \cite{sK}, 
who posed the open problem of constructing ideal hierarchical secret sharing schemes, 
and by Simmons \cite{gS}, who introduced the multilevel and compartmented access 
structures. This approach, developed by many authors, provides a very effective tool for 
describing structures in a compact way, by using a few conditions that are independent 
of the total number of participants. 

The characterization of ideal access structures is one of the main open problems in the 
secret sharing theory. This problem seems to be extremely difficult and only some 
particular results are known. In many papers the authors consider some specific classes 
of access structures with prescribed properties and try to check whether these 
structures are ideal. Most of the results obtained are based on the connections between 
ideal secret sharing schemes and matroids dis\-covered by Brickell \cite{eB} and 
Brickell and Davenport \cite{BD}. Later, the use of polymatroids proposed by Farr\`as, 
Mart\'i-Farr\'e, Padr\'o in \cite{FMFP} provided a new tool for studying ideal 
multipartite access structures. 

A concise review of the results contained in the literature can be found in the papers 
\cite{FMFP} - \cite{FPXY}. Since ideal access structures are known to be matroid ports, 
it seems quite natural to look for ideal access structures among matroid ports. 
Given a specific class of polymatroids, one can take all multipartite access structures 
determined by these polymatroids and investigate their properties. This approach ensures 
that the objects under consideration satisfy necessary condition to be ideal, i.e. they 
are matroid ports (cf. Theorem \ref{thm:necess:cond} below). The ideality can be 
established on the base of properties of parti\-cular polymatroids. In this paper the 
study is restricted to uniform polymatroids. This choice is motivated by the fact that 
each such polymatroid defines a family of ideal access structure (cf. Remark 
\ref{rem:uniform:ideal}). But the method presented here is universal and can be 
continued with other classes of polymatroids in further similar studies (cf. \cite{mK}).

The relations between ideal access structures and matroids discovered by 
Brickell and Davenport are recalled here in Theorem \ref{thm:necess:cond} and Theorem 
\ref{thm:suffic:cond}. A short introduction to matroids and polymatroids and their 
relation to access structures are presented in Subsection \ref{sec:polymatroids}. It 
follows from Theorem \ref{thm:mingamma} by Farr\`as, Mart\'i-Farr\'e and Padr\'o 
\cite{FMFP} that every polymatroid with the ground set $J$ and a monotone increasing 
family of subsets of $J$ which is compatible with the polymatroid determine a unique 
access structure which is a matroid port. The details are described in Definition 
\ref{defn:constr:acc:str}. In Subsection \ref{sec:uniform} some relations between  
uniform polymatroids $\Pol=(J,h,\bfg)$ and monotone increasing families $\Delta\seq 
\pow{J}\setminus\set{\emptyset}$ are presented. We prove several technical properties 
which are useful in the next sections.

In this paper, we focus on the classification of multipartite access structures $\GPPD$ 
determined by uniform polymatroids $\Pol$ and monotone increasing families $\Delta$  in 
a set of participants divided into a partition $\Pi$. We examine hierarchical order 
among the participants induced by the obtained access structure. In the third section 
we present several conditions that polymatroid $\Pol$ and monotone increasing family 
$\Delta$ must meet when the structure $\GPPD$ is (weakly) hierarchical. It turns out 
that the existence of hierarchically comparable  blocks imposes strong restrictions on 
the increment sequence $\bfg$ of the polymatroid. 

At the beginning of the fourth section, which together with the fifth section contains 
the main results of the paper, those conditions are used to prove Theorems 
\ref{thm:extreme:columns} and \ref{thm:last:row} which shows that most of access 
structures obtained from uniform polymatroids are compartmented. Then the exact 
hierarchy in some special access structures is examined in Theorems \ref{thm:g2:zero} 
and \ref{thm:last:column} - \ref{thm:singleton}. Moreover, we prove in Theorem 
\ref{thm:height} that the maximal length of chains in such hierarchical access 
structures is equal to 1. This fact seems quite surprising, because 
for other polymatroids one can construct hierarchical access structures 
with chains of arbitrary length. For instance, such constructions can be found in 
\cite{FP}, \cite{FPXY}, \cite{mK}, \cite{tT} and others. As was mentioned above, every 
uniform polymatroid determines some ideal access structures, but the question is whether 
all access structures determined by uniform polymatroids are ideal.  
A direction, which is worth considering and may result in getting the answer, is 
using the fact that a sufficient condition (for an access structures to 
be ideal) can be obtained by proving that the simple extension of a 
given uniform polymatroid is representable (cf. [6, Corollary 6.7]). 
This method has been applied in Section \ref{sec:ideal:acc:str} to the proof that all 
the structures described in Theorems \ref{thm:g2:zero} and \ref{thm:last:column} - 
\ref{thm:singleton} are ideal. 

It is worth noting that the class of access structures obtained from uniform 
polymatroids contains some interesting families of objects that can be useful
for the applications of secret sharing. The access structures discussed in Theorem 
\ref{thm:singleton} correspond to the organizational chart of an institution composed 
of several mutually independent departments managed by one superior unit. It follows 
from Theorem \ref{thm:singl:delta} that all those access structures are ideal.

Another interesting example is the family of uniform access structures characterized by 
Farr\`as et al. in \cite[Section VI]{FPXY} (cf. Remark \ref{rem:k:regular} below). It 
consists of multipartite access structures that are invariant under any permutation of 
blocks of participants. In other words all participants have the same rights, although 
they are not hierarchically equivalent.

A different situation occurs in compartmented access structures, where there is a set of 
distinguished participants, whose representatives must be present in all authorized 
sets. Such a case is described in Theorem \ref{thm:singl:delta}. 

This paper is intended to initiate research on the access structures obtained from 
uniform polymatroids, but it does not exhaust the topic and leaves space for further 
study. Some remarks on the new research possibilities can be found in Section 
\ref{sec:concl}. The Appendix contains a classification of all access structures with 
four parts obtained from uniform polymatroids.

\section{Preliminaries}

The aim of this section is to provide the necessary definitions and results regarding 
multipartite access structures and polymatroids.

Throughout the paper we use the following notations. The family of all subsets of a set 
$X$ is denoted by $\pow{X}$ (the power set). Similarly $\powk{k}{X}$ denotes the 
collection of all of $k$-element subsets of $X$. Let $\NN_0$ and $\NN$ denote the set of 
all non-negative integers and positive integers, respectively. Let $J$ be a finite set. 
For two vectors $\U=(u_x)_{x\in J},\V=(v_x)_{x\in J}\in\NN_0^J$ we write $\U\leq \V$ if 
$u_x\leq v_x$ for all $x\in J$. Moreover, $\U<\V$ denotes $\U\leq \V$ and $\U\neq \V$. 
Given a vector $\V=(v_x)_{x\in J}$, we define the support 
$\supp{\V}=\set{x\in J\  : \ v_x\neq 0}$ and the modulus $|\V|=\sum_{x\in J} v_x$. 
Furthermore, we write  $\V_X=(v'_x)_{x\in J}$, where $X\subseteq J$ and 
$$
v'_x=
\begin{cases}
v_x & \mbox{if } x\in X,\\
0   & \mbox{if } x\notin X.
\end{cases}
$$
In particular $\V_{\emptyset}=(0)_{x\in J}$. Let us observe that $|\V|=|\V_X|$ is 
equivalent to $\supp{\V}\subseteq X$. For every $z\in J$, we define the vector  
$\E{z}\in \NO^J$ such that $\E{z}=(e^{(z)}_x)_{x\in J}$ with $e^{(z)}_z=1$ 
and $e^{(z)}_x=0$ for all $x\neq z$.

\subsection{Multipartite access structures}\label{sec:multipart}
\label{ssec:multipartite}

Let $\Gamma$ be an access structure on a set of participants $P$. A participant $p\in P$ 
is said to be \emph{hierarchically superior or equivalent} to a participant $q\in P$ 
(written $q\preccurlyeq p$), if $A\cup \set{p}\in \Gamma$ for all subsets $A\subseteq 
P\setminus\set{p,q}$ with $A\cup\set{q}\in\Gamma.$ If $p\preccurlyeq q$ and 
$q\preccurlyeq p$, then the participants $p,q$ are called \emph{hierarchically 
equivalent}. Participants $p,q\in P$ are said to be \emph{hierarchically independent} if 
neither $p$ is hierarchically superior or equivalent to $q$ nor $q$ is  hierarchically 
superior or equivalent to $p$. 

By a \emph{partition} ($\Pi$-\emph{partition}) of the set of participants $P$ we mean a 
family $\Pi=(P_x)_{x\in J}$  of pairwise disjoint and nonempty subsets of $P$, called 
blocks such that $P=\bigcup_{x\in J}P_x$. An access structure $\Gamma$ is said to be 
\emph{multipartite} (\emph{$\Pi$-partite}) if all participants in every block $P_x$ are 
pairwise hierarchically equivalent. Thus we are allowed to define a hierarchy in $\Pi$. 
Namely, $P_x$ is said to be \emph{hierarchically superior or equivalent} to $P_y$ 
(written $P_y\preccurlyeq P_x$) if there are $p\in P_y$ and $q\in P_x$ such that 
$p\preccurlyeq q$. In other words it can be said that $P_y$ is \emph{hierarchically 
inferior or equivalent to $P_x$}. By transitivity we have $p \preccurlyeq q$ for all 
$p\in P_y$ and $q\in P_x$ whenever $P_y\preccurlyeq P_x$. The relation $\preccurlyeq$ 
both in $P$ and in $\Pi$ is reflexive and transitive but not antisymmetric in general, 
so it is a preorder. Moreover, this preorder is determined by the access structure $\G$, 
so it should be denoted by $\preccurlyeq_{\G}$. However, to simplify notation we write 
$\preccurlyeq$ if it does not lead to ambiguity. Similarly, blocks $P_x$ and $P_y$ are 
said to be \emph{hierarchically independent} if there are $q\in P_x$ and $p\in P_y$ such 
that $p$ and $q$ are hierarchically independent. On the other hand, if $P_x\preccurlyeq 
P_y$ or $P_y\preccurlyeq P_x$, then the blocks $P_x$ and $P_y$ are called 
\emph{hierarchically comparable}. Moreover, if $P_x\preccurlyeq P_y$ and 
$P_y\preccurlyeq P_x$, then the blocks $P_x$ and $P_y$ are called  \emph{hierarchically 
equivalent}. If $P_x\preccurlyeq P_y$ and the blocks are not hierarchically equivalent, 
then we write $P_x \prec P_y$.

Let us recall that a participant which does not belong to any minimal authorized set is 
called \emph{redundant}. It is easy to see that every participant is hierarchically 
superior or equivalent to any redundant participant. In particular, all redundant 
participants are hierarchically equivalent. A block of participants which contains a 
redundant participant will be also called \emph{redundant}.

A $\Pi$-partite access structure is said to be \emph{compartmented} if every pair of 
blocks in $\Pi$ is hierarchically independent. Otherwise the access structure is 
referred to as \emph{weakly hierarchical}. If an access structure is weakly hierarchical 
and no pair of blocks in $\Pi$ is hierarchically equivalent, then the access structure 
will be called \emph{hierarchical}. A hierarchical access structure such that every pair 
of blocks is hierarchically comparable  is referred to as \emph{totally hierarchical}. 
A complete characterization of ideal totally hierarchical access structure was presented 
by Farr\`as and Padr\'o \cite{FP}. It is worth pointing out that the phrase 
"compartmented access structure" used here is very general and covers several notions 
with the same name appearing in the literature.

Given a partition  $\Pi=(P_x)_{x\in J}$ of $P$ and a subset $A \seq P$ we define the 
vector $\pi(A)=(v_x)_{x\in J}$, where $v_x=|A\cap P_x|$.  If $\G$ is a $\Pi$-partite 
access structure, then all participants in every subset $P_x$ are pairwise 
hierarchically equivalent, so if $A\in \G, \ B\subseteq P$ and $\pi(A)=\pi(B)$, then 
$B\in \G$. We put $\pi(\G)=\set{\pi(A)\in \NO^J \ : \ A\in \G}$ and 
$$\pi(\pow{P})=\set{\pi(A) \in \NO^J\ :\ A\subseteq P}=\set{\V\in\NO^J\ : \ \V\leq \pi(P)}.$$
Obviously, if $A\subseteq B\subseteq P$, then $\pi(A)\leq \pi(B)$. Moreover, if 
$\U\in\pi(\G)$ and $\U\leq \V\leq \pi(P)$, then $\V\in \pi(\G)$. Indeed, there is $A\in 
\G$ such that $\U=\pi(A)$. The set $A$ can be extended to a set $B\subseteq P$ such that 
$\V=\pi(B)$. Hence $B\in \G$ and consequently $\V\in \pi(\G)$. This shows that 
$\pi(\G)\subseteq \pi(\pow{P})$ is a set of vectors monotone increasing with respect to 
$\leq$. On the other hand, every monotone increasing set $\G'\subseteq \pi(\pow{P})$ 
determines the $\Pi$-partite access structure $\G=\set{A\subseteq P \ : \ \pi(A)\in 
\G'}$. This shows that there is a one-to-one correspondence between the family of 
$\Pi$-partite access structures defined on $P$ and the family of monotone increasing 
subsets of $\pi(\pow{P})$. Therefore we use the same notation $\G$ for both the access 
structure and its vector representation.

The hierarchy among blocks in $\Pi$ can be characterized in vector terms as follows:
$P_y\preccurlyeq P_x$ if and only if 
\begin{equation}
\V-\E{y}+\E{x}\in \G \mbox{ for all } \V\in\G \mbox{ with } v_y\geq 1 \mbox{ and } v_x<|P_x|.
\label{eq:preceq}
\end{equation} 
To show that $P_y\preccurlyeq P_x$ it is enough to check if the above condition is 
satisfied for all vectors $v\in \min \Gamma$. A block $P_x$ in $\Pi$ is redundant if and 
only if $v_x=0$ for every $\V\in \min \G$.

\subsection{Polymatroids and access structures}
\label{sec:polymatroids}

Let $J$ be a nonempty finite set and let $\pow{J}$ denote the power set of $J$. \emph{A 
polymatroid} $\Pol$ is a pair $(J,h)$ where $h$ is a mapping 
$h:\pow{J} \longrightarrow \RR$ satisfying
\begin{enumerate}
	\item $h(\emptyset) = 0$;
	\item $h$ is monotone increasing: if $X \seq Y \seq J$, then $h(X)\leq h(Y)$;
	\item $h$ is submodular: if $X, Y \seq J$, then $h(X \cap Y) + h(X\cup Y)\leq h(X)+h(Y)$.
\end{enumerate}

The mapping $h$ is called \emph{the rank function} of a polymatroid. If all values of 
the rank function are integer, then the polymatroid is called \emph{integer}. An integer 
polymatroid $(J,h)$ such that $h(X)\leq |X|$ for all $X\seq J$ is called a 
\emph{matroid}. All polymatroids considered in this paper are assumed to be integer, so 
we will omit the term "integer" when dealing with integer polymatroid. 

Let $\Pol=(J,h)$ be a polymatroid and let $x\in J$ such that $h(\set{x})=1$.  The set 
$\set{X\in \pow{J\setminus \set{x}} \ : \  h(X\cup \set{x})=h(X)}$ is called a 
\emph{polymatroid port} or more precisely, the \emph{port of polymatroid $\Pol$ at the 
point $x$}. One can show that every polymatroid port is a monotone increasing family of 
some subsets of $J\setminus \set{x}$, which does not contain $\emptyset$.

The following examples of polymatroids play a special role in studying ideal access 
structures. Let $V$ be a vector space  of finite dimension and let 
$\mathcal{V}=(V_x)_{x\in J}$ be a family of subspaces of $V$. One can show that the 
mapping $h: \pow{J} \longrightarrow  \NO$ defined by $h(X) = \dim(\sum_{x\in X} V_x)$ 
for $X\in \pow{J}$ is the rank function of the polymatroid $\Pol = (J, h)$. The 
polymatroids that can be defined in this way are said to be \emph{representable}. If 
$\dim V_x\leq 1$ for all $x\in J$, then we obtain a matroid which is called 
\emph{representable} as well. The family $\mathcal{V}$ is referred to as a \emph{vector 
space representation} of the polymatroid (matroid). 
Let $\mathfrak{B}=(B_x)_{x\in J}$ be a family of finite sets. One can show that the mapping $h: \pow{J} \longrightarrow  \NO$ defined by 
$h(X) = |\bigcup_{i\in X} B_i|$ for $X\in \pow{J}$ is the rank function 
of the integer polymatroid $\Pol = (J, h)$. Every polymatroid that can be defined in 
this way is said to be \emph{Boolean} and the family $\mathfrak{B}$ is called the 
Boolean representation of the polymatroid. Boolean polymatroids are known to be 
representable.

The connection between matroids and ideal access structures was discovered by Brickell 
and Davenport \cite{BD}. They proved that if $\G \subseteq \pow{P}$ is the access 
structure of an ideal secret sharing scheme on a set of participants $P$ with a dealer 
$p_0\notin P$, then there is a matroid $\mathcal{S}$ with the ground set 
$P\cup\set{p_0}$ such that $\G$ is the port of $\mathcal{S}$ at the point $p_0$. This 
result can be stated as follows.

\begin{thm}[E.F. Brickell, D.M. Davenport \cite{BD}]
\label{thm:necess:cond}
Every ideal access structure is a matroid port.
\end{thm}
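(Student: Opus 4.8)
The plan is to realize the matroid directly from the probability distributions underlying the scheme, following the information-theoretic method of Brickell and Davenport. Let $S_0$ denote the random variable describing the secret (associated with the dealer $p_0$) and, for each $p\in P$, let $S_p$ be the random variable describing the share given to $p$; for $A\seq Q:=P\cup\set{p_0}$ write $S_A=(S_i)_{i\in A}$. Assuming, as usual, that the secret is uniformly distributed, I would set
$$h(A)=\frac{H(S_A)}{H(S_0)}, \qquad A\seq Q,$$
where $H$ is the Shannon entropy, and argue that $(Q,h)$ is a matroid whose port at $p_0$ is exactly $\G$.

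First I would check that $h$ is a polymatroid rank function. The equality $h(\emptyset)=0$ is immediate, monotonicity follows since adjoining variables cannot decrease entropy, and submodularity is precisely the basic Shannon inequality $H(S_{X\cap Y})+H(S_{X\cup Y})\le H(S_X)+H(S_Y)$ (non-negativity of conditional mutual information). Next I would use the hypotheses on the scheme to pin down the singletons and the port. Perfectness gives $H(S_p)\ge H(S_0)$, while ideality ($|S_p|=|S_0|$, $S_p$ uniform) gives $H(S_p)\le H(S_0)$; hence $h(\set{p})=1$ for every $p\in P$, and also $h(\set{p_0})=1$, whence submodularity yields $h(A)\le|A|$ for all $A$. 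For the port, note that $h(A\cup\set{p_0})=h(A)$ is equivalent to $H(S_0\mid S_A)=0$, i.e.\ to $S_A$ determining the secret; by correctness and perfectness this happens exactly when $A\in\G$. Thus the port of $(Q,h)$ at $p_0$ coincides with $\G$.

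It remains to prove the one genuinely hard point: that $h$ is integer-valued, so that $(Q,h)$ is an (integer) matroid rather than merely a polymatroid. Equivalently, every rank increment $d(A,x):=h(A\cup\set{x})-h(A)$ must lie in $\set{0,1}$. Monotonicity together with $H(S_x\mid S_A)\le H(S_x)=H(S_0)$ already gives $0\le d(A,x)\le 1$, and when $x=p_0$ perfectness forces $d(A,p_0)\in\set{0,1}$ directly (the increment is $0$ on authorized sets and $1$ on unauthorized ones). For an increment by a participant $p$ I would exploit the secret as a reference: a short computation with conditional mutual information shows
$$d(A,p)-d(A\cup\set{p_0},p)=\frac{I(S_p;S_0\mid S_A)}{H(S_0)}=\frac{H(S_0\mid S_A)-H(S_0\mid S_A,S_p)}{H(S_0)}\in\set{0,1},$$
since both conditional entropies of the secret are quantized in $\set{0,H(S_0)}$ by perfectness. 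When this difference equals $1$, the bounds $d(A,p)\le 1$ and $d(A\cup\set{p_0},p)\ge 0$ force $d(A,p)=1$, an integer; the delicate case is when it equals $0$, i.e.\ when $S_p$ is conditionally independent of the secret given $S_A$, for then $d(A,p)$ is only known to lie in $[0,1]$.

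I expect this conditional-independence case to be the main obstacle, and resolving it uniformly over all $A$ is the technical heart of the Brickell--Davenport theorem. To finish, I would propagate the quantization of the $p_0$-increments throughout the subset lattice, combining submodularity with a counting argument on the joint distributions of the shares to exclude any surviving fractional increment. Once every increment is shown to be $0$ or $1$, the function $h$ is integer-valued with $h(A)\le|A|$ and $h(\set{x})=1$, so $(Q,h)$ is a matroid; together with the port identification above this exhibits $\G$ as a matroid port.
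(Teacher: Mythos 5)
The paper does not prove this theorem at all: it is quoted from Brickell and Davenport \cite{BD} as a known external result, so there is no internal proof to compare against. Judged on its own terms, your proposal follows the standard entropy-polymatroid reading of the Brickell--Davenport argument, and the parts you carry out are correct: $h(A)=H(S_A)/H(S_0)$ is a (real-valued) polymatroid rank function, $h(\set{p})=1$ for all $p\in P\cup\set{p_0}$ under ideality, $h(A)\le|A|$ follows by submodularity, and the identification of the port at $p_0$ with $\G$ via $h(A\cup\set{p_0})-h(A)=H(S_0\mid S_A)/H(S_0)\in\set{0,1}$ is exactly right.

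However, there is a genuine gap, and it sits precisely where you say the "technical heart" of the theorem is: you never establish that $h$ is integer-valued. Your mutual-information identity only quantizes the increment $d(A,p)$ when $I(S_p;S_0\mid S_A)=H(S_0)$; in the complementary case (e.g.\ $A$ and $A\cup\set{p}$ both unauthorized, or $A$ already authorized) you are left with $d(A,p)\in[0,1]$ and no mechanism to exclude fractional values. The closing sentence about "propagating the quantization through the subset lattice" combined with "a counting argument on the joint distributions" is not an argument but a placeholder for one; submodularity alone cannot force integrality (generic ideal-looking polymatroids with unit singleton ranks need not be matroids). The actual Brickell--Davenport proof closes this gap by a combinatorial analysis of the set of distribution vectors: roughly, for an unauthorized $A$ with a minimal authorized superset $B$, one shows that the shares of $B\setminus A$ are in bijection with the secrets given the shares of $A$, and uses this to prove that every conditional entropy $H(S_p\mid S_A)$ lies in $\set{0,H(S_0)}$; equivalently one verifies the independence-augmentation axiom directly for the family of sets $A$ with $h(A)=|A|$. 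Until that lemma (or an equivalent) is actually proved, your construction only exhibits $\G$ as the port of a normalized polymatroid, not of a matroid, so the theorem is not yet established.
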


The converse is not true. For example, the ports of the Vamos matroid are not ideal 
access structures (cf. \cite{pS}). The following result is obtained as a consequence of 
the linear construction of ideal secret sharing schemes due to Brickell \cite{eB}.

\begin{thm}[E.F. Brickell \cite{eB}]\label{thm:suffic:cond}
Every port of a representable matroid is an ideal access structure.
\end{thm}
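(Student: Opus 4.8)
The plan is to realize $\Gamma$ by Brickell's linear secret sharing scheme built directly from a vector space representation of the matroid. Write $Q=P\cup\set{p_0}$ for the ground set of the representable matroid $\mathcal{S}$, let $h$ be its rank function with $h(\set{p_0})=1$, and let $\Gamma$ be its port at $p_0$. Fix a field $\KK$ and a representation $(v_q)_{q\in Q}$, $v_q\in\KK^d$ with $d=h(Q)$, so that $h(X)=\dim\Span\set{v_q:q\in X}$ for all $X\seq Q$. The first step is to rewrite the port condition linearly: for $X\seq P$ one has $X\in\Gamma$ iff $h(X\cup\set{p_0})=h(X)$, i.e. iff adjoining $v_{p_0}$ does not increase the dimension of the span, which gives
\[
X\in\Gamma \iff v_{p_0}\in\Span\set{v_p:p\in X}.
\]
Since $h(\set{p_0})=1$ we have $v_{p_0}\neq 0$, so $\emptyset\notin\Gamma$; monotonicity of a port was already recorded, and we may assume $P\in\Gamma$ (otherwise the full set could not recover and $\Gamma$ would fail the access-structure axioms).

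Next I would describe the scheme. Take the secret space $S_0=\KK$ and all share spaces $S_p=\KK$, so that any scheme of this shape is automatically ideal, $|S_0|=|S_p|=|\KK|$. To share a secret $s\in\KK$, the dealer chooses $\mathbf{u}\in\KK^d$ uniformly at random subject to $\langle \mathbf{u},v_{p_0}\rangle=s$ (possible since $v_{p_0}\neq 0$), where $\langle\cdot,\cdot\rangle$ is the standard dot product, and hands the share $s_p=\langle \mathbf{u},v_p\rangle$ to each $p\in P$. It then remains to verify correctness for authorized sets and perfect privacy for unauthorized sets.

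For correctness, suppose $X\in\Gamma$, so $v_{p_0}=\sum_{p\in X}\lambda_p v_p$ for some $\lambda_p\in\KK$. Then
\[
s=\langle \mathbf{u},v_{p_0}\rangle=\Big\langle \mathbf{u},\sum_{p\in X}\lambda_p v_p\Big\rangle=\sum_{p\in X}\lambda_p\,\langle \mathbf{u},v_p\rangle=\sum_{p\in X}\lambda_p\, s_p,
\]
so the members of $X$ recover $s$ as a fixed linear combination of their shares, computable once a representation of $v_{p_0}$ in terms of the $v_p$ is known.

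The hard part is perfect privacy, and this is where the hypothesis $v_{p_0}\notin W:=\Span\set{v_p:p\in X}$ for $X\notin\Gamma$ enters. I would show that, conditioned on any share tuple received by $X$, the secret $s=\langle \mathbf{u},v_{p_0}\rangle$ is uniformly distributed over $\KK$, and hence independent of its actual value. Consider the share map $\mathbf{u}\mapsto(\langle \mathbf{u},v_p\rangle)_{p\in X}$; its kernel is $W^{\perp}$, of dimension $d-\dim W$, and its fibers are the cosets of $W^{\perp}$. Because $v_{p_0}\notin W=(W^{\perp})^{\perp}$, the functional $\langle\cdot,v_{p_0}\rangle$ does not vanish identically on $W^{\perp}$ (non-degeneracy of the dot product), so on each such fiber — i.e. on the set of $\mathbf{u}$ yielding a fixed share tuple for $X$ — the value $\langle \mathbf{u},v_{p_0}\rangle$ ranges over all of $\KK$, and by linearity attains every element equally often. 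Thus the shares of an unauthorized $X$ carry no information about $s$, the scheme is perfect, and being ideal it witnesses that $\Gamma$ is an ideal access structure. The only genuinely delicate point is this last counting-and-duality argument; correctness and ideality are immediate from the construction.
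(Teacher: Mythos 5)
Your proof is correct and is precisely the argument the paper points to: the paper states this theorem without proof, citing it as "a consequence of the linear construction of ideal secret sharing schemes due to Brickell," and your scheme (random $\mathbf{u}$ with $\langle\mathbf{u},v_{p_0}\rangle=s$, shares $\langle\mathbf{u},v_p\rangle$, correctness from $v_{p_0}\in\Span\set{v_p:p\in X}$, privacy from the nonvanishing of $\langle\cdot,v_{p_0}\rangle$ on $W^{\perp}$) is exactly that construction. The only point worth making explicit is that the counting in the privacy step requires $\KK$ to be finite, so one should either take "representable" to mean representable over a finite field (as this literature implicitly does) or invoke the standard fact that a representable matroid admits a representation over some finite field.
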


Let $\Pol=(J,h)$ be a polymatroid. For $J'=J\cup \set{x_0}$ with a certain $x_0\notin J$ 
and a monotone increasing family $\Delta\subseteq \pow{J}\setminus\set{\emptyset}$  
we define the function $h':\pow{J'}\longrightarrow \NO$ by 
$h'(X)=h(X)$ for all $X\in \pow{J}$ and 
$$h'(X\cup\set{x_0})=\begin{cases}
h(X) & \mbox{if } X\in \Delta,\\
h(X)+1 & \mbox{if }  X\in \pow{J}\setminus \Delta.
\end{cases}
$$
If $h'$ is monotone increasing and submodular, then $\Delta$ is said to be 
\emph{compatible} with $\Pol$ and $\Pol'=(J',h')$ is a polymatroid which is called the 
\emph{simple extension of $\Pol$ induced by $\Delta$}. It is easy to see that 
$h'(x_0)=1$ and $\Delta$ is the polymatroid port of $\Pol'$ at the point $x_0$. 
The next result, which is a consequence of \cite[Proposition 2.3]{lC} is very useful in 
the investigation of access structures induced by polymatroids.

\begin{lem}[\cite{lC} L. Csirmaz]
\label{lem:csirmaz}
A monotone increasing family $\Delta\seq \pow{J}\setminus \set{\emptyset}$ is compatible 
with an integer polymatroid $\Pol = (J,h)$ if and only if the following conditions are 
satisfied:
\begin{enumerate}[(1)]
	\item If $Y\subseteq X\subseteq J$ and $Y\notin \Delta$ while $X\in \Delta$, then 	
	$h(Y)<h(X)$. 
	\item If $X,Y\in \Delta$ and $X\cap Y\notin \Delta$, then 
	$h(X\cap Y)+h(X\cup Y) < h(X)+h(Y)$.
\end{enumerate}
\end{lem}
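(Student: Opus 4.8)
The plan is to unwind the definition of compatibility: $\Delta$ is compatible with $\Pol=(J,h)$ precisely when the extension $h'$ on $\pow{J'}$ is the rank function of a polymatroid, i.e. when $h'$ is monotone increasing and submodular (the normalization $h'(\emptyset)=h(\emptyset)=0$ is automatic, since $\emptyset\notin\Delta$). Since $h'$ agrees with $h$ on $\pow{J}$ and $h$ is already a polymatroid, every inequality not involving the new point $x_0$ holds for free. I would therefore verify the defining inequalities for $h'$ by splitting into cases according to whether $x_0$ lies in the sets involved, writing each set containing $x_0$ as $X\cup\set{x_0}$ with $X\seq J$ and recording $h'(X\cup\set{x_0})=h(X)+\varepsilon(X)$, where $\varepsilon(X)=0$ if $X\in\Delta$ and $\varepsilon(X)=1$ otherwise. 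The proof then amounts to showing that in each case the required inequality for $h'$ is automatic, except for one case governed by (1) and one governed by (2); establishing the equivalence case by case settles both directions of the biconditional simultaneously.

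First I would treat monotonicity. For $A\seq B\seq J'$ the only nontrivial situation is $A=X\cup\set{x_0}$, $B=Y\cup\set{x_0}$ with $X\seq Y\seq J$; here $h'(A)\le h'(B)$ reads $h(X)+\varepsilon(X)\le h(Y)+\varepsilon(Y)$. Using that $\Delta$ is monotone increasing, $X\in\Delta$ forces $Y\in\Delta$ and $Y\notin\Delta$ forces $X\notin\Delta$, so the only configuration not already implied by $h(X)\le h(Y)$ is $X\notin\Delta$, $Y\in\Delta$, where the inequality becomes $h(X)+1\le h(Y)$, that is $h(X)<h(Y)$. This is exactly condition (1). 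Thus $h'$ is monotone increasing if and only if (1) holds.

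Next I would treat submodularity. Writing $A=X\cup\set{x_0}$, $B=Y\cup\set{x_0}$ with $X,Y\seq J$ covers the case where $x_0$ lies in both; the cases where $x_0$ lies in at most one of $A,B$ reduce, after the same bookkeeping and the monotonicity of $\Delta$, to the submodularity of $h$. Setting $d=h(X)+h(Y)-h(X\cap Y)-h(X\cup Y)\ge 0$, the inequality $h'(A\cap B)+h'(A\cup B)\le h'(A)+h'(B)$ becomes $\varepsilon(X\cap Y)+\varepsilon(X\cup Y)\le\varepsilon(X)+\varepsilon(Y)+d$. Because $X\cap Y\seq X,Y\seq X\cup Y$ and $\Delta$ is monotone increasing, the pattern of the $\varepsilon$'s is heavily constrained, and the left-hand side can exceed $\varepsilon(X)+\varepsilon(Y)$ only when $X,Y\in\Delta$ while $X\cap Y\notin\Delta$ (which forces $X\cup Y\in\Delta$), in which case the inequality reads $1\le d$. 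Since $\Pol$ is an integer polymatroid, $d$ is a nonnegative integer, so $1\le d$ is equivalent to the strict inequality $h(X\cap Y)+h(X\cup Y)<h(X)+h(Y)$ of condition (2).

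The main point to watch is the interplay between the two structural hypotheses: the monotonicity of $\Delta$ is what prunes the case analysis down to the two essential configurations, while the integrality of $\Pol$ is what upgrades the submodularity requirement $1\le d$ into the strict inequality of (2). Combining the two equivalences shows that $h'$ is a polymatroid rank function — equivalently, that $\Delta$ is compatible with $\Pol$ — if and only if (1) and (2) both hold.
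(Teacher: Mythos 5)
Your proof is correct. Note that the paper itself gives no proof of this lemma --- it is stated as a consequence of Csirmaz's Proposition 2.3 and used as a black box --- so there is no in-paper argument to compare against; your direct unwinding of the definition of the simple extension $h'$ is the natural and complete verification. The two places where care is needed are handled properly: the monotonicity of $\Delta$ does eliminate all but the configuration $X\notin\Delta$, $Y\in\Delta$ in the monotonicity check and all but the configuration $X,Y\in\Delta$, $X\cap Y\notin\Delta$ in the submodularity check, and the integrality of $h$ is exactly what converts the requirements $h(X)+1\le h(Y)$ and $1\le d$ into the strict inequalities of conditions (1) and (2).
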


The following notation will be used very often throughout the paper.
Let $\Pol=(J,h)$ be a polymatroid and let $X\subseteq J$. We define the following set 
\begin{equation}\label{eq:B,Z,X}
\B(\Pol,X)=\set{\V\in \NO^{J}\ : \ \supp{\V}\subseteq X,\   |\V|=h(X),\ 
\forall_{Y\subseteq X} |\V_Y|\leq h(Y)}.
\end{equation}
It is easy to see that
\begin{equation}
\label{eq:equal_h}
\mbox{if } Y\seq X\seq J  \mbox{ and } h(Y)=h(X), \mbox{ then } \B(\Pol,Y)\seq \B(\Pol,X).
\end{equation}
On the other hand, $\B(\Pol,Y)\cap\B(\Pol,X)=\emptyset$ whenever $h(Y)\neq h(X)$.

The definition of $\B(\Pol,X)$ is related to the concept of bases of polymatroids. 
Given a polymatroid $\Pol=(J,h)$ and $\emptyset\neq X\subseteq J$, we consider $\Pol|X=(X,h|X)$, where $h|X\ :\ \pow{X}\longrightarrow \NO$. It is easy to see that $\Pol|X$ is a polymatroid which is called the \emph{restriction} of $\Pol$ to $X$.
Let $\Pol=(J,h)$ be a polymatroid. We define the set of bases of $\Pol$ by  
$\B(\Pol)=\set{\V\in \NO^J\ :\ |\V_X|\leq h(X) \mbox{ for all } X\subseteq J \mbox{ and } |\V|=h(J)}$. It is known that every polymatroid is uniquely determined by the set of its bases.
The set $\B(\Pol,X)$ is obtained from the set  $\B(\Pol|X)\subseteq \NO^X$ of bases of the restriction $\Pol$ to $X$ by the canonical embedding of $\NO^X$ into $\NO^J$. 

Now, we recall an important theorem of Farr\`as, Mart\'i-Farr\'e and Padr\'o \cite{FMFP} 
that characterizes those multipartite access structures that are matroid ports.

\begin{thm}{\cite[Theorem 5.3]{FMFP}}
\label{thm:mingamma}
Let $\Pi = (P_x)_{x\in J}$ be a partition of a set $P$ and let $\G$ be a connected 
$\Pi$-partite access structure on $P$. Consider $\Delta= \supp{\G}$. Then $\G$ is 
a matroid port if and only if there exists an integer polymatroid $\Pol=(J,h)$ with
$h(\{x\})\leq |P_x|$ for every $x\in J$ such that $\Delta$ is compatible with $\Pol$ 
and $\min\pi(\G) = \min\bigcup_{X\in \Delta} \B(\Pol,X)$.
\end{thm}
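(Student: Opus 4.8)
The plan is to exploit the classical correspondence between $\Pi$-partite matroids and integer polymatroids, which underlies both implications. Throughout I work with the extended index set $J'=J\cup\set{x_0}$, where $x_0$ is a fresh element corresponding to the dealer $p_0$, and I set $P_{x_0}=\set{p_0}$, so that $\Pi'=\Pi\cup\set{\set{p_0}}$ is a partition of $P\cup\set{p_0}$. The bridge in both directions is the \emph{simple extension} $\Pol'=(J',h')$ of $\Pol$ induced by $\Delta$, introduced above: by construction its port at $x_0$ is exactly $\Delta$, and the identity $\min\pi(\G)=\min\bigcup_{X\in\Delta}\B(\Pol,X)$ is what records, on the level of integer vectors, that the matroid port at $p_0$ coincides with $\G$. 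Note that since $\G$ is monotone, $X\in\Delta=\supp{\G}$ holds if and only if the full block-union $\bigcup_{x\in X}P_x$ is authorised.

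For the forward implication, assume $\G$ is a matroid port, witnessed by a matroid $\mathcal{S}$ on $P\cup\set{p_0}$ with port $\G$ at $p_0$. The first step is to arrange that $\mathcal{S}$ may be taken $\Pi'$-partite; since $\G$ is $\Pi$-partite, the members of each block play interchangeable roles in the port, and one replaces $\mathcal{S}$ by a symmetrised matroid with the same port (this symmetrisation is one of the delicate points, see below). Given such a $\Pi'$-partite $\mathcal{S}$ with rank function $r$, define $h(X)=r\bigl(\bigcup_{x\in X}P_x\bigr)$ for $X\seq J$. Monotonicity, submodularity and $h(\emptyset)=0$ are inherited from $r$, so $\Pol=(J,h)$ is an integer polymatroid, and $h(\set{x})=r(P_x)\leq|P_x|$. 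Extending to $J'$ by $h'(X\cup\set{x_0})=r\bigl(\bigcup_{x\in X}P_x\cup\set{p_0}\bigr)$ recovers a polymatroid whose increments at $x_0$ are $0$ or $1$; the increment is $0$ precisely when $p_0\in\mathrm{cl}\bigl(\bigcup_{x\in X}P_x\bigr)$, i.e. when $X\in\Delta$. Thus this extended polymatroid \emph{is} the simple extension of $\Pol$ induced by $\Delta$, so $\Delta$ is compatible with $\Pol$ (equivalently one checks the two conditions of Lemma \ref{lem:csirmaz}). Finally, unwinding the meaning of ``$p_0$ is spanned by $A$'' in terms of maximal independent subsets of $A$ yields the minimal-vector identity.

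For the converse, assume the polymatroid $\Pol$, its compatibility with $\Delta$, and the minimal-vector identity. Form the simple extension $\Pol'=(J',h')$; compatibility guarantees $h'$ is monotone and submodular, so $\Pol'$ is an integer polymatroid with $h'(\set{x})\leq|P_x|$ for all $x\in J$ and $h'(\set{x_0})=1=|P_{x_0}|$. The second step is to expand $\Pol'$ to an honest matroid $\mathcal{S}$ on $P\cup\set{p_0}$: because $\Pol'$ is an integer polymatroid with $h'(\set{x})\leq|P_x|$, each ground point $x$ can be blown up into $|P_x|$ elements so that $\mathcal{S}$ is $\Pi'$-partite and its reduction returns $\Pol'$; concretely a set $A\seq P\cup\set{p_0}$ is declared independent exactly when $|\pi(A)_Y|\leq h'(Y)$ for every $Y\seq J'$. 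It remains to identify the port of $\mathcal{S}$ at $p_0$ with $\G$. A set $A\seq P$ is authorised iff $p_0$ lies in the closure of $A$, and tracking a maximal independent subset $B\seq A$ one checks that this happens iff some independent $B\seq A$ satisfies $\pi(B)\in\B(\Pol,X)$ with $X=\supp{\pi(B)}\in\Delta$; minimising over such $B$ gives $\min\pi(\G_{\mathcal{S}})=\min\bigcup_{X\in\Delta}\B(\Pol,X)$, which by hypothesis equals $\min\pi(\G)$, whence $\G_{\mathcal{S}}=\G$.

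The main obstacle is twofold. First, the symmetrisation in the forward direction: passing from an arbitrary matroid $\mathcal{S}$ with port $\G$ to a $\Pi'$-partite one with the same port requires showing that averaging the matroid over the block-preserving symmetries leaves the port unchanged, which is where the multipartite hypothesis on $\G$ is really used. Second, and more substantially, the expansion step in the converse—producing a genuine matroid from the integer polymatroid $\Pol'$ so that ranks translate exactly into the combinatorial conditions $|\pi(A)_Y|\leq h'(Y)$—must be justified, together with the precise translation of the closure condition ``$p_0\in\mathrm{cl}(A)$'' into domination of a basis vector in some $\B(\Pol,X)$ with $X\in\Delta$. Once these two facts are in place, both the compatibility of $\Delta$ and the minimal-vector identity follow by bookkeeping with the sets $\B(\Pol,X)$ and Lemma \ref{lem:csirmaz}.
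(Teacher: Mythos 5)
First, a point of reference: the paper does not prove this statement at all --- it is quoted from Farr\`as, Mart\'i-Farr\'e and Padr\'o \cite[Theorem 5.3]{FMFP}, so there is no in-paper argument to measure your proposal against. Your outline does reconstruct the architecture of the proof in that source: the dictionary between $\Pi'$-partite matroids on $P\cup\set{p_0}$ and integer polymatroids on $J'=J\cup\set{x_0}$ via $h(X)=r\bigl(\bigcup_{x\in X}P_x\bigr)$, the identification of the simple extension induced by $\Delta$ with the rank function on sets containing $p_0$, and, for the converse, the classical expansion of an integer polymatroid with $h(\set{x})\leq|P_x|$ into a $\Pi'$-partite matroid by declaring $A$ independent when $|\pi(A)_Y|\leq h'(Y)$ for all $Y\seq J'$. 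Those pieces are correct in substance, and you rightly flag that the expansion step needs (and classically has) a separate justification.

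The genuine gap is in the forward direction, at exactly the point you flag but do not resolve: passing from an arbitrary matroid $\mathcal{S}$ with port $\G$ at $p_0$ to a $\Pi'$-partite one. ``Averaging the matroid over the block-preserving symmetries'' is not an available operation: there is no convex structure on matroids, and the images of $\mathcal{S}$ under two block-preserving permutations need not admit any common matroid with the same port. The correct tool, and the one used in \cite{FMFP}, is the classical fact (essentially due to Lehman) that a \emph{connected} matroid is uniquely determined by each of its ports. Because $\G$ is connected, the witnessing matroid can be taken connected and is then unique; every permutation of $P$ fixing $p_0$ and preserving each block preserves $\G$ (as $\G$ is $\Pi$-partite), hence by uniqueness is an automorphism of $\mathcal{S}$, which is precisely the statement that $\mathcal{S}$ is $\Pi'$-partite. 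This is the only place where the hypothesis that $\G$ is connected enters, and your proposal never invokes that hypothesis --- a sign that the symmetrisation step as you describe it cannot be carried out without it.
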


\begin{rem} \label{rem:represent}
\rm
Let $\Pi=(P_x)_{x\in J}$ be a partition of a set $P$. Let  
$\Delta\subseteq \pow{J}\setminus\set{\emptyset}$ be a monotone increasing family 
compatible with a polymatroid $\Pol=(J,h)$ such that $h(\set{x})\leq |P_x|$ for all 
$x \in J$.  Farr\`as, Mart\'i-Farr\'e, Padr\'o \cite{FMFP} proved that if the simple 
extension of $\Pol$ determined by $\Delta$ is a representable polymatroid, then the 
multipartite access structure $\G$ such that 
$\min \G=\min \bigcup_{X\in \Delta} \B(\Pol,X)$ is ideal. 
This result generalizes the result of Brickell \cite{eB}.
\end{rem}

Theorem \ref{thm:mingamma} can be used as a simple tool for constructing multipartite 
access structures which are matroids ports.

\begin{defn}
\label{defn:constr:acc:str}
\rm
For a given partition $\Pi=(P_x)_{x\in J}$ 
we take a polymatroid $\Pol=(J,h)$ with $h(\set{x})\leq |P_x|$ for every 
$x\in J$, a~monotone increasing family $\Delta\subseteq \pow{J}\setminus\set{\emptyset}$ 
which is compatible with $\Pol$ and we construct the smallest monotone increasing family 
$\G'\subseteq \pi(\pow{P})$ which contains $\min \bigcup_{X\in \Delta} \B(\Pol,X)$.
In other words, $\G'$ is the only monotone increasing family contained in $\pi(\pow{P})$ 
such that $\min\G'\seq \bigcup_{X\in \Delta} \B(\Pol,X)\seq \G'$. Obviously, 
$\G=\set{A\seq P\ :\ \pi(A)\in \G'}\seq \pow{P}$ is the access structure in the set of 
participants induced by its vector representation $\G'$. Both $\G$ and $\G'$ will be 
called the \emph{$\Pi$-partite access structure determined by a polymatroid $\Pol$ and a 
monotone increasing family $\Delta$} and will be denoted by $\G(\Pi,\Pol, \Delta)$.
\end{defn}

\begin{ex}
\rm
Let us consider $J'=\set{0,1,2,3}$ and the function
$h' : \pow{J'} \longrightarrow \NO$ defined by 
$$h'(X)=\begin{cases}
0 & \mbox{ if } |X|=0;\\
1 & \mbox{ if } |X|=1;\\
2 & \mbox{ if } |X|\geq 2.
\end{cases}
$$
It is easy to check that $\Pol'=(J',h')$ is a polymatroid and $\Delta=\set{\set{1,2}, \set{1,3}, \set{2,3},\set{1,2,3}}$ is its port at $0$. 
Moreover, $\Pol'$ is a simple extension of $\Pol=\Pol'|J$, where $J=\set{1,2,3}$. Thus 
$\Delta$ is compatible with $\Pol$. Hence we get $\B(\Pol,\set{1,2})=\set{(1,1,0)}$, 
$\B(\Pol,\set{1,3})=\set{(1,0,1)}$, $\B(\Pol,\set{2,3})=\set{(0,1,1)}$ and  
$\B(\Pol,\set{1,2,3})=\set{(1,1,0),(1,0,1),(0,1,1)}$. 
Now we are ready to define an access structure $\Gamma$ assuming $\min \Gamma= \min \bigcup_{X\in \Delta} \B(\Pol,X)= \set{(1,1,0),(1,0,1),(0,1,1)}$. It is easily seen 
that, a vector $\V$ is authorized in $\Gamma$ if and only if $|\supp{\V}|\geq 2$. 
\end{ex}

According to Theorem \ref{thm:mingamma} the access structure obtained in this way 
satisfies necessary condition to be ideal. The results of \cite{FMFP} mentioned in 
Remark \ref{rem:represent} provides a sufficient condition for $\G(\Pi,\Pol, \Delta)$ 
to be ideal.

\begin{rem}
\label{rem:nonconnectivity}
Assume that $h(\set{x})=0$ for a certain $x\in J$. Suppose $\V$ is a minimal vector in 
$\G(\Pi,\Pol,\Delta)$, then there is $X \in \Delta$ such that $\V\in \B(\Pol,X)$. By 
definition, $v_y\leq h(\set{y})$ for all $y\in J$. In particular $v_x\leq h(\set{x})=0$, 
i.e. no participant from $P_x$ belongs to $X$. This shows that if $h(\set{x})=0$, then 
all participants in $P_x$ are redundant, so every access structure induced by $\Pol$ is 
not connected. Therefore, from now on we assume that $h(\set{x})>0$ for all $x\in J$.
\end{rem}

\subsection{Uniform polymatroids}\label{sec:uniform}

We begin this subsection with the definition of uniform polymatroids which play a major 
role in this paper. To shorten notation we set $I_m=\set{0,1,\ldots,m}$.

\begin{defn}
An integer polymatroid $\Pol=(J,h)$ is called \emph{uniform} if
\[ |X|=|Y| \ \Longrightarrow\ h(X)=h(Y) \text{ \  \ for all } X, Y\subseteq J. \]
\end{defn}

Let $m:=|J|$. We define $h_i = h(X)$ for every $i=0,1,\ldots,m$ with $X\subseteq J$, 
$|X|=i$. It is obvious, that the sequence $(h_i)_{i\in I_m}$ determines the rank 
function of the polymatroid. For this sequence we define the \emph{increment sequence} 
$\bfg=(g_i)_{i\in I_m}$ by $g_i=h_{i+1}-h_i$ for $i=0,\ldots,m-1$ and additionally 
$g_m=0$. It is easy to see that $\bfg$ is nonincreasing sequence of non-negative 
integers. 

On the other hand, if  $\bfg=(g_i)_{i\in I_m}$, is a nonincreasing sequence of 
nonnegative integers with $g_m=0$, then we can define the sequence $(h_j)_{j\in I_m}$ by 
the formula 
\begin{equation} 
\label{eq:rankf}
h_j = \sum_{i=0}^{j-1} g_i \quad\text{ for all } j=1,\dots,m \text{ and }  h_0 = 0.
\end{equation}
Given a finite set $J$ with $|J|=m>0$, the numbers $h_j$ define a rank function 
$h \colon \pow{J} \to \NO$ of a uniform polymatroid $(J,h)$ by putting $h(X) = h_{|X|}$ 
for $X\subseteq J$. It is not difficult to notice that 
\begin{equation} 
\label{eq:sums}
h_k-h_j = \sum_{i = j}^{k-1} g_i \quad \text{  for all }  j,k \in I_m, \  j<k.
\end{equation} 
Notice also that $g_0 = 0 \Longleftrightarrow h_1 = \dots = h_{m} = 0$ \ and \ 
$g_1 = 0 \Longleftrightarrow h_1 = \dots = h_{m} = g_0$. Hence, according to the 
assumption that we consider only polymatroids such that their range functions do not 
have all values equal to 0, from now on we assume that for all sequences $\bfg$ and for 
all uniform polymatroids $\Pol$ we have $g_0 \neq 0$ or equivalently $h_1 \neq 0$.
To avoid repetition in the further part of the paper, a uniform polymatroid will be 
denoted by $\Pol=(J,h,\bfg)$ where $\bfg=(g_i)_{i\in I_m}, \ \  g_0>g_m=0$ is a 
nonincreasing sequence of nonnegative integers and $h: \pow{J}\to \NO$ is the rank 
function such that $h(X)=h_k=\sum_{i=0}^{k-1} g_i$ for every $X\in \pow{X}$ with $k=|X|$.

\begin{rem}
\label{rem:uniform:ideal}
We want to show that every uniform polymatroid determines an ideal access structure. 
Indeed, uniform polymatroids are known to be representable (cf. 
\cite[Theorem 6]{FMBPV}). Let $\KK$ be a finite field and let $(V_x)_{x\in J}$ be a 
$\KK$-vector space representation of a uniform polymatroid $\Pol=(J,h,\bfg)$. Then $V_x$ 
are subspaces of the vector space $\KK^{h_m}$ and $\dim V_x=h_1=g_0$ for every $x\in J$. 
For any $X\subseteq J$ we define $V_X=\sum_{x\in X} V_x$. 
Given a non zero vector $\beta\in \KK^{h_m}$, the family $\Delta=\set{X\subseteq J \ :\ \beta \in V_X} \subseteq \pow{J}$ is a monotone increasing family of subsets of $J$ and 
$\Delta$ is compatible with the polymatroid $\Pol$. It is easily seen that $(V_x)_{x\in J\cup\set{x_0}}$, where $x_0\notin J$ and $V_{x_0}=\Span(\set{\beta})$ is a vector space 
representation of the simple extension of $\Pol$ induced by $\Delta$. According to 
Remark \ref{rem:represent} the access structure $\Gamma(\Pi,\Pol,\Delta)$ is ideal. 

Varying the representation of $\Pol$ and the vector $\beta$, we can control to some 
extent the selection of $\Delta$ which allows us to obtain different ideal access  
structures. For example, if we take $\beta\in V_X$ for a certain $X\subset J$, then $X\in \Delta$. If $X\neq J$ and $\beta\notin V_X$, then $\Delta\cap \pow{X}=\emptyset$. 
More examples can be found in Section \ref{sec:ideal:acc:str}.
On the other hand, the characterization of those families $\Delta$ that cannot be 
obtained in this way does not seem to be an easy task.
\end{rem}

In order to continue our studies, we need to prove several technical lemmas. Let us 
recall that $\B(\Pol, X)$ is defined by Equation \eqref{eq:B,Z,X}.

\begin{lem}
\label{lem:min:coeffic}
Let $\Pol = (J, h, \bfg)$ be a uniform polymatroid. Assume that $X\subseteq J$, 
$1\leq k = |X|$ and ${\W\in \B(\Pol, X)}$. Then:
\begin{enumerate}[(1)]
\item For every $x\in X$ we have $w_x\geq g_{k-1}$.
\item If $w_x=g_{k-1}$ for some $x\in X$, then 
$\W-w_x\E{x}\in \B(\Pol,X\setminus\set{x})$.
\end{enumerate}
\end{lem}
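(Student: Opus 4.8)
The plan is to prove both parts directly from the defining inequalities of $\B(\Pol,X)$ in Equation \eqref{eq:B,Z,X}, applied to the single subset $Y=X\setminus\set{x}$, together with the identity $g_{k-1}=h_k-h_{k-1}$ coming from the increment sequence. The observation that drives everything is that, since $\supp{\W}\seq X$, subtracting the $x$-coordinate is exactly the restriction operation: $\W-w_x\E{x}=\W_{X\setminus\set{x}}$.

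For part (1), I would fix $x\in X$ and set $Y=X\setminus\set{x}$, so that $|Y|=k-1$. The membership $\W\in\B(\Pol,X)$ forces $|\W_Y|\leq h(Y)=h_{k-1}$. Since $\W_Y=\W-w_x\E{x}$, we have $|\W_Y|=|\W|-w_x=h_k-w_x$, and combining these two facts yields $w_x\geq h_k-h_{k-1}=g_{k-1}$, which is the claim.

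For part (2), assuming $w_x=g_{k-1}$, I would put $\V=\W-w_x\E{x}=\W_Y$ with $Y=X\setminus\set{x}$ and verify the three conditions defining $\B(\Pol,Y)$. The support condition $\supp{\V}\seq Y$ is immediate from $\V=\W_Y$. For the modulus, $|\V|=h_k-w_x=h_k-g_{k-1}=h_{k-1}=h(Y)$. For the remaining inequalities, any $Z\seq Y$ is in particular a subset of $X$, and $\V_Z=(\W_Y)_Z=\W_Z$, so $|\V_Z|=|\W_Z|\leq h(Z)$ is inherited directly from $\W\in\B(\Pol,X)$.

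I do not expect a serious obstacle here; the argument is essentially bookkeeping with the restriction vectors. The only point deserving care is the correct identification of $\W-w_x\E{x}$ with $\W_Y$ (which relies on $\supp{\W}\seq X$) and the boundary case $k=1$, where $Y=\emptyset$ and $\V$ becomes the zero vector $(0)_{x\in J}$; this vector lies in $\B(\Pol,\emptyset)$, since there $h(\emptyset)=h_0=0$, so the conclusion still holds.
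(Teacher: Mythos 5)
Your proof is correct and follows essentially the same route as the paper's: both parts rest on the identification $\W-w_x\E{x}=\W_{X\setminus\set{x}}$ and the identity $g_{k-1}=h_k-h_{k-1}$. You are in fact slightly more thorough, since the paper's proof of (2) checks only the support and modulus conditions and leaves the inherited inequalities $|\V_Z|\leq h(Z)$ for $Z\seq X\setminus\set{x}$ implicit, which you verify explicitly along with the boundary case $k=1$.
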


\begin{proof}
(1) Let us notice that $|\W_X| = h(X) = h_k$ and $|\W_{X\setminus\set{x}}| \leq 
h(X\setminus\set{x}) = h_{k-1}$, hence
\[ w_x = |\W_X| - |\W_{X\setminus\set{x}}| \geq h_k - h_{k-1} = g_{k-1}.\]
(2) If we set $\V:=\W-w_x\E{x}$, then we have $\supp \V\subseteq X\setminus \set{x}$ and 
\[ |\V| = h_k - g_{k-1} = h_{k-1} = h(X\setminus \set{x}). \]  
\end{proof}

\begin{lem}
\label{lem:both:in:supp}
Let $\Pol=(J,h,\bfg)$ be a uniform polymatroid. Let $x,y\in X\seq J,\ x\neq y$ and 
$\W\in \B(\Pol,X)$ such that $w_x=g_0, \ w_y\neq 0$. 
If $\V\in \B(\Pol,\supp\V)$ and $\V\leq \W-\E{y}+\E{x}$, then $y\notin \supp \V$. 
\end{lem}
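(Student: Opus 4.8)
The plan is to argue by contradiction. Suppose that $y\in\supp{\V}$, that is $v_y\geq 1$, and I will derive a contradiction using the hypotheses $w_x=g_0$ and $w_y\neq 0$.

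The key step, which I expect to be the heart of the argument, is to upgrade the given relation $\V\leq \W-\E{y}+\E{x}$ to the stronger coordinatewise inequality $\V\leq\W$. For every coordinate $z\neq x$ the hypothesis already yields $v_z\leq w_z$; in particular for $z=y$ it gives $v_y\leq w_y-1$. Only the coordinate $x$ is delicate, since there the shift permits $v_x\leq w_x+1=g_0+1$. At this point I would invoke the defining base inequalities of $\B(\Pol,\supp{\V})$: if $x\in\supp{\V}$, then taking $Y=\set{x}$ gives $v_x\leq h(\set{x})=h_1=g_0=w_x$, while if $x\notin\supp{\V}$ then $v_x=0\leq w_x$. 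Either way $v_x\leq w_x$, so $\V\leq\W$. This is precisely where the assumption $w_x=g_0$, i.e. that $x$ already carries the maximal admissible singleton value, is indispensable: it is what prevents the overshoot $v_x=g_0+1$.

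Once $\V\leq\W$ is established, the remainder is a short extremality argument for bases. Writing $S=\supp{\V}$, the inequality $\V\leq\W$ forces $S\seq\supp{\W}\seq X$, so $S$ is eligible in the base conditions for $\W$. Since $\V=\V_S\leq\W_S$ and $|\V|=h(S)$, while the base condition for $\W$ gives $|\W_S|\leq h(S)$, the chain $h(S)=|\V|\leq|\W_S|\leq h(S)$ collapses to equalities; a coordinatewise inequality between nonnegative vectors of equal modulus must be an equality, hence $\V=\W_S$. Then $y\in S$ would give $v_y=(\W_S)_y=w_y$, contradicting $v_y\leq w_y-1$ (the hypothesis $w_y\neq 0$ ensures that $\W-\E{y}+\E{x}$ is a genuine vector of $\NO^J$ and that the drop at $y$ is real). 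Therefore $y\notin\supp{\V}$. I expect the only subtle point to be the first step — recognizing that the singleton base constraint rules out $v_x=g_0+1$ and thereby converts $\V\leq\W-\E{y}+\E{x}$ into the much stronger $\V\leq\W$; everything afterwards is routine manipulation of the base constraints.
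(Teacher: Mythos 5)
Your proof is correct and rests on exactly the same key observation as the paper's: the singleton base constraint $v_x\leq h(\set{x})=h_1=g_0=w_x$ prevents $\V$ from absorbing the $+1$ at $x$, after which a modulus count against $|\V|=h(\supp{\V})$ and $|\W_{\supp{\V}}|\leq h(\supp{\V})$ yields the contradiction. The only difference is organizational: the paper splits into the cases $x\in\supp{\V}$ and $x\notin\supp{\V}$ and counts directly, whereas you first upgrade $\V\leq\W-\E{y}+\E{x}$ to $\V\leq\W$ and deduce $\V=\W_{\supp{\V}}$, which avoids the case split but is the same computation.
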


\begin{proof}
Let $\W':=\W-\E{y}+\E{x}$ and  $Y:=\supp\V$. It is clear that $\V\in \B(\Pol,Y)$ implies 
$v_x\leq h_1=g_0$ and $|\V|=h(Y)$. Moreover, $Y\seq X$ and $|\W_Y|\leq h(Y)$.  Suppose 
that $y\in Y$. If $x\in Y$, then we have 
$$h(Y)=|\V|\leq w_x+(w_y-1)+|\W'_{Y\setminus\set{x,y}}|=|\W_Y|-1\leq h(Y)-1,$$
a contradiction.

Similarly, if $x\notin Y$, then we have 
$$h(Y)=|\V|\leq (w_y-1)+|\W'_{Y\setminus\set{y}}|=|\W_Y|-1\leq h(Y)-1,$$
a contradiction.
This completes the proof.
\end{proof}

\begin{lem}
\label{lem:one:in:supp}
Let $\Pol=(J,h,\bfg)$ be a uniform polymatroid. Let $y\in X\seq J$ and 
$x\in J\setminus X$ and  $\W\in \B(\Pol,X)$ such that $w_y=g_0$. 
If $k:=|X|$, $g_k>0$ and $\V\in \B(\Pol,\supp\V)$ such that $\V\leq \W-\E{y}+\E{x}$, 
then $y\notin \supp{\V}$. Moreover, if $g_0>1$, then $x,y\notin \supp{\V}$, i.e. 
$\supp{\V}\seq X\setminus \set{y}$.
\end{lem}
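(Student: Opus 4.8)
The plan is to adapt the argument of Lemma~\ref{lem:both:in:supp} to the present situation, in which $x$ lies \emph{outside} $X$. Set $\W':=\W-\E{y}+\E{x}$ and $Y:=\supp{\V}$. Since $x\notin X$ we have $w_x=0$, so $w'_x=1$, $w'_y=g_0-1$ and $w'_z=w_z$ for every $z\neq x,y$; in particular $\supp{\W'}\seq X\cup\set{x}$, and from $\V\leq\W'$ we get $Y\seq X\cup\set{x}$ together with the entrywise bounds $v_x\leq 1$, $v_y\leq g_0-1$, and $v_z\leq w_z$ otherwise. As $\V\in\B(\Pol,Y)$ we may also use $|\V|=h(Y)$ and $|\V_Z|\leq h(Z)$ for all $Z\seq Y$. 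Writing $\ell:=|Y|$, I argue by contradiction, assuming $y\in Y$, and distinguish the cases $x\notin Y$ and $x\in Y$.

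If $x\notin Y$, then $Y\seq X$, so $h(Y)=h_\ell$ and $|\W_Y|\leq h_\ell$; splitting off the coordinate $y$ and using $w_y=g_0$ gives
\[ h_\ell=|\V|\leq (g_0-1)+(|\W_Y|-g_0)=|\W_Y|-1\leq h_\ell-1, \]
a contradiction. If $x\in Y$, then $v_x=1$ and $Y':=Y\setminus\set{x}\seq X$ contains $y$ with $|Y'|=\ell-1$, and the same estimate on $Y'$ yields
\[ h_\ell-1=|\V_{Y'}|\leq (g_0-1)+(|\W_{Y'}|-g_0)=|\W_{Y'}|-1\leq h_{\ell-1}-1, \]
forcing $h_\ell\leq h_{\ell-1}$, i.e. $g_{\ell-1}=0$. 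But $Y'\seq X$ gives $\ell-1\leq k$, so monotonicity of $\bfg$ would give $g_{\ell-1}\geq g_k>0$, a contradiction. Hence $y\notin\supp{\V}$; this is exactly where the hypothesis $g_k>0$ enters.

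For the last assertion I assume $g_0>1$ and, knowing already that $y\notin Y$, suppose toward a contradiction that $x\in Y$. Then $v_x=1$ and $Y':=Y\setminus\set{x}$ satisfies $Y'\seq X\setminus\set{y}$, so $v_z\leq w_z$ for every $z\in Y'$ and $|\V_{Y'}|=|\V|-1=h_\ell-1$. The crucial step is to test the base inequality for $\W$ on $Y'\cup\set{y}\seq X$, which has $\ell$ elements: since $w_y=g_0$,
\[ |\W_{Y'}|=|\W_{Y'\cup\set{y}}|-g_0\leq h(Y'\cup\set{y})-g_0=h_\ell-g_0. \]
Combined with $|\V_{Y'}|\leq|\W_{Y'}|$ this gives $h_\ell-1\leq h_\ell-g_0$, i.e. $g_0\leq 1$, contradicting $g_0>1$. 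Thus $x\notin\supp{\V}$, and with $y\notin\supp{\V}$ and $Y\seq X\cup\set{x}$ we conclude $\supp{\V}\seq X\setminus\set{y}$.

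The only genuinely delicate point is the choice of subset on which to apply $|\W_Z|\leq h(Z)$: taking $Z=Y'\cup\set{y}$ rather than $Z=Y'$ is what lets the full weight $w_y=g_0$ be extracted and weighed against the single unit contributed by the coordinate $x$. Everything else---the case split and the comparison of $\ell$ with $k$ via the monotonicity of $\bfg$---is routine bookkeeping.
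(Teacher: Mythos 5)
Your proof is correct and takes essentially the same approach as the paper's: the same case analysis on whether $x$ and $y$ lie in $\supp{\V}$, the same counting inequalities comparing $|\V|$ with $\W$ restricted to subsets of $X$, and the same key device in the final step of testing the base inequality for $\W$ on the $\ell$-element set $Y'\cup\set{y}$ so that the full weight $w_y=g_0$ can be played off against the single unit at $x$. The only differences are notational: you set $Y=\supp{\V}$ and strip off $x$, whereas the paper works with $Y=X\cap\supp{\V}$ and adjoins $x$.
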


\begin{proof}
Let $\W':=\W-\E{y}+\E{x}$. Clearly, $\supp{\V}\seq \supp{\W'}\seq X\cup\set{x}$. 
Let $Y:=X\cap \supp{\V}$ and let $l:=|Y|$. Suppose that $y\in \supp{\V}$. If 
$x\in \supp{\V}$, then $\supp{\V}=Y\cup\set{x}$ and we have $l\leq k$ and
$$
h_{l+1} = |\V|\leq |\W'_{Y\setminus\set{y}}|+1+(g_0-1)=|\W_{Y\setminus\set{y}}|+g_0=
|\W_{Y}|\leq h_{l}.
$$
Hence $0<g_{k}\leq g_{l}=h_{l+1}-h_{l}\leq 0$, a contradiction.

If $x\notin \supp{\V}$, then $\supp{\V}=Y$ and we have 
$$
h_{l} = |\V|\leq |\W'_{Y\setminus\set{y}}|+(g_0-1)=
|\W_{Y\setminus\set{y}}|+g_0-1=|\W_{Y}|-1=h_{l}-1,
$$ 
a contradiction. Thus we have proved that $\supp{\V}\seq (Y\setminus\set{y})\cup\set{x}$.
 
Now we assume $g_0>1$, and suppose $\supp{\V}=(Y\setminus\set{y})\cup\set{x}$. 
$$
h_{l} = |\V|\leq |\W'_{Y\setminus\set{y}}|+1=
|\W_{Y\setminus\set{y}}|+1=|\W_{Y\setminus\set{y}}|+g_0-(g_0-1)=|\W_{Y}|-(g_0-1)<h_{l},
$$
as $g_0-1>0$, a contradiction.
This shows $\supp{\V}=Y\setminus \set{y}\seq X\setminus \set{y}$ which completes the 
proof.
\end{proof}

\begin{lem}
\label{lem:onestep}
Let $\Pol = (J, h, \bfg)$ be a uniform polymatroid and let ${x,y\in X\seq J}$, $x\neq y$, as well as $\W\in \B(\Pol, X)$. If $w_y > 0$, then 
$\W':=\W-\E{y}+\E{x} \in \B(\Pol, X)$ or there exists a set $Y\subseteq X\setminus \{y\},\ x\in Y$, such that $\V:=\W_Y\in \B(\Pol, Y)$. Furthermore $\V\leq  \W$ and $\V\leq  \W'$.
\end{lem}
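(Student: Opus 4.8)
The plan is to verify the defining conditions of $\B(\Pol,X)$ for the candidate vector $\W':=\W-\E{y}+\E{x}$ directly, and to show that the \emph{only} defining condition that can fail is precisely the one that hands us the alternative set $Y$. First I would dispose of the cheap facts: since $x,y\in X$ we have $\supp{\W'}\seq X$; since we remove one unit at $y$ and add one at $x$, the modulus is unchanged, $|\W'|=|\W|=h(X)$; and $w_y>0$ guarantees $w'_y=w_y-1\geq 0$, so $\W'\in\NO^J$. Thus the whole question reduces to the inequality $|\W'_Z|\leq h(Z)$ for every $Z\seq X$.

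The key step is a four-way case analysis on the membership of $x$ and $y$ in $Z$. If $x,y\in Z$ or $x,y\notin Z$, then $|\W'_Z|=|\W_Z|$, and if $x\notin Z$ while $y\in Z$, then $|\W'_Z|=|\W_Z|-1$; in all three cases the bound is inherited from $\W\in\B(\Pol,X)$. The single case where trouble can occur is $x\in Z$, $y\notin Z$, where $|\W'_Z|=|\W_Z|+1$; this breaks the bound exactly when $Z$ is \emph{tight}, meaning $|\W_Z|=h(Z)$. This produces a clean dichotomy, which is really the one conceptual point of the argument.

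If no set $Z\seq X$ with $x\in Z$ and $y\notin Z$ is tight, then $|\W'_Z|\leq h(Z)$ holds for all $Z\seq X$ and hence $\W'\in\B(\Pol,X)$, which is the first alternative. Otherwise I would pick any such tight set and call it $Y$; then automatically $x\in Y\seq X\setminus\set{y}$, as required. To check $\V:=\W_Y\in\B(\Pol,Y)$, note $\supp{\V}\seq Y$ and $|\V|=|\W_Y|=h(Y)$ by tightness, while for every $W\seq Y$ we have $(\W_Y)_W=\W_W$, so $|(\W_Y)_W|=|\W_W|\leq h(W)$ is again inherited from $\W\in\B(\Pol,X)$.

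The two comparisons are then immediate and independent of which tight $Y$ we chose: $\V=\W_Y\leq\W$ because $\W_Y$ merely zeroes coordinates, and $\V\leq\W'$ because outside $Y$ we have $v_z=0\leq w'_z$ (using $w'_y=w_y-1\geq 0$), whereas inside $Y$, where necessarily $z\neq y$, we have $w'_z\geq w_z=v_z$. I do not expect a genuine obstacle here: the work is entirely bookkeeping, and the crux is simply recognizing that the unique failing instance of the base condition is itself the certificate for the second alternative. In particular, uniformity of $\Pol$ is not actually used in this argument, so the statement could be recorded for arbitrary integer polymatroids if convenient.
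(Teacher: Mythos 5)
Your proof is correct and follows essentially the same route as the paper's: both identify that the only way $\W'$ can violate the defining inequalities is via a subset $Y$ containing $x$ but not $y$ that is tight for $\W$ (the paper takes a minimal such set, but as your argument shows, minimality is not actually needed), and both then verify $\W_Y\in\B(\Pol,Y)$ and the two comparisons by direct bookkeeping. Your closing observation that uniformity is never used is also accurate; the paper's proof likewise relies only on the defining properties of $\B(\Pol,X)$ for a general integer polymatroid.
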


\begin{proof}
Note that $ \supp{\W'}\subseteq X $ and $ | \W'_X | = | \W_X | = h (X) = h_ {| X |} $. 
Let us consider the case $\W'\notin \B(\Pol,X)$, that is, there is a~set $Y \subseteq X$ 
that $|\W'_Y| \geq h(Y)+1$. Let us choose a minimum set $Y$ for this property. It is 
easy to see that $x\in Y$ and $y\notin Y$. Setting the notation $l: = |Y|$, we get 
$$h_l+1\leq |\W'_Y| = (w_x + 1) + |\W_{Y\setminus \set{x}}|=|\W_Y|+1 \leq h_l + 1, $$
and consequently $|\W_Y| = h_l$. Thus, for $\V:=\W_Y$ we have $\V\in\B(\Pol,Y)$. It is 
clear that $\V\leq\W$ and $\V\leq\W'$, which completes the proof.
\end{proof}

Now we introduce a notion of a vertex vector. Let $J$ be a finite set and $m:=|J|$ and 
let $\bfg=(g_i)_{i\in I_m}$ be the increment sequence of a uniform polymatroid 
$\Pol=(J,h,\bfg)$. Given $X\subseteq J$ and a bijection 
$\sigma: X\to \set{0,1,\ldots,k-1}$ where $k=|X|$, we define the vector 
$\W=(w_x)_{x\in J}$ by 
$$\W=\sum_{x\in X} g_{\sigma(x)}\E{x}$$
which is referred to as a \emph{vertex vector with basic set $X$}. Notice that in 
general we have $\supp{\W}\seq X$, but $\supp{\W}= X$ whenever $g_{k-1}>0$. Vertex 
vectors are the vertices of the convex polytope
$$
T =\set{\W\in \NO^J\ :\ |\W_X|\leq h(X) \mbox{ for every } X\seq J}
$$
determined by a polymatroid $(J,h)$.

\begin{lem}
\label{lem:vertex2}
Let $\Pol = (J, h,\bfg)$ be a uniform polymatroid. Then for every vertex vector  $\W$ we 
have $\W~\in~\B(\Pol,{\supp \W})$. 
\end{lem}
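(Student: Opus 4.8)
The plan is to verify the three defining conditions of $\B(\Pol,\supp{\W})$ in \eqref{eq:B,Z,X} directly for an arbitrary vertex vector $\W=\sum_{x\in X}g_{\sigma(x)}\E{x}$ with basic set $X$ and bijection $\sigma\colon X\to\set{0,1,\ldots,k-1}$, where $k=|X|$. First I would reduce to the situation in which the basic set already coincides with the support. Writing $Z:=\supp{\W}$ and $s:=|Z|$, the only vanishing coordinates of $\W$ are those $x\in X$ with $g_{\sigma(x)}=0$; since $\bfg$ is nonincreasing, the positive entries among $g_0,\ldots,g_{k-1}$ are exactly $g_0,\ldots,g_{s-1}$, so $\sigma$ restricts to a bijection $\sigma|_Z\colon Z\to\set{0,1,\ldots,s-1}$ with $g_{s-1}>0$. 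Consequently $\W$ is itself the vertex vector with basic set $Z$ and bijection $\sigma|_Z$, and it suffices to treat the case $X=\supp{\W}$, $g_{|X|-1}>0$. This reframing guarantees that the defining conditions of $\B$ are applied with the correct ground set, namely $\supp{\W}$.

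With $X=\supp{\W}$ and $k:=|X|$, the support condition $\supp{\W}\seq X$ is immediate, and the modulus condition follows from \eqref{eq:rankf}: because $\sigma$ is a bijection of $X$ onto $\set{0,\ldots,k-1}$, we have $|\W|=\sum_{x\in X}g_{\sigma(x)}=\sum_{i=0}^{k-1}g_i=h_k=h(X)$.

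The remaining requirement, $|\W_Y|\leq h(Y)$ for every $Y\seq X$, is the only step that needs a genuine argument and is the crux of the proof. Fixing $Y\seq X$ with $l:=|Y|$, I would write $|\W_Y|=\sum_{y\in Y}g_{\sigma(y)}$, noting that $\set{\sigma(y)\ :\ y\in Y}$ is an $l$-element subset of $\set{0,1,\ldots,k-1}$. Since $\bfg$ is nonincreasing, the sum of any $l$ of its entries is at most the sum of its $l$ largest entries $g_0,\ldots,g_{l-1}$, so $|\W_Y|\leq\sum_{i=0}^{l-1}g_i=h_l=h(Y)$ by \eqref{eq:rankf}. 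This verifies the last condition and finishes the proof.

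I expect this monotonicity estimate to be the only real obstacle, and even that is mild: it rests entirely on $\bfg$ being nonincreasing. The one subtlety worth stating carefully is the reduction in the first step, ensuring that passing from the basic set $X$ to the support $Z$ leaves $\W$ unchanged, so that membership is checked against $\B(\Pol,\supp{\W})$ rather than $\B(\Pol,X)$.
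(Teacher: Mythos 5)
Your proof is correct and follows essentially the same route as the paper: the heart of both arguments is the observation that, since $\bfg$ is nonincreasing, the sum of any $l$ of its entries is bounded by $\sum_{i=0}^{l-1}g_i=h_l$, which gives $|\W_Y|\leq h(Y)$ for all $Y\seq\supp{\W}$ and equality for $Y=\supp{\W}$. Your explicit preliminary reduction from the basic set $X$ to the support $Z$ is a point the paper's proof passes over silently, and making it explicit is a small improvement rather than a departure.
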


\begin{proof}
Let $\W$ be any vertex vector and $k:=|\supp{\W}|$. Let us take a subset 
$Y\subseteq \supp\W$ and set $l:=|Y|\leq k$. The sequence $\bfg$ being nonincreasing 
implies
$$ |\W_Y|=\sum_{x\in Y}w_x=\sum_{x\in Y}g_{\sigma(x)}\leq\sum_{i=0}^{l-1}g_i=h_l=h(Y).$$
Here we use the fact that the sum of $l$ arbitrary elements of a nonincreasing sequence 
does not exceed the sum of the $l$ initial entries of the sequence. 
In particular, if we get $|\W_{\supp \W}| = \sum_{i=0}^{k-1} g_i = h_k= h(\supp \W)$, 
which shows that $\W \in \B(\Pol, \supp\W)$.
\end{proof}

\begin{rem}
\label{rem:nonempty}
Notice that if $\Pol$ is a uniform polymatroid, then the set $\B(\Pol,X)$ is always 
nonempty since it contains vertex vectors with basic set $X$. In extreme cases when 
$X=\emptyset$ or the range function of the polymatroid  has all values equal to 0, the 
family $\B(\Pol,X)$ contains only the zero vector. Moreover, it is easy to check that if 
$\W\in \B(\Pol, X)$ for some $X\subseteq J$, then $\W\in \B(\Pol, \supp \W)$.
\end{rem}

Deciding if a monotone increasing family is compatible with given a polymatroid is not 
easy task. The Csirmaz Lemma seems to be the most general tool for solving this problem. 
For example, it is easy to check, that if the increment sequence of a polymatroid with 
ground set $J$ is strictly decreasing, then every proper monotone increasing family of 
subsets of $J$ is compatible with the polymatroid. 
At the end of this section we present several facts related to the compatibility 
of monotone increasing families and polymatroids which are used in proofs in subsequent 
sections.

\begin{lem}
\label{lem:head} 
Let $\Pol = (J, h, \bfg)$ be a uniform polymatroid and let a monotone increasing family 
$\Delta\subseteq \pow{J}\setminus\set{\emptyset}$ be compatible with $\Pol$.
\begin{enumerate}[(1)]
	\item If $g_{k} = 0$ for some $1 \leq k \leq |J|$, then all subsets of the set $J$ 
	with at least $k$ elements  belong to $\Delta$.
	\item If $\Delta$ contains a minimal set with $k$ elements, then $g_{k-1}>0$. 
\end{enumerate}
\end{lem}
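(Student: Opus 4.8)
The plan is to deduce both claims directly from the Csirmaz compatibility criterion (Lemma~\ref{lem:csirmaz}), using only its first condition together with two elementary facts about uniform polymatroids: the increment sequence $\bfg$ is nonincreasing, and $h_k-h_j=\sum_{i=j}^{k-1}g_i$ by \eqref{eq:sums}.

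For part (2) I would start from a minimal set $X\in\Delta$ with $|X|=k$. Since $\emptyset\notin\Delta$ we have $k\geq 1$, so I may pick some $x\in X$ and put $Y:=X\setminus\set{x}$, a set of size $k-1$. Minimality of $X$ gives $Y\notin\Delta$, while $Y\seq X$ and $X\in\Delta$. Hence Lemma~\ref{lem:csirmaz}(1) applies and yields $h(Y)<h(X)$, i.e. $h_{k-1}<h_k$, which is exactly $g_{k-1}>0$. There is no real obstacle here; the only thing to verify is that minimality of $X$ forces $Y\notin\Delta$.

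For part (1) the first step is to record that $J\in\Delta$: this holds because $\Delta$ is a nonempty monotone increasing family (indeed $\Delta=\supp{\G}$ for an access structure with $P\in\G$, so $J\in\Delta$), and monotonicity pushes membership up to the top set. Next, from $g_k=0$ and the fact that $\bfg$ is nonincreasing and nonnegative I conclude $g_i=0$ for all $i\geq k$, so by \eqref{eq:sums} the rank stabilises: $h_k=h_{k+1}=\dots=h_m$ where $m=|J|$. Now I take an arbitrary $X\seq J$ with $|X|\geq k$; if $X=J$ there is nothing to prove, and otherwise, assuming $X\notin\Delta$, Lemma~\ref{lem:csirmaz}(1) applied to $X\seq J$ with $J\in\Delta$ would force $h(X)<h(J)$. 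But $h(X)=h_{|X|}=h_m=h(J)$, a contradiction, so $X\in\Delta$.

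The main (and essentially only) delicate point is the implicit nonemptiness of $\Delta$ used to guarantee $J\in\Delta$ in part (1); I would make this assumption explicit at the outset, since for the empty family part (1) would fail. Everything else reduces to a one-line application of the first Csirmaz condition and the monotonicity of $\bfg$.
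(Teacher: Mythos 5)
Your proposal is correct and follows essentially the same route as the paper: part (2) applies the first Csirmaz condition to a minimal set and a $(k-1)$-element subset to get $h_{k-1}<h_k$, and part (1) uses the stabilisation $h_k=\dots=h_m$ together with the contrapositive of that same condition (the paper, like you, implicitly relies on $J\in\Delta$ at this point). Your explicit flagging of the nonemptiness of $\Delta$ is a reasonable extra precaution but does not change the argument.
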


\begin{proof}
(1) By assumption we have $g_i=0$ for all $i=k,\ldots,m$. Let us consider 
$X \subseteq J$, $l:=|X| \geq k$. Then we have 
$$h(J) - h(X) = h_{|J|} - h_{|X|} = \sum_{i=l}^{m-1} g_i =0.$$
This implies $h(X)=h(J)$ and by the Csirmaz Lemma we get $X\in \Delta$.

(2) Assume that $X\seq J$ is a minimal set in $\Delta$, $|X|=k$. Then for every 
$Y\subseteq X$ with $|Y|= k-1$ we have $Y\notin \Delta$, so by the Csirmaz Lemma 
$h_{|Y|}<h_{|X|}$. Hence $$g_{k-1}=h_k-h_{k-1}=h_{|X|}-h_{|Y|}>0.$$ 
\end{proof}

\begin{lem}
\label{lem:compatible}
If $\Delta\subseteq \pow{J}\setminus\set{\emptyset}$ is a monotone increasing family 
such that $\min\Delta = \{X\}$ for some $\emptyset\neq X\subseteq J$, then $\Delta$ is 
compatible with a uniform polymatroid $\Pol = (J, h,\bfg)$ if and only if $g_{m-1} > 0$.
\end{lem}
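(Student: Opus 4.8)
The plan is to apply the Csirmaz Lemma (Lemma \ref{lem:csirmaz}), exploiting the fact that $\min\Delta=\set{X}$ forces $\Delta$ to be the principal filter $\Delta=\set{B\seq J\ :\ X\seq B}$. The first observation I would make is that condition (2) of the Csirmaz Lemma is vacuous for such a $\Delta$: if $A,B\in\Delta$, then $X\seq A$ and $X\seq B$, hence $X\seq A\cap B$ and therefore $A\cap B\in\Delta$. Thus no pair $A,B\in\Delta$ with $A\cap B\notin\Delta$ exists, and the compatibility of $\Delta$ with $\Pol$ reduces entirely to condition (1).

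For the implication ($\Leftarrow$), I would assume $g_{m-1}>0$. Since $\bfg$ is nonincreasing, this gives $g_i\geq g_{m-1}>0$ for every $i=0,\dots,m-1$, so by \eqref{eq:sums} the sequence $(h_i)$ is strictly increasing. To verify condition (1), take $A\seq B\seq J$ with $A\notin\Delta$ and $B\in\Delta$. Since $A$ and $B$ lie on opposite sides of $\Delta$ we have $A\neq B$, so $A\subsetneq B$ and hence $|A|<|B|$; strict monotonicity of $(h_i)$ then yields $h(A)=h_{|A|}<h_{|B|}=h(B)$, as required.

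For the implication ($\Rightarrow$), I would assume $\Delta$ is compatible and produce a single instance of condition (1) that pins down $g_{m-1}$. Since $X\neq\emptyset$, fix $x\in X$ and set $B=J$, $A=J\setminus\set{x}$. Then $B\in\Delta$ because $X\seq J$, while $A\notin\Delta$ because $x\in X$ forces $X\not\seq A$. Condition (1) now gives $h_{m-1}=h(A)<h(B)=h_m$, that is, $g_{m-1}=h_m-h_{m-1}>0$.

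The argument is short, and the only real subtlety is recognizing that the extremal pair $(A,B)=(J\setminus\set{x},J)$ with $x\in X$ is exactly what extracts the constraint on the last increment $g_{m-1}$; every other instance of condition (1) becomes automatic once $(h_i)$ is known to be strictly increasing. Beyond this bookkeeping there is no genuine obstacle, since the potentially delicate condition (2) drops out entirely for a principal filter.
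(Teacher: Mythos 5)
Your proposal is correct and follows essentially the same route as the paper: both directions reduce to the Csirmaz Lemma, with condition (2) dismissed because $\Delta$ is a principal filter, condition (1) verified via the strict monotonicity of $(h_i)$ when $g_{m-1}>0$, and the converse extracted from the pair $(J\setminus\set{x},J)$ with $x\in X$. No gaps.
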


\begin{proof}
Assume $\Delta$ is compatible with $\Pol$. If $x\in X$, then $J\setminus\set{x}\notin \Delta$, so by Csirmaz Lemma $h(J\setminus \set{x})<h(J)$, thus 
$g_{m-1}=h(J)-h(J\setminus \set{x})>0$. 

Now we shall show that the conditions of the Csirmaz Lemma are met whenever $g_{m-1}>0$. 
Let us notice that $h_i-h_{i-1}=g_{i-1}>0$ for all $i=1,\ldots,m$, so the sequence 
$h_0,h_1,\ldots,h_m$ is strictly increasing. Let us take such sets $Y \subseteq W \subseteq J$, that $Y \notin \Delta$ and $W \in \Delta$. Of course, $|Y| < |W|$, so we 
have $h(Y)<h(W)$, thus the condition (1) is satisfied.

Now let us consider $W,Y\in\Delta$. Then $X\subseteq W$ and $X \subseteq Y$ since 
$\min\Delta = \{X\}$, so $W\cap Y\in \Delta$. This shows that the second condition of 
the Csirmaz Lemma is also satisfied.
\end{proof}

Let us recall a result of Farr\`as, Padr\'o, Xing and Yang, which can be restated as 
follows.

\begin{lem}[\cite{FPXY}, Lemma 6.1]
\label{lem:k-compatible}
For a positive integer $k\in I_m$, the monotone increasing family $\Delta$ such that 
$\min \Delta=\powk{k}{J}$ is compatible with a uniform polymatroid $\Pol=(J,h,\bfg)$ if 
and only if $g_{k-1}>g_k$.
\end{lem}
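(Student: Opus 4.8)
The plan is to derive the statement entirely from the Csirmaz Lemma (Lemma~\ref{lem:csirmaz}). Since $\Delta$ is monotone increasing with $\min\Delta=\powk{k}{J}$, it equals $\set{X\seq J\ :\ |X|\geq k}$, so a set lies in $\Delta$ exactly when it has at least $k$ elements. I would systematically translate the two Csirmaz conditions into inequalities on the increment sequence, using $h(X)=h_{|X|}$ together with the telescoping identity $h_b-h_a=\sum_{i=a}^{b-1}g_i$ from \eqref{eq:sums}.

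For the necessity direction, assume $\Delta$ is compatible and split on the size of $k$. If $k<m$, I would produce a single explicit witness for Csirmaz condition~(2): fix a $(k-1)$-element set $Z\seq J$ and two distinct elements $x,y\in J\setminus Z$ (possible because $|J\setminus Z|=m-k+1\geq 2$), and set $X=Z\cup\set{x}$, $Y=Z\cup\set{y}$. Then $X,Y\in\Delta$ while $X\cap Y=Z\notin\Delta$, so condition~(2) gives $h_{k-1}+h_{k+1}<2h_k$, which rearranges exactly to $g_k<g_{k-1}$. If $k=m$, then $\Delta=\set{J}$ and condition~(2) is vacuous, so I would instead apply condition~(1) to the pair $J\setminus\set{x}\subsetneq J$, obtaining $h_{m-1}<h_m$, i.e. $g_{m-1}>0=g_m$. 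In both cases $g_{k-1}>g_k$.

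For the sufficiency direction, assume $g_{k-1}>g_k$. First note that $g_{k-1}>g_k\geq 0$ together with $\bfg$ nonincreasing forces $g_0\geq\dots\geq g_{k-1}>0$. To check condition~(1), take $Y\seq X$ with $Y\notin\Delta$ and $X\in\Delta$; then $|Y|\leq k-1<k\leq|X|$, so the index $k-1$ lies in $\set{|Y|,\dots,|X|-1}$ and $h(X)-h(Y)=\sum_{i=|Y|}^{|X|-1}g_i\geq g_{k-1}>0$. To check condition~(2), take $X,Y\in\Delta$ with $X\cap Y\notin\Delta$, and write $a=|X|\geq k$, $b=|Y|$, $c=|X\cap Y|\leq k-1$, $d=|X\cup Y|$; the inclusion-exclusion identity gives $a+b=c+d$, so $t:=b-c=d-a\geq 1$. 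I would then rewrite the target gap as
$$h(X)+h(Y)-h(X\cap Y)-h(X\cup Y)=\sum_{j=0}^{t-1}\bigl(g_{c+j}-g_{a+j}\bigr).$$
Every summand is nonnegative because $c\leq a$ and $\bfg$ is nonincreasing, and the $j=0$ term is strictly positive since $g_c\geq g_{k-1}>g_k\geq g_a$ (using $c\leq k-1$ and $a\geq k$). Hence the gap is positive, which is condition~(2).

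The main obstacle is the sufficiency of condition~(2): one must see that the single inequality $g_{k-1}>g_k$ upgrades the automatic submodular estimate $h(X\cap Y)+h(X\cup Y)\leq h(X)+h(Y)$ to a strict one for every admissible quadruple $(a,b,c,d)$, not merely the minimal configuration $(k,k,k-1,k+1)$. The identity $a+b=c+d$ is what makes the two telescoping sums align term by term, so that strictness can be localized to the $j=0$ term, precisely where the gap between $c\leq k-1$ and $a\geq k$ lets $g_{k-1}>g_k$ take effect. The remaining verifications are routine.
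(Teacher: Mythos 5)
Your proof is correct. Note, however, that the paper itself offers no proof of this statement: Lemma~\ref{lem:k-compatible} is imported verbatim from \cite[Lemma 6.1]{FPXY} and merely restated, so there is no internal argument to compare against. What you have produced is a valid self-contained derivation from the Csirmaz Lemma (Lemma~\ref{lem:csirmaz}), and it is entirely in the spirit of the compatibility verifications the paper does carry out elsewhere (e.g.\ Lemma~\ref{lem:compatible} and parts (1)--(2) of Theorem~\ref{thm:g2:zero}), which likewise reduce everything to checking the two Csirmaz conditions against the sequence $(h_j)$. The details all check out: for necessity, the witness $X=Z\cup\set{x}$, $Y=Z\cup\set{y}$ with $|Z|=k-1$ yields exactly $h_{k-1}+h_{k+1}<2h_k$, i.e.\ $g_k<g_{k-1}$, and the case $k=m$ is correctly handled via condition (1) applied to $J\setminus\set{x}\subsetneq J$ (condition (2) being vacuous there). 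For sufficiency, the key identity $h_a+h_b-h_c-h_d=\sum_{j=0}^{t-1}(g_{c+j}-g_{a+j})$, which follows from \eqref{eq:sums} and $a+b=c+d$, is stated and used correctly; all indices stay within $I_m$ since $c+t-1=b-1\leq m-1$ and $a+t-1=d-1\leq m-1$, each summand is nonnegative by monotonicity of $\bfg$, and the $j=0$ term is strictly positive because $c\leq k-1$ and $a\geq k$ give $g_c\geq g_{k-1}>g_k\geq g_a$. This localization of strictness to the $j=0$ term is precisely the right observation, and the argument covers all admissible set pairs, not just the minimal configuration.
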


Further results concerned with compatibility can be found in Section 
\ref{sec:hier:pord}.

\section{Access structures determined by uniform polymatroids}  
\label{sec:acc:str}

This section is devoted to the study of those uniform polymatroids that determine weakly 
hierarchical access structures.From now on we make the assumption that $J$ is a finite 
set with $m:=|J|\geq 2$. A partition of a set of participants is denoted by 
$\Pi=(P_x)_{x\in J}$. A uniform polymatroid is a triplet $\Pol=(J,h,\bfg)$ where $h: \pow{J}\longrightarrow \NO$ is the rank function and $\bfg=(g_i)_{i\in I_m}$ is the 
increment sequence of the polymatroid. Recall the sequence $\bfg$ is nonincreasing and 
$g_m=0$. Moreover, we assume $0<g_0<|P_x|$ for all $x\in J$.  Next we consider a 
monotone increasing family $\Delta\seq \pow{J}\setminus\set{\emptyset}$ that is 
compatible with $\Pol$. Finally, $\G(\Pi,\Pol,\Delta)$ is the access structure 
determined by $\Pi,\ \Pol,\ \Delta$ as defined in Definition \ref{defn:constr:acc:str}. 
The relation $\preccurlyeq$ is the hierarchical preorder induced by $\Gamma$ in $\Pi$. 
We define for further use the following notations 
$\eta(\bfg)=\min\set{i\in I_m\ :\ g_i=0}$ and $\mu(\Delta)=\min\set{|X|\ :\ X\in 
\Delta}$. The above settings ensure that $\eta(\bfg)\geq 1$ and $\mu(\Delta)\geq 1$.

\begin{ex} 
\label{ex:threshold}
Let us consider a uniform polymatroid $\Pol=(J,h,\bfg)$ such that $\eta(\bfg)=1$, i.e.  
$g_0>g_1=0$ and a monotone increasing family $\Delta$ of subsets of $J$ compatible with 
$\Pol$. Applying Lemma \ref{lem:head} (1) yields 
$\Delta=\pow{J}\setminus \set{\emptyset}$. According to Equation \eqref{eq:rankf} we 
have $h(X)=g_0$ for all nonempty subsets $X$ of $J$. Hence $\B(\Pol,X)\seq \B(\Pol,J)$ 
for every $\emptyset\neq X\seq J$ (cf. Equation \eqref{eq:equal_h}) and consequently 
$\bigcup_{X\in \Delta} \B(\Pol,X)=\B(\Pol,J)$. Let $\GPPD$. This implies that 
$\W\in\min\Gamma$ if and only if $|\W|=g_0$ or equivalently $\W\in \Gamma$ if and only 
if $|\W|\geq g_0$. This shows that the threshold access structure is the only type of 
access structures determined by uniform polymatroids with $\eta(\bfg)=1$. In particular 
all blocks (and participants) are hierarchically equivalent.
\end{ex}

Let us collect several simple observations, which are very helpful in many proofs.

\begin{lem}
\label{lem:supp:property}
For $\G=\G(\Pi,\Pol,\Delta)$ we have
\begin{enumerate}[(1)]
\item $ \B(\Pol, X)\subseteq  \G$ for all  $X \in \Delta$.
\item $\supp \G = \Delta$. 
\item If $\W \in \Gm$, then $ \W \in \B(\Pol, \supp \W)$ and $\supp\W\in\Delta$.
\item If  $\W \in \G$, then there exists $\V \in \Gm$ such that $\V \leq  \W$, $ \V \in 
\B(\Pol, \supp \V)$  and $\supp \V \in \Delta$.
\item If $\W$ is a vertex vector and $\supp{\W}\in \Delta$, then $\W \in \G.$
\end{enumerate}
\end{lem}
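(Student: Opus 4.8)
The plan is to read the five assertions off Definition~\ref{defn:constr:acc:str}, which by construction gives both $\bigcup_{X\in\Delta}\B(\Pol,X)\seq\G$ and $\Gm\seq\bigcup_{X\in\Delta}\B(\Pol,X)$, and to combine this with the Csirmaz Lemma~\ref{lem:csirmaz} and the elementary facts about $\B(\Pol,X)$ recorded earlier. I would not follow the listed order but prove them as (1), (3), (4), (5), (2), since each later item leans on the earlier ones. Item~(1) is immediate: Definition~\ref{defn:constr:acc:str} asserts that $\G$ contains $\bigcup_{X\in\Delta}\B(\Pol,X)$, hence $\B(\Pol,X)\seq\G$ for every $X\in\Delta$. (It is worth noting that the standing assumption $g_0<|P_x|$ makes every $\W\in\B(\Pol,X)$ satisfy $w_x\leq h_1=g_0<|P_x|$, so $\B(\Pol,X)\seq\pi(\pow{P})$ and the containment is meaningful.)

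For~(3), let $\W\in\Gm$. Since $\Gm\seq\bigcup_{X\in\Delta}\B(\Pol,X)$, there is $X\in\Delta$ with $\W\in\B(\Pol,X)$; by \eqref{eq:B,Z,X} this yields $\supp{\W}\seq X$ and $|\W|=h(X)$. The last sentence of Remark~\ref{rem:nonempty} gives $\W\in\B(\Pol,\supp{\W})$, so $|\W|=h(\supp{\W})$ and therefore $h(\supp{\W})=h(X)$. The one genuinely non-formal step is now to conclude $\supp{\W}\in\Delta$: since $\supp{\W}\seq X$, $X\in\Delta$, and $h(\supp{\W})=h(X)$, condition~(1) of the Csirmaz Lemma~\ref{lem:csirmaz}, read in contrapositive form, forces $\supp{\W}\in\Delta$. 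I expect this to be the main obstacle, because the naive attempt, namely deducing $\supp{\W}\in\Delta$ from monotonicity of $\Delta$, fails: $\supp{\W}$ is a \emph{subset} of $X$, so the upward-closure of $\Delta$ is the wrong tool; the correct leverage is the rank equality $h(\supp{\W})=h(X)$ feeding the Csirmaz condition.

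Items~(4) and~(5) then follow quickly. For~(4), $\G\seq\pi(\pow{P})$ is a finite monotone increasing family, so any $\W\in\G$ dominates some $\V\in\Gm$; applying~(3) to $\V$ gives $\V\in\B(\Pol,\supp{\V})$ and $\supp{\V}\in\Delta$. For~(5), if $\W$ is a vertex vector then Lemma~\ref{lem:vertex2} gives $\W\in\B(\Pol,\supp{\W})$, and combining the hypothesis $\supp{\W}\in\Delta$ with item~(1) yields $\W\in\B(\Pol,\supp{\W})\seq\G$.

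Finally~(2). The inclusion $\supp{\G}\seq\Delta$ uses~(4): for $\W\in\G$ choose $\V\in\Gm$ with $\V\leq\W$ and $\supp{\V}\in\Delta$; then $\supp{\V}\seq\supp{\W}$, and monotonicity of $\Delta$ forces $\supp{\W}\in\Delta$. For the reverse inclusion $\Delta\seq\supp{\G}$, fix $X\in\Delta$; by Remark~\ref{rem:nonempty} $\B(\Pol,X)\neq\emptyset$, so pick $\W_0\in\B(\Pol,X)\seq\G$ and set $\W:=\W_0+\sum_{x\in X\setminus\supp{\W_0}}\E{x}$. Then $\supp{\W}=X$, and every coordinate of $\W$ is at most $g_0<|P_x|$ (using $g_0\geq 1$), so $\W\leq\pi(P)$; since $\W\geq\W_0$ and $\G$ is monotone increasing, $\W\in\G$, whence $X=\supp{\W}\in\supp{\G}$. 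This exhausts all five parts.
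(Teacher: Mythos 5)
Your proof is correct and uses essentially the same ingredients as the paper's: Definition~\ref{defn:constr:acc:str} for (1), the last sentence of Remark~\ref{rem:nonempty} together with condition (1) of the Csirmaz Lemma to get $\supp{\W}\in\Delta$, Lemma~\ref{lem:vertex2} for (5), and monotonicity plus nonemptiness of $\B(\Pol,X)$ for the two inclusions in (2). The only difference is organizational — the paper places the Csirmaz argument inside case (i) of its proof of (2) and then deduces (3) from (2), whereas you prove (3) first and deduce (2) from (4); your explicit construction of the vector $\W_0+\sum_{x\in X\setminus\supp{\W_0}}\E{x}$ also spells out the paper's unproved assertion that $\supp{\G}$ is monotone increasing.
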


\begin{proof}
(1) This follows directly from Definition \ref{defn:constr:acc:str}. 

(2) Let us consider $Y \in \supp \G$. Then there exists $\W \in \G$ such that 
$\supp \W = Y$. Let us consider two cases:

(i) $\W \in \Gm$. Then there exists $X \in \Delta$ such that $\W \in \B(\Pol, X)$, so  
$Y \subseteq X$.  If $Y=X$, then $Y \in \Delta$. If $Y \subsetneq X$, then also  
$Y \in \Delta$. Indeed, let us  notice that $|\W_Y|\leq h(Y)$, $|\W_X|= h(X)$ and 
$|\W_Y|=|\W_X|$, where the later equality follows from the fact 
$\supp \W = Y \subseteq X$. Moreover, if  $Y \not \in \Delta$, then by the Csirmaz Lemma 
we would get 
$$h(X)=|\W_X|=|\W_Y|\leq h(Y) < h(X),$$ 
which is a contradiction.

(ii) $\W \in \G$ and $\W \not \in \Gm$. Then there is $\V \in \Gm$ such that 
$\V\leq  \W$. From the case (i) we get $\supp \V \in \Delta$. Let us notice that 
$\supp \V \subseteq \supp \W$. Moreover, $\Delta$ is a monotone increasing family, so 
$Y=\supp \W \in\Delta$.

Now we shall show the converse inclusion. Let us take $X \in \Delta$. As we already have 
observed in Remark \ref{rem:nonempty}, the family $\B(\Pol, X)$ cannot be empty, so 
there is a certain vector $\W \in  \B(\Pol, X)$. By (1) we get $\W \in \G$, so 
$\supp \W \in \supp \G$. The family $\supp \G$ is monotone increasing and 
$\supp \W \subseteq X $, so $X \in \supp \G$.

(3) If $\W \in \Gm$, then $\W \in  \B(\Pol, X)$ for a certain $X \in \Delta$. Remark 
\ref{rem:nonempty} implies $\W \in \B(\Pol, \supp \W)$. Moreover, 
$\supp \W \in \supp \G$, hence and by (2) we get $\supp \W  \in \Delta$.

(4) It follows from (3) immediately.

(5) If $\W$ is a vertex vector, then we have $\W\in\B(\Pol, \supp \W)$ by Lemma 
\ref{lem:vertex2}. By assumption and part (1) of this lemma we get $\W\in \Gamma$.
\end{proof}

\begin{lem}
\label{lem:one:minimal}
Let $\GPPD$.
If $g_1=g_{n-1}>0$ for some $2\leq n\leq m$ and if $X,Y\in \min \Delta$ as well as 
$|X\cup Y|\leq n$, then $X=Y$ or both sets are singletons. Moreover, if $g_0=g_1$, then 
$X=Y$ even if both $X, Y$ are singletons.
\end{lem}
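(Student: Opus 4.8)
The plan is to argue by contradiction: assuming $X\neq Y$, I will show both sets must be singletons, and that the extra hypothesis $g_0=g_1$ excludes even that case. The only tool I intend to use is condition (2) of the Csirmaz Lemma (Lemma~\ref{lem:csirmaz}), so the first thing to record is that $X\cap Y\notin\Delta$. Indeed, if $X\neq Y$ with both sets minimal, then $X\not\subseteq Y$ (otherwise minimality of $Y$ would force $X=Y$), so $X\cap Y$ is a proper subset of the minimal set $X$ and hence lies outside $\Delta$; in particular the pair $X,Y$ is eligible for condition (2).

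The computational heart is the remark that $g_1=g_{n-1}>0$ together with $\bfg$ nonincreasing forces $g_1=g_2=\cdots=g_{n-1}$, so that $h$ is \emph{affine} on the relevant range: $h_k=g_0+(k-1)g_1$ for all $1\leq k\leq n$, while $h_0=0$. Writing $a=|X|$, $b=|Y|$, $c=|X\cap Y|$, $d=|X\cup Y|$, I have $a+b=c+d$ and $d\leq n$. I would first rule out $c\geq 1$: then all four indices lie in $[1,n]$, the affine formula gives $h_c+h_d=2g_0+(c+d-2)g_1=2g_0+(a+b-2)g_1=h_a+h_b$, and this equality contradicts the strict inequality that condition (2) demands. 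Hence $c=0$, i.e. $X\cap Y=\emptyset$.

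Next I would exclude non-singletons. Suppose $a\geq 2$ and fix $x_1\in X$. Since $Y\in\Delta$, its superset $B:=Y\cup\set{x_1}$ also lies in $\Delta$, while $X\cap B=\set{x_1}$ (using disjointness) is a proper subset of the minimal set $X$, hence not in $\Delta$. Applying condition (2) to the pair $X,B$, whose union is again $X\cup Y$, and inserting the affine values of $h_1,h_{a+b},h_a,h_{b+1}$, both sides collapse to $2g_0+(a+b-1)g_1$; the resulting equality once more contradicts the required strict inequality. Thus $a=1$, and the symmetric construction (interchanging $X$ and $Y$) gives $b=1$, so both sets are singletons. The "moreover" clause is then immediate from the same machinery: if $g_0=g_1$, the affine formula sharpens to $h_k=kg_0$ on $[0,n]$, so for any $X\neq Y$ with $|X\cup Y|\leq n$ one gets $h_c+h_d=(c+d)g_0=(a+b)g_0=h_a+h_b$ \emph{regardless} of $c$ — in particular for the disjoint singleton case — again contradicting condition (2), whence $X=Y$.

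I expect the main obstacle to be purely the bookkeeping of which increments equal $g_1$ and which equal $g_0$, together with verifying that every index appearing in the $h$-values stays inside the constant stretch $[1,n-1]$ (this is where $d\leq n$ and, for the swap argument, $a\geq 2\Rightarrow b+1\leq n$ are used). The structural insight that makes the calculations transparent is that the single "jump" $g_0-g_1$ at the bottom of $\bfg$ is the only source of strictness available in condition (2): a disjoint pair of sets consumes this jump exactly once, leaving the inequality satisfiable precisely when $g_0>g_1$, whereas the swap construction $X$ versus $Y\cup\set{x_1}$ consumes it zero times, forcing an equality that no uniform polymatroid of this shape can break.
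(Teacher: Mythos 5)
Your proof is correct and follows essentially the same route as the paper: both arguments rest on the observation that $g_1=g_{n-1}$ makes $h_k=g_0+(k-1)g_1$ affine on $[1,n]$, and then apply condition (2) of the Csirmaz Lemma to the pair $X$ and $Y\cup\set{x}$ (the paper's $Y'$, your $B$) to force an equality where a strict inequality is required. Your preliminary step showing $X\cap Y=\emptyset$ is folded into the paper's case split in the definition of $Y'$, but the computation and the contradiction are the same.
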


\begin{proof}
For $n = 2$, the claim is obvious. Let us assume $n \geq 3$. It is enough to consider 
the case $X \neq Y$. Suppose that at least one of these sets, for example $X$, has at 
least 2 elements. Let us fix $x \in X$ and consider the set
$$Y'=
\begin {cases}
Y & \mbox{when } X \cap Y \neq \emptyset; \\
Y \cup \{x \} & \mbox{when } X \cap Y = \emptyset.
\end {cases}
$$
Note that $|X \cup Y'| = |X \cup Y| \leq n$ and $W: = X \cap Y' \neq \emptyset$. In 
addition, $W$  is a proper subset of $X$ which is a minimum set in $\Delta$, so it does 
not belong to $\Delta$. Hence according to the Csirmaz Lemma we get 
$$h (W) + h (X \cup Y ') <h (X) + h (Y').$$
On the other hand, the assumption $g_1=g_{n-1}$ implies $h_l = g_0 + (l-1)g_1$ for every 
$1 \leq l \leq n$. From this we get
\[h_ {| W |} + h _ {(| X | + | Y '| - {| W |})} <h_ {| X |} + h_ {| Y' |}, \]
\[g_0 + (| W | - 1) g_1 + g_0 + (| X | + | Y '| - {| W |} - 1) g_1 <g_0 + (| X | - 1) g_1 + g_0 + (| Y '| - 1) g_1. \]
It is easy to see that the simplified expression above is $0 <0$, which gives a 
contradiction. This shows that if $X$ and $Y$ are different, then they cannot have more 
than one element.

Let us assume $g_0=g_1$ and $|X|=|Y|=1$. Let us suppose $X\neq Y$. Then 
$X\cap Y=\emptyset$, so by the Csirmaz Lemma we have 
$$ h(X\cap Y)+h(X\cup Y)<h(X) + h(Y)$$
and consequently $h_2<2h_1$ or equivalently $g_0+g_1<2g_0$, which is a contradiction.
\end{proof}

\begin{prop}
\label{prop:not:comparable}
If $X \in \min \Delta$, then for all $x, y \in X$, $x\neq y$, the blocks  $P_x$ and 
$P_y$ are hierarchically independent in the access structure $\GPPD$.  
\end{prop}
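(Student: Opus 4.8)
The plan is to negate the vector characterisation \eqref{eq:preceq} of the preorder on blocks. Fix distinct $x,y\in X$; if $|X|=1$ there is nothing to prove, so I may assume $k:=|X|\geq 2$. By the symmetry of the roles of $x$ and $y$ it suffices to exhibit a single vector $\V\in\G$ with $v_y\geq 1$ and $v_x<|P_x|$ for which $\V-\E{y}+\E{x}\notin\G$: this already shows $P_y\not\preccurlyeq P_x$, and interchanging $x$ and $y$ then gives $P_x\not\preccurlyeq P_y$, so that $P_x$ and $P_y$ are hierarchically independent.

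For the construction I would use a vertex vector adapted to $x$. Since $X\in\min\Delta$ has $k$ elements, Lemma \ref{lem:head}(2) gives $g_{k-1}>0$, so all of $g_0,\dots,g_{k-1}$ are positive. Choose the bijection $\sigma\colon X\to\set{0,\dots,k-1}$ with $\sigma(x)=0$ and let $\V=\sum_{z\in X}g_{\sigma(z)}\E{z}$ be the associated vertex vector. Then $v_x=g_0$ and $v_y=g_{\sigma(y)}\geq g_{k-1}>0$, so in particular $\supp{\V}=X$. By Lemma \ref{lem:vertex2} we have $\V\in\B(\Pol,X)$, hence $\V\in\G$ by Lemma \ref{lem:supp:property}(1); moreover $v_x=g_0<|P_x|$ by the standing assumption $0<g_0<|P_x|$. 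Thus $\V$ meets the two side conditions of \eqref{eq:preceq}.

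It remains to prove $\V':=\V-\E{y}+\E{x}\notin\G$, and here I note $\supp{\V'}\seq X$. Suppose, for contradiction, that $\V'\in\G$. By Lemma \ref{lem:supp:property}(4) there is $\U\in\min\G$ with $\U\leq\V'$ and $\supp{\U}\in\Delta$. Then $\supp{\U}\seq\supp{\V'}\seq X$, and since $X$ is a minimal element of $\Delta$ this forces $\supp{\U}=X$; consequently $\U\in\B(\Pol,X)$, so $u_x\leq h(\set{x})=g_0$ and $|\U|=h(X)=h_k$. Now I estimate $|\U|$ from the componentwise inequalities $u_z\leq v_z$ for $z\in X\setminus\set{x,y}$ and $u_y\leq v_y-1$, but for the $x$-coordinate I replace the crude bound $u_x\leq v_x+1$ by the base constraint $u_x\leq g_0=v_x$. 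This yields
\[
|\U|\;\leq\; v_x+(v_y-1)+\sum_{z\in X\setminus\set{x,y}}v_z \;=\;|\V|-1\;=\;h_k-1,
\]
contradicting $|\U|=h_k$. Hence $\V'\notin\G$, which establishes $P_y\not\preccurlyeq P_x$.

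The main obstacle, and the reason the vertex vector must be chosen with $v_x=g_0$, is the case $v_y>1$ (which occurs exactly when $g_{k-1}>1$). Then the move $\V\mapsto\V-\E{y}+\E{x}$ does not drop $y$ from the support, so one cannot simply argue that the support of $\V'$ has left $\Delta$; indeed $\V'$ still has support $X\in\Delta$. What saves the argument is that pushing the $x$-entry up to $g_0+1$ violates the singleton rank bound $h(\set{x})=g_0$, so no element of $\B(\Pol,X)$, and hence no minimal vector of $\G$, can lie below $\V'$. Maximising $v_x$ at $g_0$ is precisely what turns this observation into the counting contradiction above.
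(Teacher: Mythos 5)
Your proof is correct and follows essentially the same route as the paper: both arguments take a vertex vector with basic set $X$ and $x$-coordinate equal to $g_0$, pass to a minimal authorized $\U\leq\V-\E{y}+\E{x}$ supported in $X$, and derive a contradiction from the singleton rank bound $u_x\leq h(\set{x})=g_0$ together with $|\U|=h_k$. The only difference is organizational — the paper delegates your counting inequality to Lemma \ref{lem:both:in:supp} (whose proof is exactly your displayed estimate) and phrases the contradiction as $\supp\U\subsetneq X$ violating minimality of $X$, whereas you first force $\supp\U=X$ from minimality and then contradict the modulus.
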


\begin{proof}
Let $X\in \min \Delta$ and let $x, y$ be two different elements in $X$. Suppose 
$P_y\preccurlyeq P_x$ and consider a vertex vector $\W$ 
with basic set $X$ and $w_x=g_0$. Setting $k:=|X|$ and applying Lemma \ref{lem:head} (2) 
we have $g_{k-1}>0$ so $\supp \W = X$, in particular $w_y>0$ and by Lemma 
\ref{lem:supp:property} (5) we get $\W\in \G$. Thus $\W'=\W-\E{y}+\E{x}\in \G$. By Lemma 
\ref{lem:supp:property} (4) there is $\V\in \min\G$ such that $\V\leq \W'$ and 
$\V\in \B(\Pol,\supp{\V})\seq \G$, so applying Lemma \ref{lem:both:in:supp} we have 
$y\notin \supp{\V}\subsetneq X$, which contradicts the fact that $X\in \min\Delta$. By a 
similar argument we show that the case $P_x\preccurlyeq  P_y$ is impossible.
\end{proof}

\begin{prop}
\label{prop:not:inferior}
If $X\in\min \Delta$, $1 \leq k:=|X|\leq m-1$ and $g_k>0$, then for every $y\in X$ the 
block $P_y$ is not hierarchically inferior or equivalent to any block $P_x\neq P_y$ in 
the access structure $\GPPD$.
\end{prop}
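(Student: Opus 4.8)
The plan is to fix $y\in X$ together with a block $P_x\neq P_y$, and to rule out $P_y\preccurlyeq P_x$ by treating separately the cases $x\in X$ and $x\notin X$. When $x\in X$ (so $x\neq y$), Proposition \ref{prop:not:comparable} already asserts that $P_x$ and $P_y$ are hierarchically independent, and hence in particular that $P_y$ is not hierarchically inferior or equivalent to $P_x$; nothing further is required in this case.

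The substantial case is $x\notin X$, which I would handle by contradiction, imitating the scheme of Proposition \ref{prop:not:comparable}. Assume $P_y\preccurlyeq P_x$. Since $X\in\min\Delta$, Lemma \ref{lem:head}(2) gives $g_{k-1}>0$, so a vertex vector $\W$ with basic set $X$ and $w_y=g_0$ satisfies $\supp{\W}=X\in\Delta$; by Lemma \ref{lem:supp:property}(5) we get $\W\in\G$. As $w_y=g_0\geq 1$ and $w_x=0<|P_x|$, the characterization \eqref{eq:preceq} of $\preccurlyeq$ yields $\W':=\W-\E{y}+\E{x}\in\G$. Lemma \ref{lem:supp:property}(4) then produces $\V\in\min\G$ with $\V\leq\W'$, $\V\in\B(\Pol,\supp{\V})$ and $\emptyset\neq\supp{\V}\in\Delta$; and since $\W\in\B(\Pol,X)$, $w_y=g_0$ and $g_k>0$, Lemma \ref{lem:one:in:supp} applies and gives $y\notin\supp{\V}$.

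From here I expect to split on the value of $g_0$. If $g_0>1$, then Lemma \ref{lem:one:in:supp} delivers the stronger conclusion $\supp{\V}\seq X\setminus\set{y}\subsetneq X$; as $\supp{\V}$ is a nonempty member of $\Delta$ strictly contained in $X$, this contradicts $X\in\min\Delta$ and closes the case. The genuinely delicate situation is $g_0=1$, where Lemma \ref{lem:one:in:supp} only yields $y\notin\supp{\V}$ while $x$ may well belong to $\supp{\V}$, so $\supp{\V}$ need not be a subset of $X$ and the minimality of $X$ cannot be contradicted directly.

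To resolve the $g_0=1$ case I would invoke the uniqueness of small minimal sets. Since $\bfg$ is nonincreasing with $g_0=1$ and $g_k>0$, one has $g_0=g_1=\dots=g_k=1$, whence $\supp{\W}=X$ and $\supp{\W'}=(X\setminus\set{y})\cup\set{x}$. Choose $X'\in\min\Delta$ with $X'\seq\supp{\V}$; then $y\notin X'$ and $X'\seq(X\setminus\set{y})\cup\set{x}$, so $|X\cup X'|\leq k+1$. As $1\leq k\leq m-1$ ensures $2\leq k+1\leq m$ and $g_1=g_k>0$, Lemma \ref{lem:one:minimal} applied with $n=k+1$ forces $X=X'$ or both $X,X'$ to be singletons. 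The equality $X=X'$ is impossible because $y\in X\setminus X'$; and if both are singletons then $X=\set{y}$ and $X'=\set{x}$ with $x\neq y$, contradicting the last assertion of Lemma \ref{lem:one:minimal} (valid since $g_0=g_1$). Either alternative yields a contradiction, so $P_y\preccurlyeq P_x$ cannot hold, which finishes the proof. The main obstacle is exactly this $g_0=1$ sub-case: the support bound from Lemma \ref{lem:one:in:supp} is too weak to contradict minimality of $X$, and one must instead fall back on the structural restriction on minimal sets furnished by Lemma \ref{lem:one:minimal}.
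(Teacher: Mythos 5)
Your proposal is correct and follows essentially the same route as the paper's proof: reduce to $x\notin X$ via Proposition \ref{prop:not:comparable}, take a vertex vector with basic set $X$ and $w_y=g_0$, extract a minimal $\V\leq\W-\E{y}+\E{x}$, and apply Lemma \ref{lem:one:in:supp}, closing the $g_0>1$ case by minimality of $X$ and the $g_0=1$ case by Lemma \ref{lem:one:minimal}. The only cosmetic difference is that you spell out the two alternatives of Lemma \ref{lem:one:minimal} before invoking its final clause, where the paper applies that clause directly to conclude $X=Y$.
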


\begin{proof}
Let $y\in X$ and let us suppose that $P_y\preccurlyeq  P_x$ for a certain $x\in J$. 
By Proposition \ref{prop:not:comparable} we have $x\in J\setminus X$. Let us consider a 
vertex vector $\W$ with basic set $X$ and $w_y = g_0$. Obviously, $\W\in \G$ by Lemma 
\ref{lem:supp:property} (5). Then the vector $\W':=\W-\E{y}+\E{x}$ also belongs to $\G$. 

By Lemma \ref{lem:supp:property} (4) there exists a minimal authorized vector $\V$ such 
that $\V \leq  \W'$, $ \V \in \B(\Pol, \supp \V)$ and $\supp \V\in\Delta$. If $g_0>1$, 
then Lemma \ref{lem:one:in:supp} implies $\supp{\V}\seq X\setminus\set{y}$, but this 
contradicts the assumption $X\in \min \Delta$. 

If $g_0=1$, then $g_0=g_1=g_k$ and by Lemma \ref{lem:one:in:supp} we have 
$\supp{\V}\seq (X\setminus \set{y})\cup\set{x}$. For $Y\in \min\Delta$ such that 
$Y\seq \supp{\V}$ we have $X\cup Y\seq X\cup\set{x}$, so $|X\cup Y|\leq k+1$. Applying 
Lemma \ref{lem:one:minimal} yields $X=Y$ but $y\in X$ and $y\notin Y$, a contradiction.  
\end{proof}

\begin{lem}
\label{lem:technical}
Let $\bfg =(g_i)_{i\in I_m}$ be the increment sequence of a uniform polymatroid $\Pol$ 
and let $\GPPD$. Let us assume that $X\in \min \Delta$ with $k:=|X|$ and there are 
$x,y\in J, \ \ x\neq y$ such that $|X\cup \set{x,y}|\geq 3$ and the blocks 
$P_y$ and  $P_x$ are hierarchically comparable  in the access structure $\G$. 
Furthermore, we assume that $g_1=g_k$ and $g_l>0$ for a certain $1\leq l<m$. If 
$X\cap\set{x,y}\neq \emptyset$ or $g_0=g_1$, then $g_1= g_l$.
\end{lem}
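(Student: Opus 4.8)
The plan is to assume $g_1>g_l$ and to manufacture an authorized vector whose hierarchy image overflows the superior block and thereby leaves $\G$, contradicting comparability. Since the statement is symmetric in $x$ and $y$, I may assume $P_y\preccurlyeq P_x$. First the reductions: $g_l>0$ together with the monotonicity of $\bfg$ gives $g_1>0$, whence $g_1=g_k$ forces $g_1=\dots=g_k>0$ and $k<\eta(\bfg)$; if $l\le k$ there is nothing to prove, so I assume $l>k$. By Proposition \ref{prop:not:comparable} the indices $x,y$ are not both in $X$, and, since $g_k>0$ while comparability forces $X\neq J$ (so $k\le m-1$), Proposition \ref{prop:not:inferior} applied to $X$ shows that the member of $\set{x,y}$ lying in $X$, if any, is the superior one; thus in the case $X\cap\set{x,y}\neq\emptyset$ we have $x\in X$ and $y\notin X$. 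Applying Proposition \ref{prop:not:inferior} instead to a minimal set containing the inferior index $y$ (which exists because $P_y$ is non-redundant) and using Lemma \ref{lem:head}(2), I would conclude that every minimal set of $\Delta$ containing $y$ has size exactly $\eta(\bfg)$.

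Next I fix the drop. Let $n$ be the least index with $g_n<g_1$; then $g_{n-1}=g_1>g_n>0$ and $k<n\le l<\eta(\bfg)$, so Lemma \ref{lem:one:minimal} is available for this $n$. Since $g_{\eta(\bfg)}=0$, Lemma \ref{lem:head}(1) puts every subset of $J$ of size $\ge\eta(\bfg)$ into $\Delta$. The heart of the proof is to exhibit a \emph{minimal} set $Y\in\min\Delta$ of size $\eta(\bfg)$ that contains both $x$ and $y$: one builds it from $\set{x,y}$ together with enough further elements of $J$, chosen so as not to complete any smaller member of $\min\Delta$ (in particular not completing $X$, which is possible because $k\ge2$ allows one to omit an element of $X\setminus\set{x}$). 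On $Y$ I take the vertex vector $\W\in\B(\Pol,Y)$ assigning the top increment $g_0$ to $x$ and the bottom increment $g_{\eta(\bfg)-1}$ to $y$; by Lemma \ref{lem:vertex2} we have $\W\in\B(\Pol,Y)$ and by Lemma \ref{lem:supp:property}(1) $\W\in\G$, while $w_y=g_{\eta(\bfg)-1}\ge1$ and $w_x=g_0<|P_x|$.

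Comparability now gives $\W':=\W-\E{y}+\E{x}\in\G$, so by Lemma \ref{lem:supp:property}(4) there is a minimal base $\V\le\W'$ with $\V\in\B(\Pol,\supp\V)$ and $\supp\V\in\Delta$. Because $x,y\in Y$, $w_x=g_0$ and $w_y\neq0$, Lemma \ref{lem:both:in:supp} forces $y\notin\supp\V$, whence $\supp\V\seq Y\setminus\set{y}\subsetneq Y$. Since $Y$ is a minimal element of $\Delta$, it has no proper subset in $\Delta$, so $\supp\V\subsetneq Y$ with $\supp\V\in\Delta$ is impossible. This contradiction shows $\W'\notin\G$, against $P_y\preccurlyeq P_x$, and therefore $g_1=g_l$.

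The hard part is producing the minimal overflow witness $Y$, and this is where the hypotheses enter. Lemma \ref{lem:one:minimal}, invoked with the index $n$ and fed by the dichotomy ``$X\cap\set{x,y}\neq\emptyset$ or $g_0=g_1$'', shows that near $X$ there is essentially no other small minimal set (the ``both singletons'' escape being killed by $X\cap\set{x,y}\neq\emptyset$ and disappearing entirely when $g_0=g_1$), so that enlarging $\set{x,y}$ to a set of size $\eta(\bfg)$ while meeting no member of $\min\Delta$ properly is feasible; the strict drop $g_1>g_n$ is exactly what keeps this size below the level at which $\Delta$ saturates. In the flat case $g_1=g_l$ no such minimal $Y$ exists — every set of size $\eta(\bfg)$ already swallows a smaller minimal set — which is precisely why the conclusion is consistent there and why comparability is not obstructed. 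I expect the remaining technical points to be making the union bound $|X\cup Z|\le n$ honest for whatever small minimal set $Z$ the construction must avoid, and separating the subcases $g_0=1$ and $g_0>1$ as in the proof of Proposition \ref{prop:not:inferior}, using the sharper Lemma \ref{lem:one:in:supp} in place of Lemma \ref{lem:both:in:supp} when $g_0>1$.
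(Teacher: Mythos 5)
There is a genuine gap, and it sits at what you yourself call the heart of the proof: the construction of a minimal set $Y\in\min\Delta$ of size $\eta(\bfg)$ containing both $x$ and $y$. First, such a $Y$ cannot exist under the lemma's hypotheses: by Proposition \ref{prop:not:comparable}, no member of $\min\Delta$ can contain two indices whose blocks are hierarchically comparable, so the mere existence of $Y$ already contradicts $P_y\preccurlyeq P_x$ --- which means your subsequent vertex-vector argument is redundant, and the entire burden of proof falls on the construction. But the construction is only asserted, not carried out: you never show that the $\eta(\bfg)-2$ auxiliary elements can be chosen so that $Y$ swallows no smaller member of $\min\Delta$, and your closing remark that the strict drop $g_1>g_n$ ``is exactly what keeps this size below the level at which $\Delta$ saturates'' is the one place where the hypothesis $g_1>g_l$ would have to do real work, yet no argument is given. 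Indeed the construction fails outright when $\eta(\bfg)=m$ (a case fully covered by the lemma, e.g.\ $g_{m-1}>0$): then $Y$ would have to be $J$ itself, which is not minimal because $X\subsetneq J$ already lies in $\min\Delta$. Since comparability guarantees via Proposition \ref{prop:not:comparable} that no admissible $Y$ exists for \emph{any} value of $g_l$, your argument is circular: it never converts the failure of the construction into information about $\bfg$.

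The paper's proof avoids minimality of the ambient set entirely. It takes $l$ minimal with $g_1=g_{l-1}>g_l>0$, a (non-minimal) set $Y\supseteq X\cup\set{x,y}$ of size $l+1$ (automatically in $\Delta$ because it contains $X$), and a vertex vector $\W$ on $Y$ with $w_x=g_0$, $w_y=g_1$ and the \emph{small} weight $w_z=g_l$ placed on a chosen $z\in X\setminus\set{x,y}$. After the exchange $\W'=\W-\E{y}+\E{x}$, Lemma \ref{lem:both:in:supp} removes $y$ from the support of the resulting minimal vector $\V$, Lemma \ref{lem:one:minimal} (fed by exactly your dichotomy ``$X\cap\set{x,y}\neq\emptyset$ or $g_0=g_1$'') forces the minimal set inside $\supp\V$ to be $X$ itself, and then Lemma \ref{lem:min:coeffic}(1) yields the quantitative contradiction $g_l=w_z\geq v_z\geq g_{|\supp\V|-1}\geq g_{l-1}=g_1>g_l$. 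That coefficient bound is the step where $g_1>g_l$ genuinely enters, and it is precisely the ingredient missing from your proposal. (A smaller point: your parenthetical ``$k\geq 2$'' is not available in the branch $g_0=g_1$ with $X\cap\set{x,y}=\emptyset$, where $X$ may be a singleton.)
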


\begin{proof} 
If $g_1=1$, then the claim is obvious.

Assume that $g_1>1$ and assume that this is not the case.
Let $l$ be the least positive integer that does not satisfy the claim. That means, 
$g_1=g_{l-1}>g_l>0$. Obviously, $k+1 \leq l \leq m-1$. This implies $k\leq m-2$. Without 
loss of generality we can assume  that $P_y\preccurlyeq P_x$. By Proposition  
\ref{prop:not:inferior} we have $y\notin X$. Let now  $Y \subseteq J$ be a set with 
$l+1$ elements which contains $X\cup \set{x,y}$. Moreover, let us take an element $z \in 
X\setminus \set{x,y}$. 

Let us consider a vertex vector $\W$ with basic set $Y$ and $w_x=g_0$ and $w_z = g_l$. 
Obviously, $\supp{\W}=Y$, as $g_l>0$. Under the above assumptions, $w_t=g_1$ for all 
$t\in Y\setminus\set{x,z}$, in particular we have $w_y = g_1$. For every $0<j\leq l$ 
we have 
\begin{equation}
\label{eq:hz1} 
h_j = \sum_{i=0}^{j-1}g_i = g_0 + (j-1)g_1.
\end{equation}
Let us notice that $Y \in \Delta$ as $X\subseteq Y$. Hence  $\W\in \G$ by Lemma 
\ref{lem:supp:property} (5). Moreover, $h_{l+1} = |\W| = g_0 + (l-1)g_1 + g_l$. 
Since $P_y\preccurlyeq P_x$ we have $\W':=\W-\E{y}+\E{x}\in \G$. Let us notice 
$\supp{\W'} = Y$. Now by Lemma \ref{lem:supp:property} (4) there exists a minimal 
authorized vector $\V$ such that ${\V\leq  \W'}$, $\V\in \B(\Pol, \supp \V)$ and 
$W:=\supp \V \in \Delta$. Lemma \ref{lem:both:in:supp} implies $y\notin W$, i.e. 
$W\subseteq Y\setminus \set{y}$, so $|W|\leq l$. Let $Z\in \min \Delta$ such that 
$Z\subseteq W$.

Thus $X\cup Z\seq X\cup W\seq Y\setminus\set{y}$, so $|X\cup Z|\leq l$ and applying 
Lemma \ref{lem:one:minimal} yields $X=Z$ or both $X$ and $W$ are singletons. If 
$X\cap\set{x,y}\neq \emptyset$, then $x, z\in X$, so $X$ is not a singleton, thus $X=Z$. 
If $g_0=g_1$, then $X=Z$. Thus in both cases, we have $z\in X=Z\seq W$, so applying 
Lemma \ref{lem:min:coeffic} (1) we get $g_{|W|-1 }\leq v_z$. Notice also that 
$v_z \leq w'_z = w_z = g_l<g_{l-1}\leq g_{|W|-1}$, a contradiction which proves that 
$g_1=g_l$. 
\end{proof}

\begin{prop}
\label{prop:two:offmin}
Let $\bfg =(g_i)_{i\in I_m}$ be the increment sequence of a uniform polymatroid $\Pol$ 
and let $\GPPD$. Let us assume  $n:=\eta(\bfg)\geq 3$. If there are $X\in \min \Delta$ 
such that $1\leq |X|\leq n-2$ and $x, y \in J \setminus X$ such that the blocks $P_x$ 
and $P_y$ are hierarchically comparable  in the access structure $\G$, then 
$g_0=g_1=\dots=g_{n-1}>g_n=0$.
\end{prop}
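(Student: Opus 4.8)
The plan is to reduce the claim to Lemma~\ref{lem:technical} after first establishing that $g_0=g_1=\dots=g_k$, where $k:=|X|$. Write $n:=\eta(\bfg)$, so that $g_0\geq\cdots\geq g_{n-1}>g_n=0$ and every increment of index below $n$ is positive; note $1\leq k\leq n-2$. Since hierarchical comparability is symmetric, I may assume $P_y\preccurlyeq P_x$. The degenerate case $g_0=1$ is immediate: as $\bfg$ is a nonincreasing sequence of nonnegative integers with $g_{n-1}>0$, each of $g_0,\dots,g_{n-1}$ is a positive integer not exceeding $g_0=1$, so all equal $1$ and the conclusion holds at once. Henceforth assume $g_0>1$.

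Put $X':=X\cup\set{y}$, so that $|X'|=k+1\leq n-1$ because $x,y\notin X$; in particular $g_k>0$ and $g_{k+1}>0$. Choose a vertex vector $\W$ with basic set $X'$ assigning $w_y=g_0$, the elements of $X$ then receiving the remaining increments $g_1,\dots,g_k$, all positive. Thus $\supp{\W}=X'$, and $\W\in\B(\Pol,X')$ by Lemma~\ref{lem:vertex2}. Since $X\seq X'$ and $X\in\Delta$, monotonicity gives $X'\in\Delta$, hence $\W\in\G$ by Lemma~\ref{lem:supp:property}(5). As $w_y=g_0\geq 1$ and $w_x=0<|P_x|$, condition~\eqref{eq:preceq} yields $\W':=\W-\E{y}+\E{x}\in\G$.

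By Lemma~\ref{lem:supp:property}(4) there is $\V\in\Gm$ with $\V\leq\W'$, $\V\in\B(\Pol,\supp{\V})$ and $\supp{\V}\in\Delta$. Applying Lemma~\ref{lem:one:in:supp} with basic set $X'$ (using $w_y=g_0$, $g_{|X'|}=g_{k+1}>0$ and $g_0>1$) gives $\supp{\V}\seq X'\setminus\set{y}=X$; since $\supp{\V}\in\Delta$ and $X\in\min\Delta$, minimality forces $\supp{\V}=X$, so $|\V|=h(X)=h_k$. Because $x,y\notin X$, the vectors $\W'$ and $\W$ agree on the coordinates of $X$, so $\V\leq\W'$ restricted to $X$ gives
\[ h_k=|\V|\leq|\W_X|=|\W|-w_y=h_{k+1}-g_0, \]
i.e. $g_0\leq h_{k+1}-h_k=g_k$. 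As $\bfg$ is nonincreasing, this forces $g_0=g_1=\dots=g_k$. It remains to propagate the equality up to index $n-1$: I invoke Lemma~\ref{lem:technical} with the same $X,x,y$ and $l=n-1$. Its hypotheses $|X\cup\set{x,y}|=k+2\geq 3$, the comparability of $P_x,P_y$, $g_1=g_k$, and $g_{n-1}>0$ all hold, and since $X\cap\set{x,y}=\emptyset$ the alternative $g_0=g_1$ is supplied by the previous step. The lemma gives $g_1=g_{n-1}$, and with $g_0=g_1$ and monotonicity I conclude $g_0=g_1=\dots=g_{n-1}>g_n=0$.

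The main obstacle is the middle step: selecting the basic set $X'=X\cup\set{y}$ together with a vertex vector carrying $w_y=g_0$ so that Lemma~\ref{lem:one:in:supp} applies precisely (which requires $g_{|X'|}>0$, guaranteed by the hypothesis $|X|\leq n-2$), and then squeezing the sharp inequality $g_0\leq g_k$ out of $\V\leq\W'$ via the fact that $\supp{\V}$ collapses onto the minimal set $X$. Once $g_0=\dots=g_k$ is in hand, the remainder is a routine verification of the hypotheses of Lemma~\ref{lem:technical}.
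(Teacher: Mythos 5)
Your proposal is correct and follows essentially the same route as the paper's own proof: the degenerate case $g_0=1$, the vertex vector with basic set $X\cup\set{y}$ and $w_y=g_0$, the application of Lemma~\ref{lem:one:in:supp} to collapse $\supp{\V}$ onto the minimal set $X$, the inequality $g_0\leq g_k$, and the final appeal to Lemma~\ref{lem:technical} with $l=n-1$. You are in fact slightly more explicit than the paper in verifying the hypothesis $g_{k+1}>0$ needed for Lemma~\ref{lem:one:in:supp}.
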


\begin{proof} 
If $g_0=1$, then let us observe that 
\[1 = h_1 = g_0 \geq g_1 \geq \dots \geq g_{n-1} \geq 1.\]
Hence $g_0=g_1 = \dots = g_{n-1}>g_n=0$.

Thus we assume $g_0\geq 2$. If the blocks $P_x$ and $P_y$ are hierarchically comparable, 
then one can assume without loss of generality that $P_y\preccurlyeq P_x$. Let us 
consider a vertex vector $\W$ with basic set $X\cup\set{y}$ such that $w_y = g_0$. 
Obviously, by Lemma \ref{lem:supp:property} (5) we have $\W\in \G$. Then the vector 
$\W':=\W-\E{y}+\E{x}$ belongs to $\G$. 
By Lemma \ref{lem:supp:property} (4) there exists a minimal authorized vector $\V$ such 
that $\V \leq \W'$, $\V\in\B(\Pol, \supp \V)$ and $\supp \V \in \Delta$. 
By Lemma \ref{lem:one:in:supp} we have $\supp{\V}\seq X$, but $X$ is minimal in 
$\Delta$, so  $\supp{\V}= X$. Thus we have 
\[h_k = |\V| \leq |\W'_X| = |\W_X| = \sum_{i=1}^k g_i=  h_{k+1} -g_0,\]
so $g_0 \leq h_{k+1} -h_k = g_k$. The sequence $\bfg$ is nonincreasing, so  
$g_0=g_1 = \dots = g_k$. Thus we have shown that $g_1 = \ldots = g_k$. To complete the 
proof it is enough to apply Lemma \ref{lem:technical}, assuming $l=n-1$. 
\end{proof}

\begin{prop}
\label{prop:one:offmin}
Let $\bfg =(g_i)_{i\in I_m}$ be the increment sequence of a uniform polymatroid $\Pol$ 
and let $\GPPD$. Let us assume  $n:=\eta(\bfg)\geq 3$. If there are $X\in \min \Delta$ 
with $2\leq |X|\leq n-1$ and $x\in X$ and $y \in J \setminus X$ such that the blocks 
$P_x$ and $P_y$ are hierarchically comparable  in the access structure $\G$, then $g_1 = 
\dots = g_{n-1}>g_n=0$.
\end{prop}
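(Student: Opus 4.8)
The plan is to pin down the direction of the comparability, then to prove the single equality $g_1=g_k$ (with $k:=|X|$), and finally to propagate it to all of $g_1,\dots,g_{n-1}$ via Lemma~\ref{lem:technical}. For the direction, note that since $x\in X\in\min\Delta$ and $k=|X|\leq n-1<n=\eta(\bfg)$, the definition of $\eta$ gives $g_k>0$ and $1\leq k\leq m-1$; hence Proposition~\ref{prop:not:inferior}, applied to the element $x$, shows that $P_x$ is not hierarchically inferior or equivalent to any block different from it, so $P_x\not\preccurlyeq P_y$. As $P_x$ and $P_y$ are assumed comparable, this forces $P_y\preccurlyeq P_x$.

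Next I would establish $g_1=g_k$. Take the vertex vector $\W$ with basic set $Y:=X\cup\set{y}$ coming from a bijection $\sigma\colon Y\to\set{0,1,\dots,k}$ with $\sigma(x)=0$ and $\sigma(y)=1$; thus $w_x=g_0$, $w_y=g_1>0$, and the $k-1$ elements of $X\setminus\set{x}$ carry the weights $g_2,\dots,g_k$. Since $k+1\leq n$ we have $g_k>0$, so $\supp{\W}=Y$, and $X\subseteq Y$ gives $Y\in\Delta$; hence $\W\in\G$ by Lemma~\ref{lem:supp:property}(5) and $\W\in\B(\Pol,Y)$ by Lemma~\ref{lem:vertex2}. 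Because $w_y\geq 1$, $w_x=g_0<|P_x|$ and $P_y\preccurlyeq P_x$, the vector $\W':=\W-\E{y}+\E{x}$ again lies in $\G$. By Lemma~\ref{lem:supp:property}(4) there is $\V\in\Gm$ with $\V\leq\W'$ and $\V\in\B(\Pol,\supp{\V})$. Since $x,y\in Y$, $w_x=g_0$ and $w_y\neq 0$, Lemma~\ref{lem:both:in:supp} gives $y\notin\supp{\V}$, so $\supp{\V}\subseteq X$; minimality of $X$ in $\Delta$ then forces $\supp{\V}=X$ and $|\V|=h(X)=h_k$. The decisive point is the base inequality $v_x\leq h(\set{x})=g_0=w_x$: together with $v_t\leq w'_t=w_t$ for $t\in X\setminus\set{x}$ it yields $|\V|\leq|\W_X|=h_{k+1}-g_1$. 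Comparing with $|\V|=h_k=h_{k+1}-g_k$ we get $g_1\leq g_k$, and since $\bfg$ is nonincreasing, $g_1=g_2=\dots=g_k$.

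Finally I would invoke Lemma~\ref{lem:technical}. Its hypotheses are now met: $X\in\min\Delta$, the blocks $P_x,P_y$ are comparable, $|X\cup\set{x,y}|=k+1\geq 3$, the equality $g_1=g_k$ is known, $X\cap\set{x,y}\neq\emptyset$ (as $x\in X$), and $g_{n-1}>0$ because $n-1<n=\eta(\bfg)$. Taking $l=n-1$ yields $g_1=g_{n-1}$, so the monotonicity of $\bfg$ collapses the chain to $g_1=g_2=\dots=g_{n-1}$; combined with $g_{n-1}>0=g_n$ this is precisely the assertion.

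The main obstacle is the step $g_1=g_k$, and its delicate ingredient is the use of the polymatroid base constraint $v_x\leq g_0$ rather than the weaker bound $v_x\leq w'_x=g_0+1$ that would follow merely from $\V\leq\W'$; it is exactly this sharpening that improves $g_1\leq g_k+1$ to $g_1\leq g_k$. This is why the argument is built around Lemma~\ref{lem:both:in:supp} (whose hypothesis $w_x=g_0$ supplies precisely this bound) rather than Lemma~\ref{lem:one:in:supp}, which drove the companion Proposition~\ref{prop:two:offmin}. It also explains why the present conclusion omits $g_0$: since the superior block $x$ lies \emph{inside} the minimal set $X$, one loses the extra slack that produced the stronger equality $g_0=g_1$ in the case where both $x$ and $y$ lay outside $X$.
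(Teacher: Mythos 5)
Your proposal is correct and follows essentially the same route as the paper's proof: the same vertex vector with $w_x=g_0$, $w_y=g_1$ on basic set $X\cup\set{y}$, the same use of Lemma \ref{lem:both:in:supp} to force $\supp{\V}=X$, the same inequality $h_k\leq h_{k+1}-g_1$ giving $g_1=g_k$, and the same final appeal to Lemma \ref{lem:technical} with $l=n-1$. Your explicit remark that the bound $v_x\leq h(\set{x})=g_0$ comes from $\V\in\B(\Pol,X)$ (rather than from $\V\leq\W'$) is a useful clarification of a step the paper leaves implicit.
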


\begin{proof}
If the blocks $P_x$ and $P_y$ are hierarchically comparable, then it follows from 
Proposition \ref{prop:not:inferior} that $P_y\preccurlyeq P_x$. Let us consider a vertex 
vector $\W$ with basic set $X \cup \set{y}$ such that $w_x = g_0$ and $w_y = g_1$. 
Obviously, $\W\in \G$ by Lemma \ref{lem:supp:property} (5).  Then also the vector 
$\W':=\W-\E{y}+\E{x}$ belongs to $\G$ and ${\supp{\W'} \seq X\cup\set{y}}$. Hence by 
Lemma \ref{lem:supp:property} (4) there is a minimal authorized vector $\V$, such that 
$\V \leq \W'$, $\V \in \B(\Pol, \supp \V)$ and $Y:=\supp \V \in \Delta$. Let us observe 
$Y  \subseteq \supp{\W'} \seq X \cup\set{y}$. 

By Lemma \ref{lem:both:in:supp} we have $y\notin Y$ that shows $Y\seq X$, but $X$ is 
minimal in $\Delta$, so $Y=X$. Thus we have 
$$h_k=|\V|\leq g_0+\sum_{i=2}^k g_i=h_{k+1}-g_1$$
where $k:=|X|$. Hence $g_1\leq h_{k+1}-h_k=g_k$ and $g_1=g_k$ as the sequence $\bfg$ is 
nonincreasing. To complete the proof it is enough to apply Lemma \ref{lem:technical}, 
assuming $l=n-1$.
\end{proof}

\begin{cor}
\label{cor:n=m}
If there are $x,y\in J$ such that $P_x$ and $P_y$ are hierarchically comparable  and 
$3\leq |X\cup\set{x,y}|\leq\eta(\bfg)$ for a certain $X\in\min\Delta$, then 
$g_1=g_{m-1}$. Moreover, if $X\cap \set{x,y}=\emptyset$, then $g_0 = \dots =g_{m-1}$.
\end{cor}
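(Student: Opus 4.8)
The plan is to reduce to Propositions \ref{prop:two:offmin} and \ref{prop:one:offmin} by a short case analysis on $X\cap\set{x,y}$, and then to propagate the resulting constant run of $\bfg$ up to the index $m-1$ through Lemma \ref{lem:technical}. Put $n:=\eta(\bfg)$; the bounds $3\le|X\cup\set{x,y}|\le n$ give $n\ge 3$. First I would discard the case $\set{x,y}\seq X$: for $X\in\min\Delta$ and two distinct elements of $X$, Proposition \ref{prop:not:comparable} asserts that the corresponding blocks are hierarchically independent, contradicting the comparability of $P_x$ and $P_y$. Hence $|X\cap\set{x,y}|\le 1$.

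If $X\cap\set{x,y}=\emptyset$, then $|X\cup\set{x,y}|=|X|+2$, so $1\le|X|\le n-2$ with $x,y\in J\setminus X$, and Proposition \ref{prop:two:offmin} gives $g_0=g_1=\dots=g_{n-1}>g_n=0$. If $|X\cap\set{x,y}|=1$, relabel so that $x\in X$ and $y\in J\setminus X$; then $|X\cup\set{x,y}|=|X|+1$ forces $2\le|X|\le n-1$, and Proposition \ref{prop:one:offmin} gives $g_1=\dots=g_{n-1}>g_n=0$. In either case, writing $k:=|X|$, I have assembled exactly the hypotheses Lemma \ref{lem:technical} needs: the equality $g_1=g_k$ (the sequence is constant on $\set{1,\dots,n-1}\supseteq\set{1,\dots,k}$), and its side condition, which is $g_0=g_1$ in the disjoint case and $X\cap\set{x,y}\ne\emptyset$ in the overlapping case.

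It then remains to invoke Lemma \ref{lem:technical} with $l=m-1$, which yields $g_1=g_{m-1}$; in the disjoint case the already-established $g_0=g_1$ upgrades this to $g_0=\dots=g_{m-1}$. Here lies the only subtle point, and the reason the statement is in substance the assertion $\eta(\bfg)=m$: Lemma \ref{lem:technical} may be applied at $l=m-1$ only after one knows $g_{m-1}>0$, whereas the two Propositions merely guarantee the run up to $g_{n-1}$ and leave open an early drop $g_{n-1}>g_n=0$ with $n<m$. I therefore expect the main obstacle to be the exclusion of $n<m$, which I would handle by contradiction. If $n<m$, then $g_n=0$, so by Lemma \ref{lem:head}(1) every subset of $J$ of size at least $n$ lies in $\Delta$, while the rank function $h$ is affine of slope $g_1$ on the sets of size at most $n$. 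Feeding the comparability relation \eqref{eq:preceq} to an appropriate vertex vector of support size $n$ (chosen, as in the proofs of Propositions \ref{prop:two:offmin} and \ref{prop:one:offmin}, so that $y$ carries the largest value and $x$ can be swapped in) and tracking a minimal vector below the swapped vector via Lemma \ref{lem:supp:property}(4), the aim is to produce two sets $A,B\in\Delta$ with $\emptyset\ne A\cap B\notin\Delta$ and $|A\cup B|\le n$. For such a pair the submodular inequality of Lemma \ref{lem:csirmaz}(2) collapses to an equality, because $h$ has no bend below the index $n$; this contradicts the strict inequality demanded by compatibility. The contradiction forces $n=m$, whence $g_{m-1}=g_{n-1}>0$ and the conclusion follows. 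The genuinely delicate part is producing the pair $A,B$ uniformly, in particular for $g_1\ge 2$, where $\min\G$ contains not only indicator-type vectors but also more spread-out ones, so some care is needed to land on a configuration that actually witnesses the submodular equality.
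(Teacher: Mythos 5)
Your reduction to Propositions \ref{prop:two:offmin} and \ref{prop:one:offmin} is exactly the paper's first step, and you have correctly isolated the real content of the corollary: everything hinges on showing $\eta(\bfg)=m$, since the two propositions only deliver the constant run up to $g_{n-1}$ with $n:=\eta(\bfg)$. (Once $n=m$ is known, the detour through Lemma \ref{lem:technical} at $l=m-1$ is superfluous: the propositions already give $g_1=g_{n-1}=g_{m-1}$.) The problem is that you do not actually prove $n=m$; you sketch a strategy and explicitly concede that producing the pair $A,B$ is unresolved. That concession is warranted, because the most natural candidates fail. For instance, take $A=X$ and $B=Z'$, where $Z'$ is a set of size $n$ containing $x,y$ and omitting one element $z$ of $X$; then $Z'\in\Delta$ by Lemma \ref{lem:head}(1) and $A\cap B=X\setminus\set{z}\notin\Delta$, but $|A\cup B|=n+1$ and, using $h_{n+1}=h_n$, one computes $h(A)+h(B)-h(A\cap B)-h(A\cup B)=g_1>0$, so condition (2) of the Csirmaz Lemma is satisfied and no contradiction arises. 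The submodular-equality mechanism you describe only bites between two \emph{distinct minimal} sets of $\Delta$ whose union has at most $n$ elements --- that is precisely Lemma \ref{lem:one:minimal} --- and the missing idea is how to manufacture a second minimal set different from $X$.

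The paper does this combinatorially, without vertex vectors or minimal authorized vectors. Assuming $n<m$, choose $Z\seq J$ with $|Z|=n+1$ and $X\cup\set{x,y}\seq Z$, pick $z\in X\setminus\set{x,y}$ (nonempty in both of your cases) and set $Z'=Z\setminus\set{z}$. Then $Z'\in\Delta$ by Lemma \ref{lem:head}(1), but $Z'$ cannot be minimal because it contains both $x$ and $y$ (Proposition \ref{prop:not:comparable}); hence some $Y\in\min\Delta$ satisfies $Y\subsetneq Z'$. Proposition \ref{prop:not:inferior} excludes $y$ from $Y$, so $X\cup Y\seq Z\setminus\set{y}$ and $|X\cup Y|\leq n$, and Lemma \ref{lem:one:minimal} (invoking $g_0=g_1$ when $|X|=1$) forces $X=Y\seq Z'$ --- impossible, since $z\in X$ but $z\notin Z'$. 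This contradiction yields $n=m$ and finishes the proof. Without this construction, or an equivalent one, your argument is incomplete at its decisive step.
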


\begin{proof}
Let us write $n:=\eta(\bfg)$. Assuming with no loss of generality that 
$P_y\preccurlyeq P_x$ we obtain that $\set{x,y}$ is not contained in $X$, by Proposition 
\ref{prop:not:comparable}, so $|X|\leq n-1$. Applying Proposition 
\ref{prop:not:inferior} yields $y\notin X$. If $x\in X$, then $2\leq |X|\leq n-1$ and 
applying Proposition \ref{prop:one:offmin} yields $g_1=g_{n-1}>g_n=0$. If $x\notin X$, 
in particular $|X|=1$, then applying Proposition \ref{prop:two:offmin} yields 
$g_0=g_1=g_{n-1}>g_n=0$.

Suppose, contrary to our claim that $n<m$. Then there is a subset $Z\seq J$ such that 
$|Z|=n+1$ and $X\cup \set{x,y}\seq Z$. Let us choose $z\in X\setminus\set{x,y}$ and 
denote $Z'=Z\setminus \set{z}$. Lemma \ref{lem:head} (1) implies that the set $Z'$ 
belongs to $\Delta$ but it is not minimal as $x,y\in Z'$. 
So there is $Y\in \min\Delta$ such that $Y\subsetneq Z'$. Applying again Proposition 
\ref{prop:not:inferior} we get $y\notin Y$, so $X\cup Y\seq Z\setminus \set{y}$, thus 
$|X\cup Y|\leq n$. If $|X|>2$, then we can apply Lemma \ref{lem:one:minimal} to get 
$X=Y\seq Z'$, which is a contradiction as $z\in X$ but $z\notin Z'$. If $|X|=1$, then  
$X\cap\set{x,y}=\emptyset$ and by Proposition \ref{prop:two:offmin} we have $g_0=g_1$ 
and applying again Lemma \ref{lem:one:minimal} yields  $X=Y\seq Z'$, a contradiction as 
before. This completes the proof.
\end{proof}

\begin{cor}
\label{cor:equiv}
Any multipartite access structures determined by uniform polymatroid $\Pol=(J,h,\bfg)$ does not admit hierarchically equivalent blocks unless $\eta(\bfg)=1$ or 
$g_0 = \dots =g_{m-1}$.
\end{cor}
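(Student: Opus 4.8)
The plan is to reduce the statement to Corollary \ref{cor:n=m}. Suppose $P_x$ and $P_y$ are hierarchically equivalent with $x \neq y$; then $P_x \preccurlyeq P_y$ and $P_y \preccurlyeq P_x$, so in particular $P_x$ and $P_y$ are hierarchically comparable. Writing $n := \eta(\bfg)$, I note that $n = 1$ is the first exceptional alternative, so I would assume $n \geq 2$ and try to prove $g_0 = \dots = g_{m-1}$.

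The decisive step I would take is to work with a set containing \emph{both} $x$ and $y$. First I would pick $S \subseteq J$ with $x, y \in S$ and $|S| = n$, which is possible since $2 \leq n \leq m$. Because $|S| = n = \eta(\bfg)$, Lemma \ref{lem:head}(1) gives $S \in \Delta$. Since $P_x$ and $P_y$ are comparable, Proposition \ref{prop:not:comparable} forbids any minimal set from containing both $x$ and $y$, so $S \notin \min\Delta$; as $\Delta$ is monotone increasing there is then $Z \in \min\Delta$ with $Z \subsetneq S$. Here $|Z| \leq n - 1$, whence $g_{|Z|} > 0$ and $|Z| \leq m-1$, so Proposition \ref{prop:not:inferior} --- used once via $P_x \preccurlyeq P_y$ and once via $P_y \preccurlyeq P_x$ --- shows that neither $x$ nor $y$ belongs to $Z$. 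Thus $Z \subseteq S \setminus \set{x,y}$ and $1 \leq |Z| \leq n-2$.

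The proof would then finish by cases on $n$. If $n = 2$ the bounds $1 \leq |Z| \leq 0$ are contradictory, so equivalent blocks cannot exist and the implication holds vacuously. If $n \geq 3$, then $Z \in \min\Delta$ is disjoint from $\set{x,y}$ and satisfies $3 \leq |Z \cup \set{x,y}| = |Z| + 2 \leq n = \eta(\bfg)$; applying the disjoint case of Corollary \ref{cor:n=m} to the comparable pair $P_x, P_y$ together with $Z$ delivers precisely $g_0 = \dots = g_{m-1}$.

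The part I expect to require the most care is the choice of $S$ rather than any computation. The naive attempt would take $Z$ to be a minimal set of least cardinality $\mu(\Delta)$, which is automatically disjoint from $\set{x,y}$ by Proposition \ref{prop:not:inferior}; but this stalls exactly when $\mu(\Delta) = \eta(\bfg) - 1$, since then $|Z \cup \set{x,y}| = \eta(\bfg) + 1$ exceeds the range allowed by Corollary \ref{cor:n=m} and neither Proposition \ref{prop:two:offmin} nor Proposition \ref{prop:one:offmin} applies. Forcing $x$ and $y$ simultaneously into the size-$\eta(\bfg)$ set $S$ and only afterwards passing to a minimal subset is what guarantees a disjoint minimal set of size at most $\eta(\bfg) - 2$, and this is the manoeuvre that sidesteps the troublesome boundary case $\mu(\Delta) = \eta(\bfg) - 1$.
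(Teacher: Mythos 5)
Your proposal is correct and follows essentially the same route as the paper's own proof: choose a set of size $\eta(\bfg)$ containing both $x$ and $y$, use Lemma \ref{lem:head}(1) and Proposition \ref{prop:not:comparable} to find a minimal set strictly inside it, exclude $x$ and $y$ from that minimal set via Proposition \ref{prop:not:inferior}, dispose of $n=2$ by contradiction, and invoke the disjoint case of Corollary \ref{cor:n=m}. All the individual steps check out against the cited statements, so there is nothing to correct.
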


\begin{proof}
It is shown in Example \ref{ex:threshold}, that all blocks are hierarchically equivalent 
in any access structure determined by a uniform polymatroid with $\eta(\bfg)=1$. Let 
$n:=\eta(\bfg)\geq 2$ and suppose that there are $x,y \in J$ such that $P_x$ and $P_y$ 
are hierarchically equivalent. Let us consider a subset $X\seq J$ such that $x,y\in X$ 
and $|X|=n$. Lemma \ref{lem:head} (1) and Proposition \ref{prop:not:comparable} imply 
that $X\in \Delta$ but $X$ is not minimal, so there is $Y\in\min\Delta$ such that 
$Y\subsetneq X$. By Proposition \ref{prop:not:inferior} $x,y\notin Y$. If $n=2$, then 
$Y=\emptyset$, which is a contradiction. Hence we get $n\geq 3$ and $3\leq |Y\cup 
\set{x,y}|\leq n$ and applying Corollary \ref{cor:n=m} yields $g_0=g_{m-1}$. 
\end{proof}

\section{Hierarchical preorder determined by access structure}  
\label{sec:hier:pord}

In this section we shall present several results on hierarchical orders induced by 
access structures determined by uniform polymatroids with ground set $J$ and monotone 
increasing families of subsets of $J$. It is worth noticing that such construction of an 
access structure is only possible if the monotone increasing  family is compatible with 
the given polymatroid. Let us recall that an access structure is called compartmented 
if every two different blocks are hierarchically independent. If there is at least one 
pair of blocks of participants which are hierarchically comparable, then the access 
structure is called weakly hierarchical. If weakly hierarchical access structure does 
not admit hierarchically equivalent blocks, then it is referred to as hierarchical. 

\begin{thm}
\label{thm:connected}
Let $\GPPD$ and let $\bfg =(g_i)_{i\in I_m}$ be the increment sequence of a uniform 
polymatroid $\Pol$. If $g_0>g_{m-1}$, then the access structure $\GPPD$ is connected.
\end{thm}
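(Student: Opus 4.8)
\section*{Proof plan for Theorem \ref{thm:connected}}

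The plan is to prove connectivity by showing that no block is redundant: an access structure is connected exactly when every participant lies in some minimal authorized set, which for a $\Pi$-partite structure means that for each $x\in J$ there is a vector $\V\in\Gm$ with $v_x>0$ (recall that $P_x$ is redundant if and only if $v_x=0$ for all $\V\in\min\G$). So I fix $x\in J$ and produce such a minimal vector.

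First I would dispose of the case where $x$ belongs to some $X\in\min\Delta$. Any $\W\in\B(\Pol,X)$ has $w_x\geq g_{|X|-1}>0$ by Lemma \ref{lem:min:coeffic}(1) together with Lemma \ref{lem:head}(2), and such a $\W$ is minimal: if $\V\in\Gm$ with $\V\leq\W$, then by Lemma \ref{lem:supp:property} we have $\supp{\V}\in\Delta$ and $\supp{\V}\seq X$, so minimality of $X$ forces $\supp{\V}=X$ and then $\V=\W$ on comparing moduli. Hence $\W\in\Gm$ witnesses $v_x>0$.

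The main case is $x$ lying in no minimal set of $\Delta$. Let $s$ be the least index with $g_s<g_0$, so $g_0=\dots=g_{s-1}>g_s$; the hypothesis $g_0>g_{m-1}$ is equivalent to $s\leq m-1$. I would choose a set $A\seq J\setminus\set{x}$ with $|A|=s-1$ and $A\notin\Delta$, and take the vertex vector $\W$ on basic set $J$ given by a bijection $\sigma\colon J\to\set{0,\dots,m-1}$ with $\sigma(x)=0$ and $\sigma(A)=\set{1,\dots,s-1}$. Then $w_x=g_0>0$; moreover $\supp{\W}$ has $\eta(\bfg)$ elements, so $\supp{\W}\in\Delta$ by Lemma \ref{lem:head}(1) and $\W\in\G$ by Lemma \ref{lem:supp:property}(5). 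Picking $\V\in\Gm$ with $\V\leq\W$ (Lemma \ref{lem:supp:property}(4)), I claim $x\in\supp{\V}$. Otherwise, with $Y'=\supp{\V}$ and $j=|Y'|$, the relations $|\V|=h_j$ and $\V\leq\W$ give $h_j\leq\sum_{z\in Y'}g_{\sigma(z)}\leq g_1+\dots+g_j$, i.e. $g_0\leq g_j$; hence $g_0=g_j$, $j\leq s-1$, and the squeeze forces every $z\in Y'$ to satisfy $g_{\sigma(z)}=g_0$, i.e. $\sigma(z)\in\set{1,\dots,s-1}$. Thus $Y'\seq A$, so $Y'\notin\Delta$, contradicting $\supp{\V}\in\Delta$. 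Therefore $x\in\supp{\V}$ and $v_x>0$.

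The crux, and the step I expect to be the main obstacle, is producing $A\notin\Delta$ when $s\geq2$ (for $s=1$ one takes $A=\emptyset$). Here $g_0=g_1$, so Lemma \ref{lem:one:minimal} applies with $n=s$: any two minimal sets whose union has at most $s$ elements coincide. If every $(s-1)$-subset of $J\setminus\set{x}$ lay in $\Delta$, each would contain a (necessarily unique) minimal set, and two such subsets differing in a single element have union of size $s$, forcing their minimal sets to be equal. Since the graph on $(s-1)$-subsets of the $(m-1)$-element set $J\setminus\set{x}$ with single-element swaps as edges is connected for $2\leq s\leq m-1$, all these subsets would share one common nonempty minimal set $X^\ast$; but the intersection of all $(s-1)$-subsets of a set of size $m-1>s-1$ is empty, giving $X^\ast=\emptyset$, a contradiction. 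Hence a suitable $A\notin\Delta$ exists, and the construction above completes the proof.
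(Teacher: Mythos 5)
Your proof is correct, and while it shares the paper's outline (reduce connectivity to the absence of redundant blocks, dispose of the case $x\in X$ for some $X\in\min\Delta$ exactly as the paper does, then in the remaining case exhibit a vertex vector $\W$ with $w_x=g_0$ and show every minimal $\V\leq\W$ has $v_x>0$), the engine of the main case is genuinely different. The paper anchors its construction on a minimal set $X\in\min\Delta$ of size $\mu(\Delta)$: it takes a vertex vector on a small set $Y\supseteq X\cup\set{x}$ of size $\max\set{\mu(\Delta),l}+1$, assigns the \emph{small} increment $g_l$ to some $y\in X$, uses Lemma \ref{lem:one:minimal} to pin down that the minimal set of $\Delta$ inside $\supp{\V}$ is exactly $X$, and then gets a contradiction either from $g_l=0$ or from a modulus inequality exploiting $g_0-g_l>0$. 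You instead take the vertex vector on all of $J$ and concentrate the \emph{large} increments $g_1=\dots=g_{s-1}=g_0$ on a set $A\notin\Delta$, so that $x\notin\supp{\V}$ would trap $\supp{\V}$ inside $A$, contradicting $\supp{\V}\in\Delta$ directly. The price is the extra existence claim that some $(s-1)$-subset of $J\setminus\set{x}$ avoids $\Delta$; your argument for it (uniqueness of the minimal set inside each such subset via Lemma \ref{lem:one:minimal} with $g_0=g_1$, propagation of that minimal set along the connected single-swap graph on $(s-1)$-subsets, and the empty intersection of all $(s-1)$-subsets of an $(m-1)$-set when $s\leq m-1$) is sound, and it is a self-contained combinatorial step with no counterpart in the paper. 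The paper's route avoids needing any such existence lemma because a minimal set of $\Delta$ is guaranteed to exist; your route buys a cleaner final contradiction (a membership statement rather than an inequality on moduli) and no case split on the value of $g_l$.
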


\begin{proof}
Given $x\in J$, we want to show that there is $\W\in\min \G$ such that $w_x\neq 0$. If 
there is $X\in \min\Delta, \ x\in X$ and $i:=|X|$, then $g_{i-1}>0$ by Lemma 
\ref{lem:head} (2). It is easy to see that any vertex vector $\W$ with basic set $X$  
belongs to $\min\Gamma$ and $w_x\neq 0$. Now we assume that $x\notin X$ for all 
$X\in \min\Delta$. Let us denote $l:=\min \set{i\in I_m\ :\ g_0>g_i}$. By assumption 
$l\leq m-1$. Let us take $X\in\min\Delta$ such that $k:=|X|=\mu(\Delta)$ and consider 
$Y\seq J$ such that $|Y|=\max\set{k,l}+1$ and $\set{x}\cup X\seq Y$. Let 
$\W$ be a vertex vector with basic set $Y$ such that $w_x=g_0$ and $w_y=g_{l}$ for a 
certain $y \in X$. Lemma \ref{lem:supp:property} (5) implies that $\W\in \Gamma$ as 
$Y\in \Delta$. Thus there is $\V\in\min\Gamma$ such that $\V\leq \W$. 
Since $\supp{\V}\in\Delta$, there is $W\in \min\Delta$ such that $W\seq\supp{\V}$. By 
assumption $x\notin W$, so $W\seq Y\setminus \set{x}$. 

It turns out that $W=X$. Indeed, if $k\geq l$, then $Y\setminus\set{x}=X$ and by the 
minimality of $X$ in $\Delta$ we have $W=X$. For the case $k < l$ we have $l\geq 2$, 
$g_0=g_1$ and $X\cup W\seq Y\setminus \set{x}$, thus $|X\cup W|\leq l$ and consequently 
$X=W$ by Lemma \ref{lem:one:minimal}. 

If $g_l=0$, then $v_y\leq w_y=g_l=0$, i.e. $y\notin \supp{\V}$ which contradicts the 
fact that $y\in X= W\subseteq \supp{\V}$.

If $g_l\neq 0$ and $v_x\neq 0$, then we have the claim. Let us suppose $v_x=0$, i.e. 
$x\notin \supp{\V}$. Thus for $Z=\supp{\V}$ we have
\begin{eqnarray*}
h(Z) & = & |\V_Z| = \sum_{z\in Z} v_z\leq\sum_{z\in Z}w_z=
w_y+\sum_{z\in Z\setminus\set{y}} w_z=
\sum_{z\in (Z\cup \set{x})\setminus\set{y}} w_z - (w_x-w_y)= \\
    & = & |\W_{(Z\cup \set{x})\setminus\set{y}}|-(w_x-w_y)\leq 
    h((Z\cup \set{x}) \setminus\set{y})-(g_0-g_l)<h(Z),
\end{eqnarray*}
a contradiction as $|Z|=|(Z\cup \set{x})\setminus\set{y}|$ and $g_0-g_l>0$.
\end{proof}

This theorem shows that the access structures determined by uniform polymatroids are 
connected  except for the ones in the column F of Table \ref{tab:general}.

\begin{thm}
\label{thm:extreme:columns}
Let $\bfg =(g_i)_{i\in I_m}$ be the increment sequence of a uniform polymatroid $\Pol$ 
and let $\GPPD$. If $m\geq 3$ and $g_1>g_{m-1}>0$ and $\min \Delta \neq \set{\set{x}}$ 
for any $x\in J$, then the access structure $\G$ is compartmented.
\end{thm}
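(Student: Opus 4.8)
The plan is to argue by contradiction. I assume that $\G$ is \emph{not} compartmented, so there exist two distinct blocks $P_x,P_y$ that are hierarchically comparable, and after relabeling I may assume $P_y\preccurlyeq P_x$. My goal is then to contradict one of the hypotheses. The first observation is purely about $\bfg$: since $g_{m-1}>0$, $g_m=0$, and $\bfg$ is nonincreasing, every $g_i$ with $i\leq m-1$ is positive, so $\eta(\bfg)=m$. This is the point that makes Corollary \ref{cor:n=m} usable with the largest possible window, because the constraint $|X\cup\set{x,y}|\leq\eta(\bfg)=m$ is then automatic for every $X\seq J$.

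The heart of the argument is a dichotomy on the minimal sets of $\Delta$ (which form a nonempty family). Either some $X\in\min\Delta$ satisfies $|X\cup\set{x,y}|\geq 3$, or every minimal set $X$ satisfies $X\seq\set{x,y}$. In the first case I would invoke Corollary \ref{cor:n=m}: the comparability of $P_x,P_y$ together with $3\leq|X\cup\set{x,y}|\leq m=\eta(\bfg)$ forces $g_1=g_{m-1}$, directly contradicting $g_1>g_{m-1}$. Note that the degenerate possibility $\min\Delta=\set{J}$ lands in this case, since $m\geq 3$ gives $|J\cup\set{x,y}|=m\geq 3$. In the second case the minimal sets are confined to the list $\set{x},\set{y},\set{x,y}$; Proposition \ref{prop:not:comparable} excludes $\set{x,y}\in\min\Delta$, because that would make $P_x$ and $P_y$ hierarchically independent, contradicting comparability. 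Hence $\min\Delta\seq\set{\set{x},\set{y}}$.

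To close the second case I would eliminate the surviving options for $\min\Delta$. The hypothesis that $\min\Delta$ is not a single singleton rules out both one-element families $\set{\set{x}}$ and $\set{\set{y}}$, so the only remaining possibility is $\min\Delta=\set{\set{x},\set{y}}$. But then $\set{y}\in\min\Delta$ with $|\set{y}|=1\leq m-1$ and $g_1\geq g_{m-1}>0$, so Proposition \ref{prop:not:inferior} asserts that $P_y$ cannot be hierarchically inferior or equivalent to any block distinct from $P_y$; this contradicts $P_y\preccurlyeq P_x$ and completes the proof.

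Essentially all of the analytic work has already been absorbed into Propositions \ref{prop:not:comparable} and \ref{prop:not:inferior} and into Corollary \ref{cor:n=m}, so the remaining difficulty is organizational rather than computational. The step I expect to require the most care is verifying that the dichotomy is genuinely exhaustive and that the asymmetric Proposition \ref{prop:not:inferior} is applied to the correct (inferior) block: the whole reduction rests on the "without loss of generality" choice $P_y\preccurlyeq P_x$, which must be tracked consistently so that it is $\set{y}$ — and not $\set{x}$ — whose minimality yields the final contradiction.
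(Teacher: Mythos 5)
Your proof is correct, and it rests on the same machinery as the paper's own argument --- Propositions \ref{prop:not:comparable} and \ref{prop:not:inferior} together with Propositions \ref{prop:two:offmin} and \ref{prop:one:offmin}, which you invoke in the packaged form of Corollary \ref{cor:n=m} --- but the case decomposition is genuinely different. The paper splits on whether some $X\in\min\Delta$ avoids both $x$ and $y$, applies Proposition \ref{prop:two:offmin} in that branch and Proposition \ref{prop:one:offmin} when $x\in X$ with $|X|\geq 2$, and then dismisses the residual singleton case with the remark that otherwise some minimal set would avoid both $x$ and $y$. You instead split on whether some $X\in\min\Delta$ has $|X\cup\set{x,y}|\geq 3$; since $g_{m-1}>0$ forces $\eta(\bfg)=m$, Corollary \ref{cor:n=m} disposes of that branch in one stroke, and the complementary branch confines $\min\Delta$ to subsets of $\set{x,y}$, whence (via Proposition \ref{prop:not:comparable} and the hypothesis on $\min\Delta$) to the single possibility $\min\Delta=\set{\set{x},\set{y}}$. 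This is the one place where your route buys something concrete: that family can genuinely be compatible with $\Pol$ when $g_0>g_1$ (Lemma \ref{lem:one:minimal} only excludes two distinct singletons when $g_0=g_1$; see also row 2 of Table \ref{tab:case4}), and the paper's closing sentence passes over it --- a minimal set $Y\neq\set{x}$ must avoid $x$ but could a priori contain $y$, and ruling that out requires exactly the appeal to Proposition \ref{prop:not:inferior} that you make explicitly, applied to the inferior block $P_y$. So the proposal is sound and, on this final subcase, somewhat more careful than the printed proof.
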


\begin{proof}
Let us suppose that there are $x,y\in J$ such that the blocks $P_x$ and $P_y$ are 
hierarchically comparable. According to Proposition \ref{prop:not:comparable}, no 
minimal set in $\Delta$ can contain both $x$ and $y$, so $|X|\leq m-1$ for every 
$X\in \min\Delta$. By assumption $\eta(\bfg)=m$. If $x,y\notin X$ for a certain 
$X\in \min \Delta$, then by Proposition \ref{prop:two:offmin} we obtain 
$g_0=g_1=\ldots=g_{m-1}$, a contradiction.  

If does not exist any set $X$ in $\min \Delta$ such that $x,y\notin X$, then without 
loss of generality we can assume that $x\in X$ and $y\not\in X$ for a certain 
${X\in \min \Delta}$. If $|X|\geq 2$, then by proposition \ref{prop:one:offmin} we get 
$g_1=\ldots=g_{m-1}$, a contradiction again. If $|X|=1$, then 
$\min \Delta = \set{\set{x}}$, as otherwise both $x, y$ would be outside a certain 
minimal set in $\Delta$, but this is not the case now. 
\end{proof}

Let us notice that if  $g_{m-1}>0$, then the above theorem implies that the appearance 
of non-compartmented access structure can be expected in the first row or in the two 
last columns of Table \ref{tab:general}. 
In the next theorem we shall prove that the access structures that appear in the last 
row of Table \ref{tab:general} are compartmented excluding the cells A3, E3 and F3.

\begin{thm}
\label{thm:last:row}
Let $\bfg =(g_i)_{i\in I_m}$ be the increment sequence of a uniform polymatroid $\Pol$ 
and let $\GPPD$. If $m\geq 3$ and $g_1>g_{m-1}$ and $k:=\mu(\Delta)>1$, then the access 
structure $\Gamma$ is compartmented.
\end{thm}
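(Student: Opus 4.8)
The plan is to argue by contradiction. Suppose $\G$ is not compartmented, so there exist hierarchically comparable blocks $P_x,P_y$ with, say, $P_y\preccurlyeq P_x$. Put $n:=\eta(\bfg)$ and $k:=\mu(\Delta)$. By Proposition \ref{prop:not:comparable} no set in $\min\Delta$ contains both $x$ and $y$, while Lemma \ref{lem:head}(2) together with the assumption $k>1$ shows that every member of $\min\Delta$ has cardinality in the range $[k,n]$ with $k\geq 2$. The driving force of the argument is Corollary \ref{cor:n=m}: as soon as we produce a set $X\in\min\Delta$ with $|X\cup\{x,y\}|\leq n$ (the inequality $|X\cup\{x,y\}|\geq 3$ being automatic, since $|X|\geq 2$ and $\{x,y\}\not\subseteq X$), the corollary forces $g_1=g_{m-1}$, contradicting the hypothesis $g_1>g_{m-1}$. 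Thus the whole proof reduces to exhibiting such a minimal set.

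If $g_{m-1}>0$ then $g_i>0$ for all $i\leq m-1$, so $n=m$ and every $X\in\min\Delta$ satisfies $|X\cup\{x,y\}|\leq m=n$; this case is settled at once (it is also covered by Theorem \ref{thm:extreme:columns}, since $k>1$ excludes $\min\Delta=\{\{x\}\}$). The real content lies in the case $g_{m-1}=0$, where $2\leq n\leq m-1$. Here I would first rule out $k=n$: if $k=n$ then, by Lemma \ref{lem:head}(1), every $n$-element subset of $J$ belongs to $\Delta$ and is moreover minimal, because no proper subset can lie in $\Delta$ once $\mu(\Delta)=n$; picking an $n$-element subset containing both $x$ and $y$ then contradicts Proposition \ref{prop:not:comparable}. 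Hence $k\leq n-1$.

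When $k\leq n-2$, take the smallest $X_0\in\min\Delta$, so $|X_0|=k\leq n-2$. Since $g_{|X_0|}>0$, Proposition \ref{prop:not:inferior} (applied to the inferior block $P_y$) gives $y\notin X_0$, whence $|X_0\cup\{x,y\}|\leq k+2\leq n$ and Corollary \ref{cor:n=m} applies. The delicate case is $k=n-1$ (which forces $n\geq 3$), where the smallest minimal set has size $n-1$: Proposition \ref{prop:not:inferior} still yields $y\notin X_0$, and if $x\in X_0$ we are done because $|X_0\cup\{x,y\}|=n$, but if $x\notin X_0$ as well then $|X_0\cup\{x,y\}|=n+1$ is too large. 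The main obstacle is therefore to manufacture, out of the comparability $P_y\preccurlyeq P_x$, a minimal set of size at most $n-1$ that contains $x$.

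To overcome it I would run a vertex--vector argument. Fix $z\in X_0$ and form the vertex vector $\W$ with basic set $X_0\cup\{x,y\}$ (of cardinality $n+1$), assigning the increment $g_0$ to $x$, the vanishing increment $g_n=0$ to $z$, and the positive increments $g_1,\dots,g_{n-1}$ to the remaining $n-1$ elements; thus $w_x=g_0$, $w_y\neq 0$ and $\supp{\W}=(X_0\setminus\{z\})\cup\{x,y\}$ has size $n$. By Lemma \ref{lem:vertex2} together with Lemma \ref{lem:head}(1) we obtain $\W\in\B(\Pol,\supp{\W})\subseteq\G$, and since $w_y\geq 1$ and $w_x=g_0<|P_x|$, the relation $P_y\preccurlyeq P_x$ and \eqref{eq:preceq} give $\W':=\W-\E{y}+\E{x}\in\G$. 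By Lemma \ref{lem:supp:property}(4) there is $\V\in\min\G$ with $\V\leq\W'$ and $\V\in\B(\Pol,\supp{\V})$, and Lemma \ref{lem:both:in:supp}, applied with the basic set $\supp{\W}$ (where $w_x=g_0$ and $w_y\neq 0$), yields $y\notin\supp{\V}$, so $\supp{\V}\subseteq(X_0\setminus\{z\})\cup\{x\}$. Finally, any $Z\in\min\Delta$ with $Z\subseteq\supp{\V}$ must satisfy $x\in Z$, for otherwise $Z$ would be a proper subset of the minimal set $X_0$; hence $Z$ is a minimal set of size at most $n-1$ containing $x$ but not $y$, so $|Z\cup\{x,y\}|\leq n$ and Corollary \ref{cor:n=m} delivers the contradiction $g_1=g_{m-1}$. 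I expect the bookkeeping of this last step---arranging the increments so that simultaneously $w_x=g_0$, $w_y\neq 0$ and $|\supp{\W}|=n$, and then invoking Lemma \ref{lem:both:in:supp} correctly---to be the principal technical hurdle.
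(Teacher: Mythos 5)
Your proof is correct. The global strategy coincides with the paper's: assume $P_y\preccurlyeq P_x$ and reduce everything to Corollary~\ref{cor:n=m} by exhibiting some $Z\in\min\Delta$ with $3\leq|Z\cup\{x,y\}|\leq\eta(\bfg)$. The difference lies in the case decomposition and in the treatment of the one genuinely hard configuration. The paper splits on whether some minimum-cardinality minimal set contains $x$; when none does, it proves $g_{k+1}>0$ (hence $k+2\leq\eta(\bfg)$, so any minimum-size $X$ works) by a purely combinatorial argument: it forms $(X\setminus\{z\})\cup\{x,y\}$, notes via Lemma~\ref{lem:head}(1) that this set would lie in $\Delta$ if $g_{k+1}=0$, and derives a contradiction from the minimality of $X$ together with Propositions~\ref{prop:not:comparable} and~\ref{prop:not:inferior}. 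You instead split on $k$ versus $n:=\eta(\bfg)$ (after disposing of $n=m$ and ruling out $k=n$), and in the residual case $k=n-1$ with $x\notin X_0$ you run a vertex-vector argument: build an authorized vertex vector supported on an $n$-element set, push it through the hierarchy relation, and extract from the resulting minimal authorized vector a minimal set of $\Delta$ of size at most $n-1$ containing $x$ but not $y$. Both resolutions are valid; the paper's is lighter, since it manipulates only $\Delta$ and the rank function and in effect shows your hard case is vacuous, while yours re-uses the machinery of Lemmas~\ref{lem:vertex2}, \ref{lem:supp:property} and~\ref{lem:both:in:supp} that already drives Propositions~\ref{prop:not:comparable}--\ref{prop:two:offmin}. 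I checked the side conditions your construction needs --- the standing assumption $g_0<|P_x|$ for invoking \eqref{eq:preceq}, the bound $n+1\leq m$ so that the basic set fits inside $J$, and $g_{n-1}>0$ with $|X_0|\leq m-1$ so that Proposition~\ref{prop:not:inferior} expels $y$ from $X_0$ --- and they all hold, so the argument goes through.
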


\begin{proof}
Suppose to the contrary that $\Gamma$ is not compartmented, i.e. there are two blocks 
which are hierarchically comparable. For simplicity we assume $P_y \preccurlyeq P_x$ for 
certain $x,y\in J, \ x\neq y$. 
By  Proposition \ref{prop:not:comparable} no set in $\min \Delta$ contains both $x$ and 
$y$, in particular there is a subset of $J$ with $k$ elements which does not belong to 
$\min\Delta$. This and Lemma \ref{lem:head} (1) imply $g_k>0$. Let $n:=\eta(\bfg)$. 
Obviously $2\leq k<n\leq m$ and $g_n=0$. Proposition \ref{prop:not:inferior} implies 
$y\notin X$ for every $X\in \min \Delta$ with $|X|\leq n-1$. 

If there exists $X\in \min \Delta$ such that $|X|=k$ and $x\in X$, then 
$3\leq |X\cup \set{x,y}|=k+1\leq n$. Now we assume that $x\notin X$ for every 
$X\in \min \Delta$ with $|X|=k$. Suppose $g_{k+1}=0$. Let us fix $X\in \min\Delta$ with 
$|X|=k$ and $z\in X$. Let us consider $Z:=(X\setminus\set{z})\cup\set{x,y}$. From Lemma 
\ref{lem:head} (1) we have $Z\in \Delta$ as $|Z|=k+1$. By Proposition 
\ref{prop:not:comparable} the set $Z$ cannot be minimal as it contains $\set{x,y}$, so 
there is $Y\in \min \Delta$ such that $Y\subsetneq Z$ and hence $|Y|=k$. 
Obviously, $x,y\notin Y$ which implies $Y\seq X\setminus \set{z}$, a contradiction.
This shows that $g_{k+1}>0$, so $k+1<n$. Thus we have $4\leq |X\cup\set{x,y}|=k+2\leq n$ 
for arbitrary $X\in \min\Delta$ with $|X|=k$.

In both cases we can apply Corollary \ref{cor:n=m} which implies $g_1=g_{m-1}$, a 
contradiction. This completes the proof.
\end{proof}

The following table presents a general arrangement of multipartite access structure 
determined by monotone increasing families contained in 
$\pow{J}\setminus\set{\emptyset}$ and uniform polymatroids. The cells A1-C1 and A3 do 
not contain any objects since the suitable monotone increasing families and polymatroids 
are not compatible. A monotone increasing family which is not compatible with given 
polymatroid can occur in every cell of the table. A complete overview of hierarchical 
orders of all access structure obtained from uniform polymatroids with $m=4$ can be 
found at Table 2.

\fboxsep=8mm
\begin{table}[ht]
\caption{Hierarchical (pre)orders of access structures obtained form uniform 
polymatroids.}

\centerline{T - threshold, C - compartmented, H - hierarchical, H$^*$ - weakly 
hierarchical}
\label{tab:general}
\begin{center}

\tabcolsep=4pt
\begin{tabular}{|p{5mm}|>{\columncolor{gray!30}}p{2.5cm}|c|c|p{2.6cm}|p{2.6cm}|c|c|}
\hline
		&\cellcolor{white}   &\centering{A} &      \centering{B}    & \centering{C} & 
\centering{D}  & E & F\\ 
\hline
		&  \rule{1.6cm}{0mm} $\bfg$ &   
		\multicolumn{3}{c|}{\cellcolor{gray!30} $g_{m-1}=0$}  & 
		\multicolumn{3}{c|}{\cellcolor{gray!30} $g_{m-1}>0$}\\ 
\hhline{~>{\arrayrulecolor{gray!30}}->{\arrayrulecolor{black}}|------}
		&    &  \multicolumn{2}{c|}{\cellcolor{gray!30} $g_2=0$} &\cellcolor{gray!30} 
\centering{$g_2>0$} &\cellcolor{gray!30} \centering{$g_1>g_{m-1}$} & 
\multicolumn{2}{c|}{\cellcolor{gray!30} $g_1=g_{m-1}$} \\ 
\hhline{~>{\arrayrulecolor{gray!30}}->{\arrayrulecolor{black}}|-|-|>{\arrayrulecolor{gray!30}}-->{\arrayrulecolor{black}}--}
		&  \diag{3}{17mm}		{$\ \ \Delta$}  & \cellcolor{gray!30}   $g_1=0$ 
&\cellcolor{gray!30}  $g_1>0$  & \cellcolor{gray!30} & \cellcolor{gray!30}
&\cellcolor{gray!30} $g_0>g_1$  & \cellcolor{gray!30} $g_0=g_1$    \\
\hhline{>{\arrayrulecolor{black}}--------}\ 1 &  $\begin{array}{c} \mu(\Delta)=1\\ \left|\min\Delta\right|=1\end{array} $	& \multicolumn{3}{c|}{\centering{--}} & \multicolumn{2}{c|}{\centering{H}} &   H$^*$\\ \hhline{|>{\arrayrulecolor{black}}->{\arrayrulecolor{black}}->{\arrayrulecolor{black}}->{\arrayrulecolor{black}}->{\arrayrulecolor{black}}->{\arrayrulecolor{black}} ->{\arrayrulecolor{black}}->{\arrayrulecolor{black}}|-|}\ 2 	& \rule[-0.9cm]{0mm}{2.1cm} $\begin{array}{c} \mu(\Delta)=1\\ 
\left|\min\Delta\right|>1\end{array} $ & T & \centering{C \& H} & \centering{ 
\framebox[26mm]{\textbf{C \& H}}} & \centering{C} & H & H$^*$\\
\hline
\ 3 	& \rule[-1cm]{0mm}{2.1cm} $\ \mu(\Delta)>1\ $& -- & $\begin{array}{c}\mbox{}\\ 
\mbox{C} \\ \mbox{  }\end{array}$ & \centering{C} & \centering{C} & C \& H  & C \& H$^*$ 
 \\
\hline
\end{tabular}
\end{center}
\end{table}

To describe the hierarchical order determined by an access structure $\G$ in a partition 
$\Pi$ of the set of participants $P$ we introduce the following notations $\ord{Y}{X}$ 
and $\pord{Y}{X}$ which are defined as follows. 

\begin{defn}
Let $\Pi =(P_x)_{x\in J}$ be a partition of the set $P$ and let $Y$ and $X$ be disjoint 
subsets of $J$. The hierarchical preorder $\preccurlyeq $ in $\Pi$ is said to be of type 
$\ord{Y}{X}$ if 
$$P_y\preccurlyeq P_x  \ \ \ \ \Longleftrightarrow\ \ \ \ \left(\rule{0pt}{2.5ex} x=y \ \ \mbox{ or } \ \ (y\in Y \mbox{ and } x\in X)\right).$$
In particular, no different blocks are hierarchically equivalent, i.e. $\preccurlyeq$ is 
an order. Moreover, if $X$ or $Y$ is empty, then the order $\ord{Y}{X}$ is compartmented.
\end{defn}

\begin{defn}
Let $\Pi =(P_x)_{x\in J}$ be a partition of the set $P$ and let $X$ and $Y$ be disjoint 
subsets of $J$. The hierarchical preorder $\preccurlyeq $ in $\Pi$ is said to be of type 
$\pord{Y}{X}$ if 
$$P_y\preccurlyeq P_x \ \ \ \ \Longleftrightarrow \ \ \ \ \left(\rule{0pt}{2.5ex} x=y \ \ \mbox{ or } \ \ x,y\in Y \ \ \mbox{ or } \ \ (y\in Y \mbox{ and } x\in X) \right).$$
The preorder $\pord{Y}{X}$ is not an order (unless $|Y|\leq 1$). In particular, 
$P_x, P_y$ are hierarchically equivalent whenever $x,y\in Y$, but no different blocks 
$P_x, P_y$ with  $x,y\in J\setminus Y$ are hierarchically equivalent.
\end{defn}

One can notice that if the set $Y$ is empty, then every two blocks are hierarchically 
independent in the preorder $\pord{\emptyset}{X}$. If the set $X$ is empty, then we get  
$\pord {Y} {\emptyset}$, that means each two blocks are hierarchically equivalent (cf. 
Example \ref{ex:threshold}). Noteworthy is also the observation that if $|Y| \leq 1$, 
then $\ord{Y}{X}= \pord{Y}{X}$. The defined preorders can be presented in the form of 
the following Hasse diagrams.

\usetikzlibrary{shapes}
\begin{center}
\begin{tikzpicture}
[elem/.style={circle,fill=black!50,draw=black,thick,inner sep=2pt,minimum size=5},  
hook/.style={circle,fill=black!0,thin,inner sep=0pt}]

\node[elem,label=above: $P_{x_1}$] (B1)  at (0.1, 1.5) {} ;
\node[elem,label=above: $P_{x_2}$] (B2)  at (0.8, 1.5) {};
\node[elem,label=above: $P_{x_r}$] (B3)  at (1.9, 1.5) {};

\node[elem,label=below: $P_{y_1}$] (A1)  at (0.0, 0.0) {};
\node[elem,label=below: $P_{y_1}$] (A2)  at (0.7, 0.0) {};
\node[elem,label=below: $P_{y_s}$] (A3)  at (2.1, 0.0) {};

\node[elem,label=below: $P_{z_1}$] (C1)  at (2.7, 0.85){};
\node[elem,label=below: $P_{z_2}$] (C2)  at (3.3, 0.85) {};
\node[elem,label=below: $P_{z_t}$] (C3)  at (4.5, 0.85) {};

\node[  hook,label=below: {\large $\ldots$}] (itdD) at (1.4, 0.2)  {};	
\node[  hook,label=below: {\large $\ldots$}] (itdG) at (1.4, 1.7)  {};	
\node[  hook,label=below: {\large $\ldots$}] (itdS) at (3.9, 1)  {};	
\node[  hook,label=below: {$\underbrace{\rule{2.7cm}{0mm}}_{(P_y)_{y\in Y}}$}] 
(underbrace) at (1.0,-0.4)  {};  
\node[  hook,label=below: {$\overbrace{\rule{2.4cm}{0mm}}^{(P_x)_{x\in X}}$}] 
(overbrace) at (1.0,2.9)  {};  
\draw[->] (node cs:name=A1)-- (node cs:name=B1);
\draw[->] (node cs:name=A1)-- (node cs:name=B2);
\draw[->] (node cs:name=A1)-- (node cs:name=B3);
\draw[->] (node cs:name=A2)-- (node cs:name=B1);
\draw[->] (node cs:name=A2)-- (node cs:name=B2);
\draw[->] (node cs:name=A2)-- (node cs:name=B3);
\draw[->] (node cs:name=A3)-- (node cs:name=B1);
\draw[->] (node cs:name=A3)-- (node cs:name=B2);
\draw[->] (node cs:name=A3)-- (node cs:name=B3);
\node[  hook,label=below: {Order of the type $\ord{Y}{X}$}.] (caption1) at (2.1, -1.2)  
{};  
\node[elem,label=above: $P_{x_1}$] (B1)  at (7.1, 1.5) {};
\node[elem,label=above: $P_{x_2}$] (B2)  at (7.8, 1.5) {};
\node[elem,label=above: $P_{x_r}$] (B3)  at (8.9, 1.5) {};

\node(group) at (8.0, -0.1) [ellipse,draw] {$P_{y_1}\ldots P_{y_s}$};

\node[elem,label=below: $P_{z_1}$] (C1)  at ( 9.7, 0.85) {};
\node[elem,label=below: $P_{z_2}$] (C2)  at (10.3, 0.85) {};
\node[elem,label=below: $P_{z_t}$] (C3)  at (11.5, 0.85) {};
\node[  hook,label=below: {\large $\ldots$}] (itdG) at (8.4, 1.7)  {};	
\node[  hook,label=below: {\large $\ldots$}] (itdS) at (10.9, 1)  {};	
\node[  hook,label=below: {$\underbrace{\rule{2.7cm}{0mm}}_{(P_y)_{y\in Y}}$}] 
(underbrace) at (8.0,-0.4)  {};  
\node[  hook,label=below: {$\overbrace{\rule{2.4cm}{0mm}}^{(P_x)_{x\in X}}$}] 
(overbrace) at (8.0,2.85)  {};  
\draw[->] (node cs:name=group)-- (node cs:name=B1);
\draw[->] (node cs:name=group)-- (node cs:name=B2);
\draw[->] (node cs:name=group)-- (node cs:name=B3);
\node[  hook,label=below: {Preorder of the type $\pord{Y}{X}$.}] (caption2) at 
(9.15, -1.2)  {};  
\end{tikzpicture}
\end{center}

Now we want to describe the hierarchical orders of multipartite access structures 
determined by some special type of uniform polymatroids. In Example \ref{ex:threshold} 
we considered the case of polymatroids with the increment sequence $\bfg$ satisfying 
$\eta(\bfg)=1$. We will now deal with  polymatroids $\Pol=(J,h,\bfg)$ with 
$\eta(\bfg)=2$, i.e. $g_0 \geq g_1> g_2 = 0$. The result presented below refers to the 
column B of Table \ref{tab:general}.

\begin{thm}
\label{thm:g2:zero}
Let $\Pol= \Pol(\Pi)=(J,h,\bfg)$ be a uniform polymatroid with 
the increment sequence $\bfg =(g_i)_{i\in I_m}$ such that $\eta(\bfg)=2$.  
\begin{enumerate}
\item If $ g_0> g_1 $, then a monotone increasing family $\Delta \subseteq 
\pow{J}\setminus\set{\emptyset}$ is compatible with the polymatroid $ \Pol $ if and only 
if there is a subset ${X \subseteq J}$ such that 
${\min \Delta = \powk{1}{X} \cup \powk{2}{J\setminus X}}$.
\item If $g_0 = g_1$, then a monotone increasing family 
$\Delta \subseteq \pow{J}\setminus\set{\emptyset}$ is compatible with the polymatroid 
$\Pol$ if and only if there is a subset ${X \subseteq J}$, ${|X|\leq 1}$ such that 
$\min\Delta = \powk{1}{X}\cup \powk{2}{J \setminus X}$.
\item Let $\GPPD$ be the access structure determined by the polymatroid $\Pol$ and the 
monotone increasing family $\Delta \subseteq \pow{J}\setminus\set{\emptyset}$ such that 
$\min\Delta = \powk {1} {X} \cup \powk {2} {J  \setminus X} $ for some $ X \subseteq J$, 
then the hierarchical order induced by $\Gamma$ on the set $\Pi$ is of the type 
$\ord{J\setminus X}{X}$.
\end{enumerate}
\end{thm}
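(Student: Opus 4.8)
The plan is to handle (1) and (2) together through the common ``shape'' argument and then to make the membership test for $\G$ carry part (3). Throughout I use that $\eta(\bfg)=2$ forces $h_0=0$, $h_1=g_0$ and $h_k=g_0+g_1$ for all $k\geq 2$, with $g_0\geq g_1>0=g_2$.

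For the forward implications of (1) and (2), assume $\Delta$ is compatible and set $X=\set{x\in J:\set{x}\in\Delta}$. Since $g_2=0$, Lemma \ref{lem:head}(1) places every subset of size $\geq 2$ into $\Delta$; hence no set of size $\geq 3$ is minimal (it contains a $2$-element subset already in $\Delta$), a $2$-element set is minimal exactly when both its elements lie in $J\setminus X$, and a singleton is minimal exactly when it lies in $X$. This gives $\min\Delta=\powk{1}{X}\cup\powk{2}{J\setminus X}$ in both cases. In case (2), if $|X|\geq 2$ I would pick distinct $a,b\in X$: then $\set a,\set b\in\Delta$ while $\set a\cap\set b=\emptyset\notin\Delta$, so the second Csirmaz condition (Lemma \ref{lem:csirmaz}) forces $h_2<2h_1$, i.e. $g_0+g_1<2g_0$, contradicting $g_0=g_1$; hence $|X|\leq 1$.

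For the converse implications I would verify the two Csirmaz conditions for the family $\Delta$ with $\min\Delta=\powk{1}{X}\cup\powk{2}{J\setminus X}$. The key preliminary observation is that the sets outside $\Delta$ are exactly $\emptyset$ and the singletons $\set y$ with $y\in J\setminus X$. Condition (1) is then immediate from $g_1>0$. For condition (2) the only delicate configuration is $Z\cap W=\emptyset$ with $Z,W$ both singletons, which requires two distinct $a,b\in X$; there the required strict inequality reduces exactly to $g_0+g_1<2g_0$, i.e. to $g_1<g_0$. This holds in (1), and in (2) the configuration never arises because $|X|\leq 1$ forbids two disjoint singletons in $\Delta$. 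Every remaining configuration of condition (2) reduces to $g_1>0$.

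The heart of part (3) is the membership test
$$
\V\in\G\iff \big(\exists\,x'\in X:\ v_{x'}\geq g_0\big)\ \text{ or }\ \sum_{z\in J}\min(v_z,g_0)\geq g_0+g_1,
$$
which I would derive from the fact that $\V\in\G$ iff $\V$ dominates some base vector $\U\in\bigcup_{W\in\Delta}\B(\Pol,W)$ (Lemma \ref{lem:supp:property}(1),(4)); these base vectors are precisely $g_0\E{x'}$ for $x'\in X$ (from the singletons) together with the vectors of modulus $g_0+g_1$ and all coordinates bounded by $g_0$ (from the sets of size $\geq 2$, all of which lie in $\Delta$). The non-comparabilities then come for free: for distinct $x,y\in J\setminus X$ the set $\set{x,y}$ is minimal, so Proposition \ref{prop:not:comparable} makes $P_x,P_y$ independent; and for $y\in X$ the singleton $\set y$ is minimal with $g_1>0$, so Proposition \ref{prop:not:inferior} shows $P_y$ is inferior to no other block, ruling out every putative comparability whose lower block lies in $X$ (this also shows no two distinct blocks are equivalent, so $\preccurlyeq$ is a genuine order). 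The remaining positive relations $P_y\preccurlyeq P_x$ for $y\in J\setminus X$, $x\in X$ form the step I expect to be the main obstacle. Using \eqref{eq:preceq}, I take any $\V\in\G$ with $v_y\geq 1$ and $v_x<|P_x|$ and must show $\V'=\V-\E{y}+\E{x}\in\G$. If some $x''\in X$ already has $v_{x''}\geq g_0$, then $\V'$ still satisfies the first clause of the test, since the $x$-coordinate only grows and the other coordinates indexed by $X$ are unchanged. Otherwise $v_x<g_0$, so raising the $x$-coordinate increases $\min(v_x,g_0)$ by $1$ while lowering the $y$-coordinate decreases $\sum_{z\in J}\min(v_z,g_0)$ by at most $1$; hence the sum stays $\geq g_0+g_1$ and $\V'$ satisfies the second clause. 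Assembling all of this identifies $\preccurlyeq$ with $\ord{J\setminus X}{X}$.
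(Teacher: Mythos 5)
Your proposal is correct. Parts (1) and (2) follow essentially the same path as the paper: Lemma \ref{lem:head}(1) pins the minimal sets to sizes $1$ and $2$, the set $X$ is read off from the singletons, and the converse is a direct verification of the two Csirmaz conditions whose only delicate configuration is a pair of disjoint singletons (the paper invokes Lemma \ref{lem:one:minimal} to get $|X|\leq 1$ when $g_0=g_1$, but that lemma is itself the same Csirmaz computation you write out). Part (3) is where you genuinely diverge. The paper handles the positive relations $P_y\prec P_x$ ($y\in J\setminus X$, $x\in X$) by noting that any minimal $\W$ with $w_y\neq 0$ has $|\supp\W|\geq 2$, hence lies in $\B(\Pol,J)$ by \eqref{eq:equal_h}, and then applying the general exchange Lemma \ref{lem:onestep}; it also splits off the case $X=\emptyset$ via Theorem \ref{thm:last:row} (for $m\geq 3$). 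You instead derive a closed-form membership criterion for $\G$ --- $\V\in\G$ iff some $x'\in X$ has $v_{x'}\geq g_0$ or $\sum_z\min(v_z,g_0)\geq g_0+g_1$ --- which is correct (the base vectors of the singletons are exactly the $g_0\E{x'}$, and the union of $\B(\Pol,W)$ over $|W|\geq 2$ is exactly the vectors of modulus $g_0+g_1$ with coordinates bounded by $g_0$), and then check the exchange $\V\mapsto\V-\E{y}+\E{x}$ directly against this test. Your route buys a uniform treatment that needs no case split on $m$ or on $X=\emptyset$ and makes the structure of $\G$ completely explicit (it exhibits $\G$ as a threshold-like condition), at the cost of a derivation that is specific to $\eta(\bfg)=2$; the paper's route is less transparent here but reuses machinery (Lemma \ref{lem:onestep}) that serves it again in Theorems \ref{thm:last-but-one:column} and \ref{thm:singleton}. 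Both arguments are sound.
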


\begin{proof}
(1) and (2). ($\Rightarrow$). If the monotone increasing family $\Delta$ is compatible 
with the polymatroid $\Pol$, then Lemma \ref{lem:head} (1) and the assumptions $g_2=0$ 
shows that all subsets of $J$ with two elements belong to $\Delta$. Thus, all sets in 
$\min\Delta$ have one or two elements. Let $X \subseteq J$ denote the collection of 
those elements that form single-element minimal sets. Hence the remaining sets in $\min 
\Delta$ have two elements and do not contain any elements belonging to $X$. Therefore 
$\min\Delta = \powk{1}{X} \cup \powk{2}{J \setminus X}$. If $g_0=g_1$, then it follows 
from Lemma \ref{lem:one:minimal} that $|X|\leq 1$.

(1) and (2). ($\Leftarrow$). To show the reverse implication, consider the monotone 
increasing family ${\Delta}$ such that $\min\Delta =\powk{1}{X} \cup \powk{2}{J 
\setminus X}$ for some $X \subseteq J$. Let us recall that every set with 2 elements 
belongs to $\Delta$. It is easy to see that $h(Y)=g_0+g_1$ for all $Y\seq J$ with 
$|Y|\geq 2$. We shall apply the Csirmaz Lemma. If $W\seq Y\seq J$ such that 
$W \notin \Delta$ and $Y\in \Delta$, then $W=\emptyset$ and $|Y|\geq 1$ or $|W|=1$ and 
$|Y|\geq 2$. In the former case we have $0=h(W)<g_0\leq h(Y)$ and in the latter case 
$g_0=h(W)<g_0+g_1=h(Y)$. Similarly, if $Y,Z\in \Delta$ and $W=Y\cap Z\notin \Delta$, 
then $|W|\leq1$. If $|W|=1$, then $|Y|,|Z|\geq 2$. Hence 
$$h(W)+h(Y\cup Z)=g_0+(g_0+g_1)<(g_0+g_1)+(g_0+g_1)=h(Y)+h(Z).$$
Now we assume $W=\emptyset$, so $|Y\cup Z|\geq 2$ and $|Y|,|Z|\geq 1$. If $g_0>g_1$, then
$$h(Y\cup Z)=g_0+g_1 < g_0+g_0\leq h(Y)+h(Z).$$
In case $g_0=g_1$ we assumed that there is at most one singleton in $\Delta$, so $|Y|>1$ 
or $|Z|>1$. Hence 
$$h(Y\cup Z)=g_0+g_1 < g_0+g_0+g_1\leq h(Y)+h(Z).$$
Thus both conditions of the Csirmaz Lemma are satisfied, which completes the proof of 
(1) and (2).

(3) If $X=\emptyset$, then $\mu(\Delta)=2$.
For $m=2$ we have $\Delta=\set{J}$ and by Proposition \ref{prop:not:comparable} the 
blocks $P_x$ and $P_y$ are hierarchically independent, so $\G$ is compartmented. 
Assuming $m\geq 3$ we can apply Theorem \ref{thm:last:row} and we conclude that the 
obtained access structure is compartmented, i.e. $\ord{J}{\emptyset}$. 

We now turn to the case $X\neq \emptyset$. 
By assumption $\set{x}\in \min\Delta$ for all $x\in X$, so applying Proposition 
\ref{prop:not:inferior} we see that $P_x$ cannot be hierarchically inferior or 
equivalent to any other $P_z$. In particular, $P_x$ and $P_y$ are mutually 
hierarchically independent whenever $x,y\in X$ and $x\neq y$.
If $x,y\in J\setminus X$, then $\set{x,y}\in \min\Delta$, so $P_x$ and $P_y$ are 
hierarchically independent by Proposition \ref{prop:not:comparable}. In particular if 
$X=J$, then we get the compartmented order $\ord{\emptyset}{J}$. 

It remains to show that $P_y \prec P_x$ for $x \in X$ and $y \in J \setminus X$ with 
$\emptyset\subsetneq X\subsetneq J$. Let $\W$ be a minimal vector in $\G$ such that 
$w_y \neq 0$. If such a vector does not exist, then the block $P_y$ is redundant 
so $P_y \prec P_x$. Otherwise, applying Lemma \ref{lem:supp:property} (3) yields 
$\W\in\B(\Pol,\supp \W)$ and $\supp \W \in \Delta $. Note that 
$\set{y}\notin \min \Delta$, so $|\supp \W| \geq 2 $. According to Equation 
\eqref{eq:equal_h} we get $\W \in \B (\Pol, \supp \W) \subseteq \B (\Pol, J )$. 
Lemma \ref{lem:onestep} shows that $\W':=\W-\E{y}+\E{x} \in \B(\Pol,J)$ or there exist
a set $Y \subseteq J \setminus \set{y}$, $x \in Y$ and a vector $\V \in \B (\Pol,Y)$, 
such that $\V \leq \W'$. In the first case we have $ \W '\in \G $ by Lemma 
\ref{lem:supp:property} (1). 
If the second case occurs, then we note that $\set{x}\seq Y \in \Delta$, so from Lemma 
\ref{lem:supp:property} (1) we get $ \V \in \G $, hence $ \W '\in \G $. This proves that 
$P_y \prec P_x$. In this way we showed that the order on the set $(\Pi, \preccurlyeq)$ 
is of the type $\ord{J\setminus X}{X}$.
\end{proof}

\begin{rem}
The above theorem combined with Example \ref{ex:threshold} is strong enough to classify 
all bipartite access structure determined by uniform polymatroids with $m=2$. If 
$\eta(\bfg)=1$, then we have a threshold access structure (cf. Example 
\ref{ex:threshold}). If $\eta(\bfg)=2$, then we consider three monotone increasing 
families $\Delta_1=\set{\set{x},J}$, $\Delta_2=\set{\set{x},\set{y},J}$ and and 
$\Delta_3=\set{J}$ of subsets of $J:=\set{x,y}$. Let us note that 
$\Delta'_1=\set{\set{y},J}$ is hierarchically equivalent to $\Delta_1$ as it can be 
obtained by the permutation of $x$ and $y$. It is easy to see that $\Delta_1$ is 
compatible with every polymatroid $(J,h,\bfg)$ with $g_1>0$ and the resulting access 
structures induce on $\set{P_x,P_y}$ the order of the type $\ord{\set{y}}{\set{x}}$. 
Moreover, $\Delta_2$ is not compatible with a polymatroid such that $g_0=g_1$ but in the 
remaining cases the resulting access structures are compartmented.  
\end{rem}

The following theorem describes the hierarchy on the access structures determined by 
uniform  polymatroids with $g_0=g_1 = \dots = g_{m-1}>0$. This result corresponds to the 
access structures  that appear in the column F of Table \ref{tab:general}.

\begin{thm}
\label{thm:last:column}
Let $\Pol=(J,h,\bfg)$ be a uniform polymatroid with $m:=|J|\geq 3$ and the increment 
sequence $\bfg =(g_i)_{i\in I_m}$ such that $g_0 = g_{m-1}>0$. 

\begin{enumerate}[(1)]
\item A monotone increasing family $\Delta\subseteq \pow{J}\setminus\set{\emptyset}$ 
is compatible with the polymatroid  $\Pol$  if and only if $\min \Delta = \set{X} $ 
for a certain $\emptyset\neq X\subseteq J$. 
\item Let $\GPPD$ be the access structure determined by the polymatroid $\Pol$ and the 
monotone increasing family $\Delta \subseteq \pow{J}\setminus\set{\emptyset}$ such 
that $\min \Delta = \set{X} $ for a certain $\emptyset\neq X\subseteq J$, 
then
\begin{enumerate}[(a)]
\item[(2a)] The vector $\sum_{x \in X } g_0\E{x}$ is the only minimal authorized 
vector in the access structure $\G$.
\item[(2b)] The hierarchical preorder induced by $\Gamma$ on the set $\Pi$ is of the 
type $\pord{J\setminus X}{X}$.
\end{enumerate}
\end{enumerate}
\end{thm}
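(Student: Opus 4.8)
The first thing I would record is that the standing hypothesis $g_0=g_{m-1}>0$, combined with $\bfg$ being nonincreasing, forces $g_0=g_1=\dots=g_{m-1}>0=g_m$ and hence $h(Y)=|Y|g_0$ for every $Y\seq J$. With this in hand, part (1) is almost immediate from the cited lemmas. The backward implication is exactly Lemma \ref{lem:compatible}: if $\min\Delta=\set{X}$ then compatibility with $\Pol$ is equivalent to $g_{m-1}>0$, which holds by assumption. For the forward implication I would show that compatibility forces $\min\Delta$ to be a singleton. Taking any two sets $X,Y\in\min\Delta$ and putting $n:=m$, we have $g_1=g_{n-1}>0$ and $|X\cup Y|\leq m=n$, so Lemma \ref{lem:one:minimal} applies; since $g_0=g_1$ here, its second clause yields $X=Y$ even when both are singletons. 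Thus $\min\Delta=\set{X}$ for a single nonempty $X$.

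For (2a) the plan is to first compute $\B(\Pol,X)$ explicitly and then to show that the resulting vector lies below every basis vector contributed by $\Delta$. If $\V\in\B(\Pol,X)$, then $\supp{\V}\seq X$, the defining constraint at singletons gives $v_x\leq h(\set{x})=g_0$ for each $x$, and $|\V|=h(X)=|X|g_0$; summing the coordinate bounds forces $v_x=g_0$ for all $x\in X$, so $\B(\Pol,X)=\set{\W}$ with $\W:=\sum_{x\in X}g_0\E{x}$. Next I would note that every $Y\in\Delta$ contains $X$ by the uniqueness just established, and that for any $\V\in\B(\Pol,Y)$ Lemma \ref{lem:min:coeffic}(1) gives $v_x\geq g_{|Y|-1}=g_0$ for all $x\in Y$; hence $v_x\geq g_0=w_x$ on $X$ and trivially $v_x\geq 0=w_x$ off $X$, so $\W\leq\V$. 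Since $\W\in\B(\Pol,X)$ with $X\in\Delta$, it follows that $\W$ is the unique minimal element of $\bigcup_{Y\in\Delta}\B(\Pol,Y)=\min\G$, which is (2a).

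Part (2b) is where the actual bookkeeping lives, and I expect it to be the main obstacle. The crucial consequence of (2a) is that, in vector form, $\G=\set{\V\in\pi(\pow{P})\ :\ \V\geq\W}$, so the coordinates outside $X$ are irrelevant to membership: every block $P_b$ with $b\in J\setminus X$ has $w_b=0$ and is therefore redundant. By the preliminary remarks on redundant participants, such blocks are mutually hierarchically equivalent and every block is superior-or-equivalent to them; this supplies exactly the relations $P_b\preccurlyeq P_a$ whenever $b\in J\setminus X$, covering both the case $a,b\in J\setminus X$ and the case $b\in J\setminus X,\ a\in X$ in the definition of $\pord{J\setminus X}{X}$. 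For distinct $a,b\in X$, Proposition \ref{prop:not:comparable} gives hierarchical independence.

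The only remaining point, and the genuinely delicate one, is strictness: I must rule out $P_a\preccurlyeq P_b$ for $a\in X$ and $b\in J\setminus X$. Here I would test the vector characterization \eqref{eq:preceq} directly on $\W$, which satisfies $w_a=g_0\geq 1$ and $w_b=0<|P_b|$. If $P_a\preccurlyeq P_b$ held, then $\W-\E{a}+\E{b}$ would lie in $\G$; but its $a$-coordinate equals $g_0-1<g_0$, so it is not $\geq\W$ and hence is not authorized. This contradiction shows $P_a\not\preccurlyeq P_b$, so $P_b\prec P_a$, and the preorder is precisely $\pord{J\setminus X}{X}$. One should also check the boundary case $X=J$, where $J\setminus X=\emptyset$, there are no redundant blocks, and Proposition \ref{prop:not:comparable} makes the structure fully compartmented, matching $\pord{\emptyset}{J}$.
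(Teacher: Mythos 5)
Your proposal is correct and follows essentially the same route as the paper: part (1) via Lemma \ref{lem:compatible} and Lemma \ref{lem:one:minimal} with $n=m$, part (2a) by combining the singleton constraint $v_x\leq h_1=g_0$ with Lemma \ref{lem:min:coeffic}(1) to force all coordinates on $Y\supseteq X$ to equal $g_0$, and part (2b) from redundancy of the blocks outside $X$ together with Proposition \ref{prop:not:comparable} inside $X$. The only deviation is the strictness step $P_a\not\preccurlyeq P_b$ for $a\in X$, $b\in J\setminus X$, which the paper gets by citing Proposition \ref{prop:not:inferior} while you verify it directly from \eqref{eq:preceq} using the unique minimal vector $\W$; both are valid, and your version is self-contained given (2a).
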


\begin{proof}
(1) Since $g_0=g_{m-1}$, i.e. $\eta(\bfg)=m$ we can apply Lemma \ref{lem:one:minimal}  
which implies that if $\Delta$ is compatible with the polymatroid $\Pol$, then 
$\min\Delta$ contains only one set. To prove the reverse implication it is enough to 
apply Lemma \ref{lem:compatible}. 

(2a) We apply Lemma \ref{lem:min:coeffic} (1) for an arbitrary $Y\in \Delta$ and an 
arbitrary $\W\in \B(\Pol,Y)$. For $l:=|Y|$ we have $h_1\geq w_z\geq g_{l-1}=g_0=h_1$  
and consequently $w_z = h_1=g_0$ for every $z \in Y$. Since $X\subseteq Y$, so  
$\W\geq \sum_{x \in X } g_0\E{x}$ for every set $Y\in \Delta$ and for every vector 
$\W\in \B(\Pol, Y)$. This shows that the vector $\sum_{x \in X } g_0\E{x}$ is the only 
minimal authorized vector.

(2b) According to Proposition \ref{prop:not:comparable} the blocks indexed by the 
elements in $X$ are hierarchically independent. In particular, if $X=J$, then 
the hierarchical order on $\Pi$ induced by $\Gamma$ is of the type $\pord{\emptyset}{J}$.

Now we assume $|X|<m$. It is shown above that $\sum_{x \in X } g_0\E{x}$ is the only 
minimal authorized vector, so all the blocks $P_y$ with $y\notin X$ are redundant. In 
particular, they are mutually hierarchically equivalent and every block $P_x, \ x\in X$ 
is hierarchically superior but not equivalent to $P_y, \ y\in J\setminus X$, which 
follows from Proposition \ref{prop:not:inferior}. Moreover, all blocks in 
$\set{P_x : \ x\in X}$ are hierarchically independent by Proposition 
\ref{prop:not:comparable}. We conclude, that the hierarchical order of $\Pi$ induced 
by $\Gamma$ is of the type $\pord{J\setminus X}{X}$. 
\end{proof}

Now we shall prove a similar theorem which describes hierarchical order of access 
structures determined by uniform polymatroids with ${g_0>g_1 = \dots = g_{m-1}>0}$ and 
monotone increasing families compatible with them. This theorem describes access 
structures located in the column E of Table \ref{tab:general}.

\begin{thm}
\label{thm:last-but-one:column}
Let $\Pol=(J,h,\bfg)$ be a uniform polymatroid with the increment sequence 
$\bfg =(g_i)_{i\in I_m}$ such that $m:=|J|\geq 3$ and ${g_0>g_1 = g_{m-1}>0}$.

\begin{enumerate}[(1)]
	\item  A monotone increasing family $\Delta\subseteq \pow{J}\setminus\set{\emptyset}$ 
	is compatible with $\Pol$ if and only if $\min \Delta = \set{X} $ for a certain 
	$X\subseteq J$ or $\min \Delta = \powk{1}{J}$. 

	\item  Let $\GPPD$ be the access structure determined by the polymatroid $\Pol$ and 
	the monotone increasing family $\Delta \subseteq \pow{J}\setminus\set{\emptyset}$. 
	Then 
\begin{enumerate}[(a)]
	\item[(2a)] If $\min \Delta = \set{X}$ for a certain $\emptyset\neq X\subseteq J$, 
	then the hierarchical order induced by $\Gamma$ on $\Pi$ is of the type  
	$\ord{J\setminus X}{X}$.
	\item[(2b)] If $\min \Delta = \powk{1}{J}$, then the hierarchical order induced by 
	$\Gamma$ on $\Pi$ is of the type $\ord{\emptyset}{J}$.
\end{enumerate}
\end{enumerate}
\end{thm}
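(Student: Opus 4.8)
The plan is to handle the two claims separately: deduce the compatibility characterisation (1) from the structural lemmas on $\min\Delta$, and then read off the hierarchy (2) from the propositions of Section~\ref{sec:acc:str}, together with one genuinely new swapping argument. For part (1), the reverse implications are immediate from what is already available: if $\min\Delta=\set{X}$ then Lemma~\ref{lem:compatible} applies because $g_{m-1}>0$, and if $\min\Delta=\powk{1}{J}$ then Lemma~\ref{lem:k-compatible} with $k=1$ applies because $g_0>g_1$. For the forward implication I would first invoke Lemma~\ref{lem:one:minimal} with $n=m$ (legitimate since $g_1=g_{m-1}>0$): any two distinct members of $\min\Delta$ are then singletons. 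Hence either $\min\Delta$ is a single set, or it consists only of singletons; in the latter case I must still rule out that it is a \emph{proper} nonempty subfamily of $\powk{1}{J}$. I would do this by contradiction: if $\set{x},\set{x'}\in\min\Delta$ are distinct while $\set{x''}\notin\Delta$ for some third index (here $m\geq 3$ is used), then $Y=\set{x,x''}$ and $Z=\set{x',x''}$ lie in $\Delta$, $Y\cap Z=\set{x''}\notin\Delta$, yet $h(Y\cap Z)+h(Y\cup Z)=h_1+h_3=2h_2=h(Y)+h(Z)$, violating condition (2) of the Csirmaz Lemma~\ref{lem:csirmaz}. Thus $\min\Delta=\powk{1}{J}$.

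For part (2a), with $\emptyset\neq X\subsetneq J$, I would settle each kind of pair of blocks. The blocks indexed inside $X$ are pairwise independent by Proposition~\ref{prop:not:comparable} (as $X\in\min\Delta$), and each of them is maximal by Proposition~\ref{prop:not:inferior} (valid since $|X|\leq m-1$ forces $g_{|X|}=g_1>0$). The blocks indexed in $J\setminus X$ are pairwise independent because any comparability between two of them, together with $|X|\leq m-2$, would force $g_0=\dots=g_{m-1}$ by Proposition~\ref{prop:two:offmin}, contradicting $g_0>g_1$; the case $|X|=m-1$ is vacuous, and $X=J$ yields $\ord{\emptyset}{J}$ directly. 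It remains to prove the positive relations $P_y\preccurlyeq P_x$ for $y\in J\setminus X$ and $x\in X$, and I expect this to be the main obstacle.

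The difficulty is that, unlike in Theorem~\ref{thm:g2:zero}, the rank function here is strictly increasing, so minimal authorised vectors need not be supported on $X\cup\set{y}$ and may spread their mass over arbitrarily large supports. I would use the vector criterion \eqref{eq:preceq}, so it suffices to fix $\V\in\min\G$ with $v_y\geq 1$ and prove $\V':=\V-\E{y}+\E{x}\in\G$ (the side condition $v_x<|P_x|$ is automatic, as every coordinate is at most $g_0<|P_x|$). Writing $S=\supp{\V}\supseteq X$ with $y\in S\setminus X$, Lemma~\ref{lem:supp:property}(3) gives $\V\in\B(\Pol,S)$, and the crucial invariant is that each coordinate is $\geq g_{|S|-1}=g_1$ while $\sum_{z\in S}(v_z-g_1)=h_{|S|}-|S|g_1=g_0-g_1$, i.e. the total excess above $g_1$ equals $g_0-g_1$ regardless of $|S|$. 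Applying Lemma~\ref{lem:onestep} to the swap $y\to x$ inside $S$, either $\V'\in\B(\Pol,S)\seq\G$ and we are done, or there is a tight set $Z$ with $x\in Z\seq S\setminus\set{y}$ and $\V_Z\in\B(\Pol,Z)$. In the latter case $Z$ must carry the \emph{whole} excess, so every coordinate on $S\setminus Z$ equals $g_1$; hence $Z\cup X$ is again tight (the added indices of $X\setminus Z$ contribute no excess), giving $\V_{Z\cup X}\in\B(\Pol,Z\cup X)$ with $Z\cup X\in\Delta$ and $\V_{Z\cup X}\leq\V'$, so $\V'\in\G$. Since also $P_x\not\preccurlyeq P_y$ by Proposition~\ref{prop:not:inferior}, the relation is strict and the preorder is exactly $\ord{J\setminus X}{X}$.

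Finally, part (2b) is short: when $\min\Delta=\powk{1}{J}$ every singleton $\set{y}$ belongs to $\min\Delta$, so Proposition~\ref{prop:not:inferior} (again $g_1>0$ and $1\leq m-1$) shows that $P_y$ is not hierarchically inferior or equivalent to any other block; as this holds for every $y$, no two distinct blocks are comparable and the order is the compartmented $\ord{\emptyset}{J}$. The only place where real work is needed is the excess bookkeeping in the third paragraph, which is what replaces the ``$h$ constant on large sets'' shortcut used in Theorem~\ref{thm:g2:zero}.
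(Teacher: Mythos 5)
Your proposal is correct and follows the same overall architecture as the paper's proof: part (1) is argued identically (Lemma \ref{lem:one:minimal} with $n=m$ to reduce to singletons, the Csirmaz computation $h_1+h_3=2h_2$ to force \emph{all} singletons into $\min\Delta$, and Lemmas \ref{lem:compatible} and \ref{lem:k-compatible} for the converse), and parts (2b) and the independence claims in (2a) invoke exactly the propositions the paper uses. The one place you genuinely diverge is the heart of (2a), the verification that $\V':=\V-\E{y}+\E{x}\in\G$ for a minimal $\V$ with $v_y\geq 1$. The paper splits into two cases by Lemma \ref{lem:min:coeffic}: if $w_y=g_1$ it deletes $y$ outright via Lemma \ref{lem:min:coeffic}(2) and uses $X\subseteq\supp\W\setminus\set{y}$, and if $w_y>g_1$ it shows by a coordinate count that the ``tight set'' alternative of Lemma \ref{lem:onestep} is impossible, so $\W'\in\B(\Pol,\supp\W)$ directly. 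You instead treat both situations uniformly: you accept the tight set $Z$ produced by Lemma \ref{lem:onestep} and observe that, since the total excess $\sum_{z\in S}(v_z-g_1)=g_0-g_1$ is already exhausted inside $Z$, every coordinate on $S\setminus Z$ equals $g_1$, so $Z\cup X$ is again tight, lies in $\Delta$, and $\V_{Z\cup X}\leq\V'$. Both resolutions are sound; your excess bookkeeping avoids the case split and makes transparent \emph{why} the tight set can be repaired to one containing $X$, while the paper's version gets the same conclusion with slightly less computation by ruling the bad alternative out when $w_y>g_1$. No gaps.
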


\begin{proof}
(1) Let us assume that $\Delta$ is compatible with $\Pol$. It is enough to consider the 
case where $\Delta$ has at least two different minimal sets. From the assumption 
$g_{m-1}>0$ we have $\eta(\bfg)=m$, so applying Lemma \ref{lem:one:minimal} we conclude 
that those sets must be singletons. Let $\set{x}, \set{y}\in\min\Delta$ for some 
$x, y \in J$. Suppose that there is $z \in J$ such that $\set{z}\not\in\min\Delta$. Of 
course $\set{x,z},\set{y,z}\in\Delta$, but 
$\set{x,z}\cap(\set{y,z} =\set{z}\not\in\Delta$. 
Using the Csirmaz Lemma yields
\[ h(\set{z}) + h(\set{x,y,z}) <h(\set{x,z }) + h(\set{y, z }), \]
hence we get $h_3 - h_2 <h_2-h_1$ and consequently $g_2 <g_1$ which is a contradiction, 
so every singleton belongs to $\min \Delta$. To show the reverse implication, let us 
consider two cases:

If $\min \Delta = \set{X}$ for some $X \subseteq J$, then we refer to Lemma 
\ref{lem:compatible}.

If $\min \Delta = \powk{1}{J}$, then the claim it follows from Lemma 
\ref{lem:k-compatible}.

(2a) Assume $ \min \Delta = \set{X}$ for some $\emptyset\neq X \subseteq J$. The fact 
that $P_x, P_y$  are hierarchically independent for arbitrary 
${x, y \in X, \ \ x\neq y}$ is obtained directly from Proposition 
\ref{prop:not:comparable}. In particular, if $ X = J $, then the ordered set $(\Pi, 
\preccurlyeq)$ is of the type $\ord {\emptyset}{J}$.

Now we assume that $|X|<m$. Consider $x\in X$ and $y \not \in X$. According to 
Proposition \ref{prop:not:inferior} the blocks $ P_y $ and $ P_x $ are hierarchically 
independent or $P_y \prec P_x$. We shall show that $P_y \preccurlyeq P_x$.  

Let us assume that $\W$ is an arbitrary minimal vector in $\G$ such that $w_y \neq 0$. 
The existence of such vectors is ensured by Theorem \ref{thm:connected}.
Then from Lemma \ref{lem:supp:property} (3) we have $\W \in \B (\Pol, \supp \W)$ and 
${\supp \W \in \Delta}$, so $X \subseteq \supp \W $, in particular $ {x\in \supp \W}$. 
Note that $k:=|\supp\W|\geq 2$, because $ y, x \in \supp \W $. According to Lemma 
\ref{lem:min:coeffic} (1), we get $w_y \geq g_ {k-1}$. By assumption we have 
$g_{k-1}=g_1$, hence we can consider two cases:

(i) $ w_y = g_1 $, so according to Lemma \ref{lem:min:coeffic} (2) we have 
$\V: = \W-w_y \E{y} \in \B (\Pol, \supp \W \setminus \set{y })$, but 
$X \subseteq \supp \W \setminus \set{y}$, so from Lemma \ref{lem:supp:property} (1) we 
get $\V\in\G$. Then of course $\V\leq \W':= \W-\E{y}+\E{x}$, so $\W' \in \G$.

(ii) $ w_y> g_1 $ and denote $ Y: = \supp \W $. According to Lemma \ref{lem:onestep} we 
get $\W': = \W - \E{y} + \E{x} \in \B (\Pol, Y)$ or there is a set 
$Z \subseteq Y \setminus \set{y}, \ x \in Z$ such that $\V: = \W_Z \in \B (\Pol, Z)$. In 
particular, we have $ |\W_Z| = |\V| = h_{|Z|}$. Let $Y=Z\cup W\cup\set{y}$ be the union 
of three disjoint sets, where $W=Y\setminus(Z\cup\set{y})$. Then, using Lemma 
\ref{lem:min:coeffic} (1) and assumptions, we obtain that each coordinate of the vector 
$\W$ is at least $ g_1 $, hence:
$$
h_{|Y|} = |\W| = |\W_Z| + |\W_W| + w_y > |\V| + |W| g_1 + g_1 = h_{|Z|} + |W|g_1 + g_1 = 
h_{|Y|},
$$
where the last equality is obtained from Equation \eqref{eq:sums} in the following way:
$$h_{|Y|} - h_{|Z|} = \sum_{i=|Z|}^{|Y|-1} g_i = |W|g_1 + g_1.$$
A contradiction we have obtained shows that $\W'\in \B(\Pol, Y)\seq \G$. In both of the 
above cases we have received that $\W'\in \G$. Since this holds for all $\W\in 
\min\Gamma$ with $w_y>0$, we conclude $P_y \preccurlyeq P_x$.

It remains to show that $P_y,P_x$ are hierarchically independent when $x,y \not \in X $. 
If it were otherwise, then assuming $n:=m$ and applying Proposition 
\ref{prop:two:offmin}, we would get $g_0=g_1 $ contrary to the assumption made here. 
In this way, we showed that the order on $\Pi$ is of type ${\ord {J \setminus X} {X}}$.

(2b) Assume $\min \Delta =\powk{1}{J}$. If $P_y \preccurlyeq P_x$ for some 
${x, y \in J}$, then $P_y$ is hierarchically inferior or equivalent to $P_x$ and 
$\set{x},\set{y}\in \min \Delta$, which contradicts Proposition \ref{prop:not:inferior}. 
In this way we showed that the order on $\Pi$ is of type $\ord{\emptyset}{J}$.
\end{proof}

\begin{rem}
\label{rem:k:regular}
The result of (2b) can be generalized to all monotone increasing families $\Delta$ with 
$\min\Delta=\powk{k}{J}$, where $k=1,\ldots,m$. According to Lemma 
\ref{lem:k-compatible} (cf \cite[Lemma 6.1]{FPXY}) such $\Delta$ is compatible with a 
uniform polymatroid $\Pol=(J,h,\bfg)$ if and only if $g_{k-1}>g_k$. Let $\GPPD$. If 
$k=1$ and $g_1>0$, then $\Gamma$ is compartmented by Proposition 
\ref{prop:not:inferior}. If $k\geq 2$, then $\mu(\Delta)\geq 2$, so $\Gamma$ is 
compartmented which follows from Theorem \ref{thm:last:row}. In both cases the 
hierarchical order induced by $\Gamma$ in $\Pi$ is of the type $\ord{\emptyset}{J}$. 
A similar class, called uniform multipartite access structures were also considered by 
Farr\`as, Padr\'o, Xing and Yang \cite{FPXY}. 
\end{rem}

Another theorem describes the hierarchy of blocks in the access structures determined by 
polymatroids, for which $g_{m-1}>0$ and monotone increasing families with one minimal 
set which contains exactly one element. This theorem deals with the existence and 
hierarchy of access structures placed in the first row of Table \ref{tab:general}.

\begin{thm} 
\label{thm:singleton} 
Let $\Pol=(J, h, \bfg)$ be a  uniform polymatroid with the increment sequence 
$\bfg =(g_i)_{i\in I_m}$  and let $\Delta\subseteq \pow{J}\setminus\set{\emptyset}$ be a 
monotone increasing family such that $\min \Delta=\set{\set{x}}$ for a certain $x\in J$. 
\begin{enumerate}[(1)]
\item Then $\Delta$ is compatible with the polymatroid $\Pol$ if and only if $g_{m-1}>0$.
\item Let $\GPPD$ be the access structure determined by the polymatroid $\Pol$ such that 
$g_{m-1}>0$ and the monotone increasing family $\Delta$. 
Then
\begin{enumerate}[(a)]
\item[(2a)] If $g_0=g_{m-1}$, then the hierarchical order induced by $\Gamma$ on $\Pi$ 
is of the type $\pord{J\setminus \set{x}}{\set{x}}$. 
\item[(2b)] If $g_0>g_{m-1}$, then the hierarchical order induced by $\Gamma$ on $\Pi$ 
is of the type $\ord{J\setminus \set{x}}{\set{x}}$. 
\end{enumerate}
\end{enumerate}
\end{thm}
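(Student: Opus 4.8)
The plan is to dispatch the three parts by reducing each to an already proven result, doing the genuine work only in part (2b).

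Part (1) is immediate: the family $\Delta$ has a single minimal set, namely the singleton $\set{x}$, so Lemma \ref{lem:compatible} applied with $X=\set{x}$ gives compatibility if and only if $g_{m-1}>0$. For part (2a), since $\bfg$ is nonincreasing the hypothesis $g_0=g_{m-1}$ forces $g_0=g_1=\dots=g_{m-1}$, and together with the standing compatibility assumption $g_{m-1}>0$ this is exactly the setting of Theorem \ref{thm:last:column}, whose minimal family is a single set $\set{X}$ with $X=\set{x}$. When $m\geq 3$, Theorem \ref{thm:last:column}(2b) applies verbatim and yields the preorder $\pord{J\setminus \set{x}}{\set{x}}$. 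For $m=2$ the set $J\setminus\set{x}$ is a singleton, so $\ord{J\setminus\set{x}}{\set{x}}$ and $\pord{J\setminus\set{x}}{\set{x}}$ coincide and the assertion reduces to $P_y\prec P_x$, which is covered by the argument for (2b) below.

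The core of the theorem is part (2b), with $g_0>g_{m-1}>0$. I would first note that $\Delta$ is the upward closure $\set{Y\seq J : x\in Y}$ of its unique minimal set, so every subset containing $x$ lies in $\Delta$. The key step is to establish $P_y\preccurlyeq P_x$ for every $y\neq x$. By the characterization \eqref{eq:preceq} it suffices to take a minimal vector $\W\in\min\G$ with $w_y>0$ (if no such vector exists then $P_y$ is redundant, hence inferior to every block, and there is nothing to prove). By Lemma \ref{lem:supp:property}(3) we have $\W\in\B(\Pol,\supp \W)$ with $\supp\W\in\Delta$, whence $x\in\supp\W$; moreover $w_x\leq h_1=g_0<|P_x|$, so the coordinate condition in \eqref{eq:preceq} holds. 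Applying Lemma \ref{lem:onestep} to $X=\supp\W$ with the pair $x,y\in X$, either $\W':=\W-\E{y}+\E{x}\in\B(\Pol,\supp\W)\seq\G$, or there is a set $Y\seq\supp\W\setminus\set{y}$ with $x\in Y$ and $\W_Y\in\B(\Pol,Y)$ satisfying $\W_Y\leq\W'$; since $x\in Y$ forces $Y\in\Delta$, Lemma \ref{lem:supp:property}(1) gives $\W_Y\in\G$ and hence $\W'\in\G$ by monotonicity. In both cases $\W'\in\G$, so $P_y\preccurlyeq P_x$. Routing this through Lemma \ref{lem:onestep} is what lets a single argument cover an arbitrary increment sequence, and I expect this to be the main obstacle, since a direct computation would require splitting on the exact value of $w_y$.

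Finally I would pin down strictness and the remaining incomparabilities. Because $\set{x}\in\min\Delta$, $|\set{x}|=1\leq m-1$, and $g_1\geq g_{m-1}>0$, Proposition \ref{prop:not:inferior} shows that $P_x$ is not hierarchically inferior or equivalent to any other block; combined with $P_y\preccurlyeq P_x$ this gives $P_y\prec P_x$ for every $y\neq x$. For distinct $y,z\in J\setminus\set{x}$ I argue by contradiction: were $P_y$ and $P_z$ hierarchically comparable, then since $g_{m-1}>0$ gives $\eta(\bfg)=m$ and, for $m\geq 3$, the minimal set $\set{x}$ satisfies $1\leq|\set{x}|=1\leq m-2$, Proposition \ref{prop:two:offmin} would force $g_0=\dots=g_{m-1}$, contradicting $g_0>g_{m-1}$; for $m=2$ the set $J\setminus\set{x}$ is a singleton and there is no such pair. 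Collecting these facts, namely $P_y\prec P_x$ for all $y\neq x$ together with mutual independence inside $J\setminus\set{x}$, yields precisely the order $\ord{J\setminus\set{x}}{\set{x}}$, completing part (2b).
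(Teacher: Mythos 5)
Your proposal is correct and follows essentially the same route as the paper: part (1) via Lemma \ref{lem:compatible}, part (2a) by reduction to Theorem \ref{thm:last:column}, and part (2b) by the redundancy/Lemma \ref{lem:onestep} argument for $P_y\preccurlyeq P_x$ combined with Proposition \ref{prop:not:inferior} for strictness and Proposition \ref{prop:two:offmin} to rule out comparability inside $J\setminus\set{x}$. Your explicit treatment of the $m=2$ case and of the coordinate bound $w_x<|P_x|$ are small refinements the paper leaves implicit, not a different approach.
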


\begin{proof}  
(1) The fact that $\Delta$ is compatible with $\Pol$ can be obtained from Lemma 
\ref{lem:compatible}. Conversely, let us suppose that $g_{m-1}=0$. Then every subset of 
$J$ with $m-1$ elements belongs to $\Delta$ by Lemma \ref{lem:head} (1). But this 
contradicts the fact that $J\setminus \set{x}\not \in \Delta$. This implies $\Delta$ is 
not compatible with $\Pol$ whenever $g_{m-1}=0$.

(2) We shall show that $P_y\prec P_x$ for every $y\in J\setminus\set{x}$. 
From Proposition \ref{prop:not:inferior} it follows that $P_x$ is not hierarchically 
inferior to any block in $\Pi$ so $P_x$ is not hierarchically equivalent to any other 
block. Let us fix $y \in J, \ y \neq x$. If $w_y=0$ for every minimal vector $\W\in \G$, 
then the block $P_y$ is redundant, so $P_y \preccurlyeq P_x$. Let us assume that $\W$ is 
a minimal vector in $\G$ such that $w_y \neq 0$. Then from Lemma 
\ref{lem:supp:property} (3) we have $\W \in \B (\Pol, \supp \W)$ and 
${\supp \W \in \Delta}$, so ${x\in \supp \W}$. From Lemma \ref{lem:onestep} it follows 
that ${\W': = \W- \E{y} + \E{x} \in \B (\Pol, \supp \W)}$ or there is a set $Y \subseteq 
\supp \W \setminus \set{y}$, $x \in Y$ such that $\V:=\W_Y \in \B (\Pol, Y)$. In the 
former case, we get $ \W '\in \G $ from Lemma \ref{lem:supp:property} (1). If the latter 
case is fulfilled, then we notice that $Y \in \Delta$, so from Lemma 
\ref{lem:supp:property} (1) we get $\V\in\G$ and $\V\leq \W'$. This means that in both 
cases $\W '\in \G $. Since this holds for all $\W\in \min\Gamma$ with $w_y>0$, we 
conclude $P_y \preccurlyeq P_x$. This shows that the (pre)order on $\Pi$ is of the 
type $\pord{J \setminus \set{x}}{\set{x}}$ or $\ord{J\setminus\set{x}}{\set{x}}$.

(2a) Since $g_0=g_{m-1}$, applying Theorem \ref{thm:last:column} (2b) yields the claim.

(2b) If the preorder on $\Pi$ were of the type $\pord{J\setminus\set{x}}{\set{x}}$, then 
the blocks $ P_y $ and $ P_z $ would be hierarchically comparable  for some $ y, z \in J 
\setminus \set{x}$. This and Proposition \ref{prop:two:offmin}, for $n=m$ imply $ g_0 = 
g_1 = \dots = g_{m-1}$. But this is a contradiction to $g_0>g_{m-1}$, so the order on 
$\Pi$ is of type $\ord{J \setminus \set{x}}{\set{x}}$.
\end{proof}

Let us noticing that the order induced by the access structure satisfying the 
assumptions of the above theorem corresponds to the organizational chart of an 
institution composed of several mutually independent departments 
$(P_y)_{y\in J\setminus \set{x}}$ managed by one superior unit $P_x$. It follows 
from Theorem \ref{thm:singl:delta} that such access structures is ideal.

The results in this chapter provide information about hierarchy induced on the set of ed 
on the set of participant by various access structures contained in Table 1, except the 
cell C2. That area contains objects obtained from monotone increasing families 
$\Delta\seq\pow{J}\setminus\set{\emptyset}$ with $\mu(\Delta)=1$ and  
compatible polymatroids $\Pol=(J,h,\bfg)$ with $3\leq\eta(g)\leq m-1$. Computer 
calculations show that this area contains both compartmented and hierarchical access 
structures and some of them are different of those considered in the above theorems. 
Some examples can be seen in Table 2.

Every linearly ordered subset of a partially (pre)ordered set is called a \emph{chain}. 
A chain that contains only one element is referred to as \emph{trivial}.
We assume that a chain in a partition of participants does not contain hierarchically 
equivalent blocks. Let us observe that every non-trivial chain of blocks in the access 
structures investigated above contains 2 blocks. The next theorem shows that all weakly 
hierarchical access structures obtained from uniform polymatroids have this property.

\begin{thm}
\label{thm:height}
Every chain in the hierarchical access structure determined by arbitrary uniform 
polymatroid contains 1 or 2 blocks.
\end{thm}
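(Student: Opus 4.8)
The plan is to reduce the statement to the explicit hierarchical orders determined in Theorems \ref{thm:g2:zero} and \ref{thm:last:column}--\ref{thm:singleton}, where the bound on chain length is transparent, and to settle the single remaining configuration by a direct application of Corollary \ref{cor:n=m}. I first dispose of the trivial reductions: if $m\leq 2$ there are at most two blocks, and if $\eta(\bfg)=1$ the structure is a threshold one (Example \ref{ex:threshold}), so all blocks are hierarchically equivalent and every chain is trivial. Thus I may assume $m\geq 3$ and $\eta(\bfg)\geq 2$, and suppose, towards a contradiction, that there is a chain $P_z\prec P_y\prec P_x$ of three pairwise non-equivalent blocks.

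The observation I would record once and reuse is that a chain has at most two blocks whenever the hierarchy on $\Pi$ is of type $\ord{Y}{X}$ or $\pord{Y}{X}$. Indeed, in either type a strict relation $P_b\prec P_a$ forces $b\in Y$ and $a\in X$, so a three-term chain $P_c\prec P_b\prec P_a$ would require $b\in X\cap Y=\emptyset$, which is impossible (for $\pord{Y}{X}$ one also uses that the blocks indexed by $Y$ are mutually equivalent, so a chain meets $Y$ in at most one block). Likewise, if $\G$ is compartmented then no two blocks are comparable, so every chain is trivial. It therefore suffices to verify that in every admissible case $\G$ is either compartmented or of one of these two order types.

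I would then split on the shape of $\bfg$. If $\eta(\bfg)=2$, Theorem \ref{thm:g2:zero} yields an order of type $\ord{J\setminus X}{X}$. If $g_{m-1}>0$, then $\eta(\bfg)=m$: when $g_1=g_{m-1}$ the hierarchy is of type $\pord{J\setminus X}{X}$ or $\ord{J\setminus X}{X}$ by Theorems \ref{thm:last:column} and \ref{thm:last-but-one:column}, while when $g_1>g_{m-1}$ the structure is compartmented by Theorem \ref{thm:extreme:columns} unless $\min\Delta=\set{\set{x}}$, in which case Theorem \ref{thm:singleton} gives the type $\ord{J\setminus\set{x}}{\set{x}}$. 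In each of these cases the previous paragraph already finishes the argument. There remains only the range $g_{m-1}=0$ with $\eta(\bfg)\geq 3$ (column $C$ of Table \ref{tab:general}); here, if $\mu(\Delta)>1$ then $g_1\geq g_2>0=g_{m-1}$, so Theorem \ref{thm:last:row} shows that $\G$ is compartmented and there is again no non-trivial chain.

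The genuinely open configuration---precisely the cell $C2$ of Table \ref{tab:general}---is $g_{m-1}=0$, $g_2>0$ and $\mu(\Delta)=1$, and this is where the real work lies, since the classification gives no normal form there. In this case $\min\Delta$ contains a singleton $\set{w}$. Because $g_1\geq g_2>0$ and the blocks $P_y,P_z$ are hierarchically inferior to $P_x,P_y$ respectively, Proposition \ref{prop:not:inferior} forbids $\set{y}$ and $\set{z}$ from lying in $\min\Delta$; hence $w,y,z$ are three distinct indices and $3=|\set{w}\cup\set{y,z}|\leq \eta(\bfg)$. Applying Corollary \ref{cor:n=m} to the comparable pair $P_z\preccurlyeq P_y$ and the minimal set $X=\set{w}$ (which is disjoint from $\set{y,z}$) then yields $g_1=g_{m-1}$; but $g_{m-1}=0$ while $g_1\geq g_2>0$, a contradiction. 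This excludes a three-term chain in the last case as well, so every chain contains at most two blocks.
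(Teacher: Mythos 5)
Your proof is correct, but it follows a genuinely different route from the paper's. The paper's own argument is uniform and does not case-split over the classification: assuming a three-block chain $P_z\prec P_y\prec P_x$ with $n:=\eta(\bfg)\geq 2$, it takes any $n$-element set $X\supseteq\set{y,z}$, observes that $X\in\Delta\setminus\min\Delta$ by Lemma \ref{lem:head} and Proposition \ref{prop:not:comparable}, extracts a minimal $Y\subsetneq X$ with $y,z\notin Y$ via Proposition \ref{prop:not:inferior}, and then applies Proposition \ref{prop:two:offmin} together with Corollary \ref{cor:n=m} to force $g_0=g_{m-1}$, at which point Theorem \ref{thm:last:column} bounds chains by $2$. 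You instead reduce every cell of Table \ref{tab:general} except C2 to an explicit order type $\ord{Y}{X}$ or $\pord{Y}{X}$ (or to compartmentedness) via Theorems \ref{thm:g2:zero}, \ref{thm:extreme:columns}, \ref{thm:last:row} and \ref{thm:last:column}--\ref{thm:singleton}, and handle C2 by hand; your case split is exhaustive, and your C2 argument is sound: $\mu(\Delta)=1$ supplies a singleton $\set{w}\in\min\Delta$, Proposition \ref{prop:not:inferior} (using $g_1\geq g_2>0$) rules out $w\in\set{y,z}$, and Corollary \ref{cor:n=m} applied to the comparable pair $(z,y)$ and $X=\set{w}$ yields $g_1=g_{m-1}=0$, contradicting $g_1\geq g_2>0$. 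The two arguments converge on the same key tool in the hard case --- a minimal set of $\Delta$ disjoint from a comparable pair fed into Corollary \ref{cor:n=m} --- but what each buys is different: the paper's version is shorter and self-contained, needing only Theorem \ref{thm:last:column} from the classification, while yours makes transparent exactly which classification result is responsible for the chain bound in each cell of the table, at the cost of having to verify that the case analysis covers all compatible pairs $(\Pol,\Delta)$.
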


\begin{proof}
Let $n:=\eta(\bfg)$. For $n=1$, i.e. $g_0>g_1=0$, it follows from Example 
\ref{ex:threshold} that  $\GPPD$ is a threshold access structure, so all blocks of 
participants are mutually hierarchically equivalent, thus every chain is trivial.

Suppose that $n\geq 2$ and $\Pi$ contains a chain of blocks composed of three 
hierarchically non-equivalent blocks, i.e. $P_z \prec P_y \prec P_x$ for some 
$x, y, z \in J$. Let $X\seq J$ such that $|X|=n$ and $y,z \in X$. By Lemma 
\ref{lem:head} (1) we have $X\in \Delta$, but Proposition \ref{prop:not:comparable} 
implies that $X\notin \min \Delta$. Thus there is $Y\subsetneq X$ such that $Y\in \min 
\Delta$ in particular $|Y|<n$. If $n=2$, then $|Y|=1$, but neither $\set{y}$ nor 
$\set{z}$ is minimal in $\Delta$, which follows from Proposition 
\ref{prop:not:inferior}, a contradiction. If $n\geq 3$, then by Proposition 
\ref{prop:not:inferior} we know that $y,z\notin Y$. Thus $|Y|\leq n-2$. 
Using Proposition \ref{prop:two:offmin} we get $g_0=g_{n-1}$ and this combined with 
Corollary \ref{cor:n=m} shows that $g_0=g_{m-1}$. Now from Theorem
\ref{thm:last:column} we conclude that every chain in $\Pi$ contains at most 2 blocks, 
which contradicts our assumption.
\end{proof}

The above theorem seems quite surprising, because for other polymatroids one can 
construct hierarchical access structures with chains of arbitrary length. For instance, 
such objects can be found in \cite{FP}, \cite{FPXY}, \cite{tT}, \cite{mK} and others.

\section{Ideal access structures obtained from uniform polymatroids}
\label{sec:ideal:acc:str}

In this section we shall prove that the access structure studied in theorems \ref{thm:g2:zero}, \ref{thm:last:column}, \ref{thm:last-but-one:column} and \ref{thm:singleton} are ideal. To do this we show that all simple  extensions of suitable uniform polymatroids are representable over sufficiently large finite fields and then we apply Remark \ref{rem:represent}.
We begin by recalling Example \ref{ex:threshold} where we noticed that every polymatroid 
$\Pol=(J,h,\bfg)$ with $\eta(\bfg)=1$ determines a threshold access structure which is 
known to be ideal as it is realized by the Shamir threshold secret sharing scheme. Now 
we shall consider the case $\eta(\bfg)=2$.

\begin{thm}
All access structures determined by any uniform polymatroid $\Pol=(J,h,\bfg)$ with 
$\eta(\bfg)=2$ are ideal.
\end{thm}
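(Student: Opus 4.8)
The plan is to verify the hypothesis of Remark~\ref{rem:represent}: it suffices to show that for every compatible $\Delta$ the simple extension $\Pol'$ of $\Pol$ induced by $\Delta$ is representable over a suitable finite field $\KK$, since then $\Gamma(\Pi,\Pol,\Delta)$ is ideal. By Theorem~\ref{thm:g2:zero} the hypothesis $\eta(\bfg)=2$ forces $g_0\geq g_1>g_2=0$ and $\min\Delta=\powk{1}{X}\cup\powk{2}{J\setminus X}$ for some $X\seq J$, with $|X|\leq 1$ in the case $g_0=g_1$. Consequently $h_1=g_0$, $h_k=g_0+g_1$ for all $k\geq 2$, and $\Delta=\set{Y\seq J : |Y|\geq 2}\cup\set{\set{x}:x\in X}$. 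So it is enough to produce subspaces $(V_x)_{x\in J}$ of the ambient space $\KK^{g_0+g_1}$ together with a nonzero vector $\beta$ such that $\dim V_x=g_0$ for every $x$, $V_x+V_y=\KK^{g_0+g_1}$ for all $x\neq y$, and $\beta\in V_x$ if and only if $x\in X$; then $(V_x)_{x\in J}$ represents $\Pol$, and adjoining $V_{x_0}=\Span(\set{\beta})$ represents $\Pol'$, because $\beta\in V_Y$ holds for every $Y$ with $|Y|\geq 2$ (there $V_Y=\KK^{g_0+g_1}$) while $\beta\in V_{\set{x}}$ exactly when $x\in X$, i.e.\ exactly when $\set{x}\in\Delta$.

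I would build these subspaces dually. Write $r=g_0-g_1\geq 0$ and $s=g_1\geq 1$, so the ambient dimension is $r+2s$, and fix a nonzero $\beta$ with annihilating hyperplane $H=\set{f\in(\KK^{g_0+g_1})^*:f(\beta)=0}$, $\dim H=g_0+g_1-1$. For each $x$ I would choose an $s$-dimensional subspace $F_x$ of the dual space and set $V_x=\bigcap_{f\in F_x}\ker f$, so that $\dim V_x=g_0$. Under this correspondence $V_x+V_y=\KK^{g_0+g_1}$ is equivalent to $F_x\cap F_y=0$, and $\beta\in V_x$ is equivalent to $F_x\seq H$. Thus the whole problem reduces to selecting a family $(F_x)_{x\in J}$ of pairwise transverse $s$-dimensional subspaces of the dual (a partial spread) with $F_x\seq H$ precisely for $x\in X$. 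Any two distinct $V_x,V_y$ then span the full space, which automatically forces $\dim\big(\sum_{x\in Y}V_x\big)=g_0+g_1=h(Y)$ for every $Y$ with $|Y|\geq 2$, so the rank function of $\Pol$ is reproduced exactly.

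For the existence of such a partial spread I would treat the two cases separately and over a sufficiently large $\KK$. If $g_0>g_1$, then $r\geq 1$ and $\dim H=r+2s-1\geq 2s$, so $H$ can accommodate the $|X|$ subspaces $F_x$ ($x\in X$) as a partial spread inside $H$, while the remaining $F_x$ ($x\notin X$) are added one at a time as $s$-subspaces avoiding the finitely many forbidden positions (nontrivial intersection with an already chosen $F_{x'}$, or containment in $H$); a dimension count on the Grassmannian shows admissible choices remain once $q=|\KK|$ is large. If $g_0=g_1$, then $r=0$, $|X|\leq 1$, and the construction is immediate: for $X=\emptyset$ take any partial spread of $\KK^{2s}$ whose members all avoid $H$, and for $X=\set{x_1}$ first pick $F_{x_1}$ to be an $s$-subspace of $H$ and extend it to a partial spread whose remaining members avoid $H$. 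It is reassuring that the only obstruction, fitting two transverse $s$-subspaces inside $H$, requires $2s\leq r+2s-1$, i.e.\ $g_0>g_1$, which is exactly the condition under which Theorem~\ref{thm:g2:zero} permits $|X|\geq 2$: the forbidden configurations of the representation match the incompatible families precisely. With a representation of $\Pol'$ in hand, Remark~\ref{rem:represent} yields ideality.

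The main obstacle is this third step: arranging the $V_x$ so that they are pairwise in general position (each pair spanning the whole space) while simultaneously prescribing exactly which of them pass through $\beta$. The naive direct-sum representation of a uniform polymatroid forces all pairwise intersections $V_x\cap V_y$ to coincide with one common subspace, which can only realize $X=\emptyset$, $X=J$, or $|X|=1$; to reach an arbitrary $X$ with $2\leq|X|\leq m-1$ one must let these intersections vary, and the dual partial-spread formulation is what makes that control possible. Confirming that a large enough field always admits the required partial spread with the prescribed incidence to $H$ is the only genuinely non-formal point.
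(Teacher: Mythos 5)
Your proposal is correct and reaches the paper's conclusion by the same two pillars (Theorem \ref{thm:g2:zero} to pin down $\min\Delta=\powk{1}{X}\cup\powk{2}{J\setminus X}$, and Remark \ref{rem:represent} applied to a representable simple extension), but the construction of the representation is genuinely different. The paper works entirely on the primal side and is fully explicit: it takes the spread $V_x=\set{(\alpha,a_x\alpha):\alpha\in\KK^{g_1}}$ in $\KK^{2g_1}$, pads to $U_x=\KK^{g_0-g_1}\times V_x$ when $g_0>g_1$, and then applies a shear $\varphi(\alpha)=\alpha+\nu(\alpha)\beta$ to $U_x$ only for $x\in X$, so that the point $\varepsilon$ common to all $U_x$ is carried to $\beta^*=\varepsilon+\beta$ precisely for the blocks in $X$; this yields a concrete representation over any field with $q>m$. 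You instead pass to annihilators, reducing the problem to a partial spread $(F_x)_{x\in J}$ of $s$-dimensional subspaces of the dual with $F_x\seq\beta^{\perp}$ exactly for $x\in X$, and you get existence by a genericity/counting argument on the Grassmannian over a sufficiently large field. Your dual dictionary is right ($V_x+V_y=\KK^{g_0+g_1}\Leftrightarrow F_x\cap F_y=0$, $\beta\in V_x\Leftrightarrow F_x\seq H$), your dimension condition $2s\leq\dim H$ correctly matches the compatibility dichotomy $g_0>g_1$ versus $|X|\leq 1$, and your observation about why a common-intersection representation cannot realize $2\leq|X|\leq m-1$ is essentially the same obstruction the paper's shear is designed to break. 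The trade-off is that the Grassmannian count you invoke for the greedy choice of the $F_x$ with $x\notin X$ is left as a sketch (it is standard: the subspaces meeting a fixed $s$-subspace nontrivially, or contained in $H$, form proper fractions of the Grassmannian that vanish as $q\to\infty$), whereas the paper's shear construction avoids any such genericity argument and gives an explicit bound on the field size; conversely, your formulation makes the incidence geometry of the distinguished point $\Span(\beta)$ completely transparent and would generalize more readily to prescribing other incidence patterns.
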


\begin{proof}
The assumption $\eta(\bfg)=2$ implies $g_0\geq g_1>g_2=0$. 
Let $\Delta\seq\pow{J}\setminus\set{\emptyset}$ be a monotone increasing family 
compatible with $\Pol$. It is enough to show that the simple extension $\Pol'$ of $\Pol$ 
induced by $\Delta$ is a representable polymatroid. Let $\KK$ be a finite field with 
$q:=|\KK|>m$. By an abuse of notation, we will use $\theta$ to denote the zero vector in any vector space $\KK^n$.
Let us consider a collection $(a_x)_{x\in J}$ of pairwise different nonzero elements of $\KK$. For every $x\in J$ we define $V_x:=\set{(\alpha,a_x\alpha)\ :\ \alpha\in \KK^{g_1}}$. It easy to check that $V_x$ is a vector subspace of $\KK^{g_1}\times\KK^{g_1}$ and $\dim V_x=g_1$. Assume $x\neq y$ and $(\alpha_1,\alpha_2)\in V_x\cap V_y$. Hence $\alpha_2=a_x \alpha_1$ and $\alpha_2=a_y \alpha_1$, so  $\theta=a_x \alpha_1-a_y \alpha_1=(a_x-a_y)\alpha_1$. Since $a_x-a_y\neq 0$, so $\alpha_1=\theta$. This shows $V_x\cap V_y=\set{\theta}$. Hence $\dim(V_x +V_y)=
\dim V_x+\dim V_y-\dim(V_x\cap V_y)=\dim V_x+\dim V_y=2g_1$. In particular, $V_x+V_y=
\KK^{g_1}\times\KK^{g_1}$ for all $x,y\in J,\ x\neq y$. Thus $(V_x)_{x\in J}$ is a vector space representation of the polymatroid $\Pol$ provided $g_0=g_1$. According to Theorem \ref{thm:g2:zero} (2) we have two cases.
If $\min\Delta=\set{\set{x}}\cup \powk{2}{J\setminus\set{x}}$, then we take $\theta\neq \beta\in V_x$. For a certain $x_0\notin J$ we define $V_{x_0}:=\Span(\beta)$. It is easily seen that $(V_x)_{x\in J\cup\set{x_0}}$ is a vector space representation 
of $\Pol'$ induced by $\Delta$. 

If $\min\Delta=\powk{2}{J}$, then we take $\beta\in \KK^{g_1}\times\KK^{g_1}\setminus \bigcup_{x\in J} V_x$. It is possible as $|\bigcup_{x\in J} V_x|\leq mq^{g_1}<q^{g_1+1}\leq q^{2g_1}=|\KK^{g_1}\times\KK^{g_1}|$.  Now we define 
$V_{x_0}:=\Span(\beta)$. It is easily seen that $(V_x)_{x\in J\cup\set{x_0}}$ is a 
vector space representation of $\Pol'$ induced by $\Delta$. 

Now we assume $g_0>g_1$ and define $U_x:=\KK^{g_0-g_1}\times V_x\subseteq 
\KK^{g_0-g_1}\times\KK^{g_1}\times\KK^{g_1}$ for every $x\in J$. For simplicity of notation, the vector space $\KK^{g_0-g_1}\times\KK^{g_1}\times\KK^{g_1}$ will be identified with $\KK^{g_0+g_1}$. 
It is clear that $\dim U_x=g_0$. Moreover $U_x+U_y= (\KK^{g_0-g_1}\times V_x)+(\KK^{g_0-g_1}\times V_y)=\KK^{g_0+g_1}$ and $U_x\cap U_y=\KK^{g_0-g_1}\times \{\theta\}\times \{\theta\}$. In particular $\varepsilon:=(1,0,\ldots,0)\in U_x$ for all $x\in J$. 

If $\Delta$ is compatible with $\Pol$, then by Theorem \ref{thm:g2:zero}.1 there is 
$X\subseteq J$ such that $\min \Delta=\powk{1}{X}\cup\powk{2}{J\setminus X}$.

To explain the general idea of the next step of the proof we use projective geometry. Every subspace $U_x$ can be considered as $(g_0-1)$-dimensional subspace of the projective space of dimension $g_0+g_1-1$. The projective point $E:=\Span(\varepsilon)$ belongs to the intersection of all subspaces $U_x$ (Figure 1). Now we take a projective point $B:=\Span(\beta^*)$ that does not belong to any subspace $U_x$ and the translation of the whole space $\varphi$ sending $E$ to $B$. Then the family of $(\varphi(U_x))_{x\in X}$ together with the family $(U_x)_{x\in J\setminus X}$ form another vector space representation of $\Pol$ (Figure 2). Now we only need to add  $U_{x_0}:=\Span(\beta^*)$ to those families to get a representation of $\Pol'$.

\begin{center}
\begin{tikzpicture}
\draw [-] (0,1.2) --(2.4,1.2);
\node at (2.6,1.2) {\footnotesize $U_1$} ;
\draw [-] (0.3,0.3) -- (2.1,2.1);
\node at (2.3,2.3) {\footnotesize $U_2$};
\draw [-] (1.2,0) -- (1.2,2.4);
\node at (1.2,2.6) {\footnotesize $U_3$};
\draw [-] (2.1,0.3) -- (0.3,2.1);
\node at (0.1,2.3) {\footnotesize $U_4$};
\node[circle,fill=black,thin,inner sep=1pt] at (1.2,1.2) {};
\node at (1.35,1.4) {\footnotesize $E$};
\begin{scope}[xshift=5.5cm,yshift=0.2cm]
\draw [-] (0,1.2) --(2.4,1.2);
\node at (2.6,1.2) {\footnotesize $U_1$} ;
\draw [-] (1.2,0) -- (1.2,2.4);
\node at (1.2,2.6) {\footnotesize $U_3$};
\node[circle,fill=black,thin,inner sep=1pt] at (1.2,1.2) {};
\node at (1.35,1.4) {\footnotesize $E$};
\end{scope}

\begin{scope}[xshift=5.65cm,yshift=-0.1cm]
\draw [-] (0.3,0.3) -- (2.1,2.1);
\node at (2.3,2.3) {\footnotesize $\varphi(U_2)$};
\draw [-] (2.1,0.3) -- (0.3,2.1);
\node at (0.1,2.3) {\footnotesize $\varphi(U_4)$};
\node[circle,fill=black,thin,inner sep=1pt] at (1.2,1.2) {};
\node at (1.4,1.2) {\footnotesize $B$};
\end{scope}

\node at (1.2,-0.4) {Figure 1};
\node at (6.8,-0.4) {Figure 2};
\end{tikzpicture}
\end{center}

Now we can do the formal calculations. Let $\nu : \KK^{g_0+g_1} \longrightarrow \KK$ be 
defined by $\nu(\alpha)= \nu(a_1,\ldots,a_{g_0+g_1})=a_1$ for every 
$\alpha=(a_1,\ldots,a_{g_0+g_1})\in\KK^{g_0+g_1}$.
Let $\beta_1\in \KK^{g_1}\times \KK^{g_1}\setminus \bigcup_{x\in J} V_x$ and 
$\beta:= (\theta , \beta_1)\in \KK^{g_0+g_1}$. Obviously $\beta\notin U_x$ for 
every $x\in J$.

Now we define $\varphi : \KK^{g_0+g_1} \longrightarrow \KK^{g_0+g_1}$ by setting 
$\varphi(\alpha)=\alpha+\nu(\alpha)\beta$ for all $\alpha\in \KK^{g_0+g_1}$.
Let us notice that $\varphi$ is an isomorphisms of vector spaces, so 
$\dim\varphi(U_x)=\dim U_x=g_0$. Moreover $\varphi(\alpha)=\alpha$ for all $\alpha\in \set{\theta}\times \KK^{g_1}\times \KK^{g_1}$ and 
$\beta^*:=\varphi(\varepsilon)=\varepsilon+\beta\notin U_x$ for all $x\in J$. 

Let $x_0$ be any element not in $J$ and let $U_{x_0}:=\Span(\beta^*)$.
 Then the family $(\varphi(U_x))_{x\in X}\cup (U_x)_{x\in J\setminus X}\cup \set{U_{x_0}}$ is a vector space representation of the simple extension of $\Pol$ 
 induced by $\Delta$. Indeed, if $x\notin X$, then $h(\set{x,x_0})=\dim (U_x+U_{x_0})>\dim U_x=h(\set{x})=g_0$ as $\beta^*\notin U_x$. Thus $\set{x}\notin \min\Delta$. For $x\in X$ we have $h(\set{x,x_0})=\dim (\varphi(U_x)+U_{x_0})=\dim U_x=h(\set{x})=g_0$, so $\set{x}\in \min\Delta$. 

From the fact that $\varphi$ is a vector space isomorphism it follows 
$\varphi(U_x)+\varphi(U_y)=\varphi(U_x+U_y)=\KK^{g_0+g_1}$ for all $x,y\in X$. 
For $x\in X$ and $y\in J\setminus X$ we have $\varphi(U_x)+U_y\supseteq \varphi(\set{\theta}\times V_x)+U_y= (\set{\theta}\times V_x)+U_y= 
\KK^{g_0-g_1}\times (V_x+V_y)=\KK^{g_0+g_1}$. 
In every case $h(\set{x,y,x_0})=\dim(\KK^{g_0+g_1}+U_{x_0})=g_0+g_1$, i.e., 
$\set{x,y}\in \Delta$. If $x,y\notin X$, then $\set{x,y}\in\min\Delta$.
\end{proof}

In the next proof we will need the following well-known property of vector spaces over 
finite fields. Let $V_1,\ldots, V_n$ be proper subspaces of a vector space $V$ over a 
finite field $\KK$. If $|\KK|>n$, then $V_1\cup \ldots \cup V_n\neq V$. Let us recall 
that every uniform polymatroid is representable.

\begin{thm}
\label{thm:singl:delta}
All access structures determined by any uniform polymatroid $\Pol=(J,h,\bfg)$ with 
$\eta(\bfg)=m$ and monotone increasing family $\Delta\subsetneq \pow{J}$ such that 
$|\min \Delta|=1$ are ideal.
\end{thm}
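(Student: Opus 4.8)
The plan is to use the sufficient condition recorded in Remark \ref{rem:represent}: since every uniform polymatroid is representable and $\Delta$ is compatible with $\Pol$ (by Lemma \ref{lem:compatible}, because $\eta(\bfg)=m$ forces $g_{m-1}>0$), it suffices to exhibit a vector space representation of the simple extension $\Pol'$ of $\Pol$ induced by $\Delta$. First I would fix a finite field $\KK$ with $|\KK|>m$ together with a $\KK$-representation $(V_x)_{x\in J}$ of $\Pol$, so that, writing $V_Y:=\sum_{x\in Y}V_x$, we have $\dim V_Y=h(Y)$ for every $Y\seq J$. Taking some $x_0\notin J$, the task then reduces to finding a nonzero vector $\beta$ for which $V_{x_0}:=\Span(\set{\beta})$ makes $(V_x)_{x\in J\cup\set{x_0}}$ a representation of $\Pol'$; this is exactly the requirement that $\beta\in V_Y$ if and only if $Y\in\Delta$.

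Since $|\min\Delta|=1$, say $\min\Delta=\set{X}$, the family $\Delta$ consists precisely of the supersets of $X$, so the condition on $\beta$ reads $\beta\in V_Y\Longleftrightarrow X\seq Y$. I would first observe that once $\beta\in V_X$, monotonicity of $Y\mapsto V_Y$ yields $\beta\in V_Y$ for all $Y\supseteq X$ automatically. Conversely, a set $Y$ fails to contain $X$ if and only if $Y\seq J\setminus\set{x}$ for some $x\in X$; hence, again by monotonicity, it is enough to guarantee $\beta\notin V_{J\setminus\set{x}}$ for every $x\in X$. Thus the whole construction boils down to choosing $\beta\in V_X$ that lies in none of the subspaces $V_{J\setminus\set{x}}\cap V_X$, $x\in X$.

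The key step is to check that each $V_{J\setminus\set{x}}\cap V_X$ is a \emph{proper} subspace of $V_X$. For $x\in X$ we have $(J\setminus\set{x})\cup X=J$, so $V_{J\setminus\set{x}}+V_X=V_J$, and the Grassmann dimension formula gives
\[
\dim\!\left(V_{J\setminus\set{x}}\cap V_X\right)=\dim V_{J\setminus\set{x}}+\dim V_X-\dim V_J=h_{m-1}+h_{|X|}-h_m=h_{|X|}-g_{m-1}<h_{|X|}=\dim V_X,
\]
the strict inequality using $g_{m-1}>0$. Consequently $\bigcup_{x\in X}\left(V_{J\setminus\set{x}}\cap V_X\right)$ is a union of at most $m$ proper subspaces of $V_X$, and since $|\KK|>m$ the union-of-subspaces property recalled just before the theorem furnishes a vector $\beta\in V_X$ avoiding all of them; in particular $\beta$ is nonzero. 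Setting $V_{x_0}:=\Span(\set{\beta})$, I would then verify the ranks of the extended family: when $X\seq Y$ we get $\dim(V_Y+V_{x_0})=\dim V_Y=h(Y)$ since $\beta\in V_X\seq V_Y$, and when $X\not\seq Y$ we get $\dim(V_Y+V_{x_0})=\dim V_Y+1=h(Y)+1$ since then $V_Y\seq V_{J\setminus\set{x}}$ for some $x\in X$ while $\beta\notin V_{J\setminus\set{x}}$. This is exactly the rank function of $\Pol'$, so $\Pol'$ is representable and the access structure is ideal by Remark \ref{rem:represent}.

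The main obstacle is the reduction carried out in the second paragraph: correctly translating the combinatorial membership condition $Y\in\Delta$ into a containment condition for $\beta$, and recognising that it is governed by the finitely many maximal forbidden sets $J\setminus\set{x}$ with $x\in X$. Once that reduction is in place, the properness computation and the finite-field counting argument are routine, as is the concluding rank verification.
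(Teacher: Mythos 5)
Your proof is correct and follows the same route as the paper's: represent the uniform polymatroid, choose $\beta\in V_X$ lying outside $V_Y$ for every $Y\notin\Delta$ by a union-of-proper-subspaces count over a sufficiently large finite field, and let $\Span(\set{\beta})$ extend the representation to the simple extension induced by $\Delta$. The one place you refine the argument is the reduction to the maximal forbidden sets: the paper excludes all $2^m-2^{m-k}$ sets $Y$ with $X\not\seq Y$ directly (using that $V_X\not\seq V_Y$ since $h(X\cup Y)>h(Y)$) and therefore assumes $|\KK|>2^m-2^{m-k}$, whereas your observation that every such $Y$ is contained in some $J\setminus\set{x}$ with $x\in X$ cuts the count to at most $m$ proper subspaces $V_{J\setminus\set{x}}\cap V_X$ of $V_X$ (proper by your dimension computation $\dim(V_{J\setminus\set{x}}\cap V_X)=h_{|X|}-g_{m-1}<h_{|X|}$) and only requires $|\KK|>m$; your closing verification of the rank function of the extension is also spelled out more explicitly than in the paper, but the underlying key lemma and decomposition are the same.
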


\begin{proof}
Let $\min \Delta=\set{X}$ for a suitable $\emptyset\neq X\subseteq J$ and let $k:=|X|$.
The assumption $\eta(\bfg)=m$ is equivalent to $g_{m-1}>0$ and this implies $h(Y)<h(Z)$ for all $Y\subsetneq Z\subseteq J$. It follows form Lemma \ref{lem:compatible} that $\Delta$ is  compatible with $\Pol$. 

Let $\KK$ be a finite field and let $(V_x)_{x\in J}$ be a $\KK$-vector space 
representation of $\Pol=(J,h,\bfg)$. Then $V_x$ are subspaces of the vector space 
$\KK^{h_m}$ and $\dim V_x=h_1=g_0$ for every $x\in J$. Given any $Y\subseteq J$ we 
define $V_Y:=\sum_{y\in Y} V_y$. If $Y\in \Delta$, then $X\subseteq Y$ and 
$V_X\subseteq V_Y$. If $Y\notin\Delta$, then $X\not\subset Y$ and so $|X\cup Y|>|Y|$.
Hence $\dim(V_X+V_Y)=\dim V_{X\cup Y}=h(X\cup Y)>h(Y)=\dim V_Y$. This shows, that 
$V_X\not\subset V_Y$. Thus $Y\in \Delta$ if and only if $V_X\subseteq V_Y$. 
Since $V_Y\cap V_X$ is a proper subspace of $V_X$ whenever $Y\notin\Delta$ and, so 
assuming $|\KK|>2^m-2^{m-k}$ we have $V_X\cap \bigcup_{Y\in \pow{J}\setminus\Delta}\ V_Y=\bigcup_{Y\in \pow{J}\setminus\Delta}(V_X\cap V_Y)$ is a proper subset of $V_X$. 
This shows that there is $\beta\in V_X$ such that $\beta \notin V_Y$ for all $Y\notin \Delta$. Setting $V_{x_0}:=\Span(\beta)$ we get $(V_{x})_{x\in J\cup\set{x_0}}$ which is 
a vector space representation of the simple extension of $\Pol$ induced by $\Delta$.
\end{proof}

The above proof is not constructive. Using the fact that every uniform polymatroid 
is a sum of uniform matroids one can efficiently build a vector space representation 
of $\Pol$ and then determine a vector $\beta$ that spans the space $V_{x_0}$ but the 
calculations are more complicated. A general outline of this procedure is sketched out 
in \cite[sections III and VI]{FPXY}. 

Let us notice that if $\set{X}=\min \Delta$ then $X$ determines a set of distinguished 
blocks, whose representatives must be present in all authorized sets. Indeed, if 
$\V\in \Gamma$ is an authorized vector, then $\supp \V\in \Delta$, so 
$X\subseteq \supp\V$, thus $v_x\neq 0$ for all $x\in X$. If $|X|\geq 2$ then the access 
structures $\Gamma$ is compartmented by Theorem \ref{thm:last:row}, so all blocks are 
mutually hierarchically independent.

For the sake of completeness, we recall the following result obtained by Farr\`as, 
Padr\'o, Xing and Yang in \cite{FPXY} who characterized the uniform multipartite access 
structures mentioned in Remark \ref{rem:k:regular} and proved that they are ideal. 
Contrary to the above case all participants in any uniform access structure have the 
same rights but different blocks are hierarchically independent. 
Here we reformulate that result as follows. 

\begin{thm}{\cite[Lemma 6.2]{FPXY}}
\label{thm:fpxy}
If the monotony increasing family $\Delta\subseteq \pow{J}$ such that 
$\min\Delta=\powk{k}{J}, 1\leq k\leq m$ is compatible with a uniform 
polymatroid $\Pol$, then the access structure determined by $\Delta$ and $\Pol$ is 
ideal. 
\end{thm}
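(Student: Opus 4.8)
The plan is to prove ideality through Remark~\ref{rem:represent}: it suffices to construct a vector space representation of the simple extension $\Pol'$ of $\Pol$ induced by $\Delta$ over a sufficiently large finite field $\KK$. First I would record two facts. By Lemma~\ref{lem:k-compatible} the compatibility hypothesis is equivalent to $g_{k-1}>g_k$. Next, since $\min\Delta=\powk{k}{J}$, the family $\Delta$ consists exactly of the subsets $Y\seq J$ with $|Y|\geq k$. If $(V_x)_{x\in J}$ is any representation of $\Pol$, writing $V_Y:=\sum_{x\in Y}V_x$, then the line $V_{x_0}:=\Span(\set{\beta})$ extends the representation to $\Pol'$ precisely when $\beta\in V_Y\Longleftrightarrow |Y|\geq k$. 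Because the subspaces $V_Y$ increase with $Y$, it is enough to arrange that $\beta\in V_Y$ for every $k$-element set $Y$ while $\beta\notin V_Z$ for every $(k-1)$-element set $Z$.

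The construction I would use exploits the fact (recalled after Theorem~\ref{thm:singl:delta}) that every uniform polymatroid is a sum of uniform matroids. Setting $c_t:=g_{t-1}-g_t\geq 0$ for $t=1,\dots,m$, the increment sequences telescope to give $\Pol=\bigoplus_{t=1}^m c_t\,U_{t,m}$, where $U_{t,m}$ denotes the rank-$t$ uniform matroid on $J$. Each $U_{t,m}$ is represented over $\KK$ by assigning to the elements of $J$ lines spanned by $m$ vectors in Vandermonde position in $\KK^t$, so that the span of any $|X|$ of them equals $\KK^t$ when $|X|\geq t$ and is a proper subspace of dimension $|X|$ otherwise; the external direct sum of these representations represents $\Pol$. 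The decisive point is that $g_{k-1}>g_k$ forces $c_k\geq 1$, so the decomposition contains at least one rank-$k$ uniform-matroid summand $U_{k,m}$.

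I would then place $\beta$ entirely inside one copy of this $U_{k,m}$ summand. In its ambient space $\KK^k$ the partial span $V_Y^{(k)}:=\sum_{x\in Y}V_x^{(k)}$ equals $\KK^k$ exactly when $|Y|\geq k$ and is a proper subspace of dimension $|Y|$ otherwise. Using the union-of-proper-subspaces lemma recalled above (there are only finitely many sets $Z$ with $|Z|=k-1$), for $|\KK|$ large enough I can choose $\gamma\in\KK^k$ lying in none of the proper subspaces $V_Z^{(k)}$, and set $\beta$ equal to $\gamma$ in that copy and $0$ in all other components. Since zero components lie in every subspace, $\beta\in V_Y$ if and only if $\gamma\in V_Y^{(k)}$, which happens exactly when $|Y|\geq k$ (for $|Y|\leq k-1$ the space $V_Y^{(k)}$ sits inside some $V_Z^{(k)}$ that $\gamma$ avoids, so $\gamma\notin V_Y^{(k)}$). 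Hence $\beta\in V_Y\Longleftrightarrow Y\in\Delta$, so $V_{x_0}=\Span(\set{\beta})$ yields a representation of $\Pol'$ and ideality follows from Remark~\ref{rem:represent}.

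The crux, and the only place where real care is needed, is producing a single $\beta$ that simultaneously belongs to every $k$-fold span and to no $(k-1)$-fold span. The matroid decomposition is exactly what makes this manageable: it concentrates the whole problem into one rank-$k$ uniform-matroid component, where the requirement reduces to the classical Shamir-type threshold condition handled by the subspace-avoidance lemma, and the existence of that component is guaranteed precisely by the compatibility inequality $g_{k-1}>g_k$.
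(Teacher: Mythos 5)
Your proof is correct. Note that the paper does not actually prove this statement itself --- it is imported verbatim as \cite[Lemma 6.2]{FPXY} --- but the argument you give (decomposing the uniform polymatroid as $\sum_t (g_{t-1}-g_t)\,U_{t,m}$, representing each uniform-matroid summand by lines in general position, and using the union-of-proper-subspaces lemma to place $\beta$ in a single rank-$k$ component so that $\beta\in V_Y\Longleftrightarrow |Y|\geq k$) is precisely the method the paper sketches in the paragraph following Theorem \ref{thm:singl:delta} and attributes to \cite[Sections III and VI]{FPXY}; the telescoping identity $\sum_t (g_{t-1}-g_t)\min(|X|,t)=h_{|X|}$ checks out, and the compatibility condition $g_{k-1}>g_k$ from Lemma \ref{lem:k-compatible} is exactly what guarantees the needed $U_{k,m}$ summand, so your write-up fills the gap faithfully.
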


Let us notice that Theorem \ref{thm:singl:delta} shows that the access structures 
presented in Theorem \ref{thm:singleton} are ideal. Now we turn to the objects 
considered in Theorems \ref{thm:last:column} and  \ref{thm:last-but-one:column}.

\begin{cor}
All access structures determined by any uniform polymatroid $\Pol=(J,h,\bfg)$ with the 
increment sequence $\bfg =(g_i)_{i\in I_m}$ such that $|J|\geq 3$ and $g_0\geq g_1 = g_{m-1}>0$ are ideal.
\end{cor}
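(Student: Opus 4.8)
The plan is to reduce the statement entirely to compatibility classifications and ideality results already established, so that almost no new construction is needed. First I would record the shape of $\bfg$: since $\bfg$ is nonincreasing and $g_1 = g_{m-1}$, the hypothesis $g_0 \geq g_1 = g_{m-1} > 0$ forces $g_1 = g_2 = \cdots = g_{m-1} > 0$, and in particular $g_{m-1} > 0$, so $\eta(\bfg) = m$. Thus the polymatroid lies in column E of Table \ref{tab:general} when $g_0 > g_1$, and in column F when $g_0 = g_1$, and the earlier compatibility descriptions apply.

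Next I would determine every compatible family. By Theorem \ref{thm:last:column}(1) (the case $g_0 = g_1$, so $g_0 = g_{m-1}$) and Theorem \ref{thm:last-but-one:column}(1) (the case $g_0 > g_1 = g_{m-1}$), any monotone increasing family $\Delta \subseteq \pow{J} \setminus \set{\emptyset}$ compatible with $\Pol$ satisfies exactly one of the following: (i) $\min \Delta = \set{X}$ for some nonempty $X \subseteq J$; or (ii) $g_0 > g_1$ and $\min \Delta = \powk{1}{J}$. Hence it suffices to prove ideality in these two cases.

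For case (i) I would invoke Theorem \ref{thm:singl:delta}: since $\eta(\bfg) = m$ and $|\min \Delta| = 1$, its hypotheses hold, so the access structure $\G(\Pi,\Pol,\Delta)$ is ideal; concretely that theorem supplies a vector $\beta \in V_X$ lying outside every $V_Y$ with $Y \notin \Delta$, giving a representation of the simple extension. For case (ii) I would observe that $\min \Delta = \powk{1}{J}$ is the $k=1$ instance of Theorem \ref{thm:fpxy} (equivalently \cite[Lemma 6.2]{FPXY}), whose compatibility assumption is guaranteed here by Theorem \ref{thm:last-but-one:column}(1); that theorem then yields ideality directly. Combining the two cases shows that every access structure determined by such a polymatroid is ideal.

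I do not expect a genuine obstacle, since the argument is an assembly of prior results rather than a fresh construction. The only point requiring care is confirming that Theorems \ref{thm:last:column} and \ref{thm:last-but-one:column} together exhaust \emph{all} compatible families, so that cases (i) and (ii) are jointly complete; this is immediate once $\eta(\bfg) = m$ has been recorded, after which Theorems \ref{thm:singl:delta} and \ref{thm:fpxy} close out the two alternatives.
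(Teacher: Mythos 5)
Your proposal is correct and follows essentially the same route as the paper: both reduce to the compatibility classifications of Theorems \ref{thm:last:column}(1) and \ref{thm:last-but-one:column}(1) to conclude that $\min\Delta=\powk{1}{J}$ or $|\min\Delta|=1$, and then dispose of the two cases via Theorem \ref{thm:fpxy} and Theorem \ref{thm:singl:delta}, respectively. The only difference is that you spell out the intermediate observation $\eta(\bfg)=m$ and the hypothesis checks explicitly, which the paper leaves implicit.
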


\begin{proof}
We want to prove that for every increasing family $\Delta\subseteq \pow{J}\setminus \set{\emptyset}$ that is compatible with $\Pol$ the access structure determined by 
$\Pol$ and $\Gamma$ is ideal. The assumption $g_0\geq g_1 = g_{m-1}>0$ combined with 
Theorems \ref{thm:last:column} and  \ref{thm:last-but-one:column} imply that $\min \Delta= \powk{1}{J}$ or $|\min\Delta|=1$. In the former case the claim follows from 
Theorem \ref{thm:fpxy}. In the latter case applying Theorem \ref{thm:singl:delta} 
completes the proof. 
\end{proof}

\section{Conclusion}
\label{sec:concl}

This paper is intended to initiate research on the access structures obtained from 
polymatroids. This choice is motivated by the fact that access structures determined by 
polymatroids are matroid ports, i.e., they satisfy a necessary condition to be ideal. 
In this paper our investigation is limited to uniform polymatroids. We are particularly 
interested in the hierarchical order on the set of participants determined by  the 
access structures considered here. Most of the results in the literature that is devoted 
to discussing this subject consider access structures which are compartmented or totally 
hierarchical. We showed that all non compartmented access structure with at least three 
parties considered in this work are partially hierarchical. It is worth pointing out 
that some examples of partially hierarchical access structures are presented by  
Farr\`as et al. \cite{FPXY}, but they are not determined by uniform polymatroids.
There is good reason to deal with uniform polymatroids. In contrast to general 
polymatroids, every uniform polymatroid determines ideal access structures. It follows 
from the fact that every uniform polymatroid is representable. This allows building 
simple extensions of such polymatroids, which are also representable. Then according to 
Remark \ref{rem:represent} the suitable access structures obtained from those 
polymatroids are ideal. 

The conditions presented in Section \ref{sec:acc:str} are used to prove Theorems 
\ref{thm:extreme:columns} and \ref{thm:last:row} which show that most of access 
structures obtained from uniform polymatroids are compartmented (they are placed in the 
cells D2 and B3 - D3 of Table \ref{tab:general}). The exact hierarchy in access 
structures in the cells A2, B2, D1-F1, E2-3 and F2-3 is described in 
Theorems \ref{thm:g2:zero} - \ref{thm:singleton}. 

The most diverse collection of objects contains the cell C2 where both compartmented and 
hierarchical access structures can be found but further precise investigation of that 
area is necessary. In general, the results presented here do not exhaust the 
topic and leaves space for further research.

\begin{con}
Let $\Pi=(P_x)_{x\in J}$ be a partition of a set of participants $P$ and let 
$\Pol=(J,h,\bfg)$ be uniform polymatroid with $2\leq\eta(\bfg)<m$. Additionally, let 
$\Delta\seq \pow{J}\setminus\set{\emptyset}$ be a  monotone increasing family with 
$\mu(\Delta)=1$ that is compatible with $\Pol$. The hierarchical order in $\Pi$ induced 
by $\GPPD$ is of the type $\ord{Y}{X}$ for a certain disjoint subsets $X,\ Y$ of $J$.
\end{con}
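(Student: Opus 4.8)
The plan is to reduce the problem to two regimes according to the value of $n:=\eta(\bfg)$ and to show that in each the order is forced into the claimed shape. First I would record the structural facts that hold throughout. Since $2\leq n<m$ we have $g_{m-1}=0<g_0$, so neither $\eta(\bfg)=1$ nor $g_0=\dots=g_{m-1}$ holds; hence Corollary \ref{cor:equiv} guarantees that no two distinct blocks are hierarchically equivalent, so $\preccurlyeq$ is genuinely a partial order and we must land in an $\ord{Y}{X}$ type rather than a $\pord{Y}{X}$ type. Moreover Theorem \ref{thm:height} tells us every chain has at most two blocks, so the poset has height one: writing $X$ for the blocks strictly superior to some block and $Y$ for the blocks strictly inferior to some block, the absence of chains of length three gives $X\cap Y=\emptyset$, and every remaining block is isolated (incomparable to all others). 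It therefore only remains to prove that the comparability relation is complete bipartite from $Y$ to $X$, i.e. $P_y\prec P_x$ for all $y\in Y$ and $x\in X$.

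For $n=2$ there is nothing to do: Theorem \ref{thm:g2:zero} already classifies all compatible $\Delta$ and states that the induced order is exactly $\ord{J\setminus X}{X}$. So I would concentrate on $n\geq 3$ (the cell C2 regime), where the decisive tool is Corollary \ref{cor:n=m}. If the structure is compartmented we are done with $X=Y=\emptyset$, so assume there is a comparable pair, written $P_y\prec P_x$. Because $g_1>0=g_{m-1}$ we have $g_1\neq g_{m-1}$, so the contrapositive of Corollary \ref{cor:n=m} forbids the existence of any $Z\in\min\Delta$ with $3\leq|Z\cup\set{x,y}|\leq n$. The hypothesis $\mu(\Delta)=1$ supplies a singleton $\set{s}\in\min\Delta$; applying the forbidden-range condition to $Z=\set{s}$ and using $n\geq 3$ forces $|\set{s}\cup\set{x,y}|\leq 2$, hence $s\in\set{x,y}$. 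Since $\set{s}\in\min\Delta$ has one element and $g_1>0$, Proposition \ref{prop:not:inferior} says $P_s$ is not inferior to any block, while $P_y$ is inferior; thus $s\neq y$ and so $s=x$. In particular $\set{x}$ is the unique singleton of $\min\Delta$ and $P_x$ is not inferior.

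Finally I would upgrade this to uniqueness of the superior block. For an arbitrary comparable pair $P_v\prec P_u$, the same contrapositive of Corollary \ref{cor:n=m}, applied now with $Z=\set{x}$ (the singleton just identified), forces $|\set{x}\cup\set{u,v}|\leq 2$, i.e. $x\in\set{u,v}$; as $P_x$ is not inferior whereas $P_v$ is, we get $x=u$. Hence every superior block coincides with $x$, so $X=\set{x}$, and the height-one order with single maximal block $x$ and no equivalences is precisely $\ord{Y}{\set{x}}$ with $Y=\set{y:P_y\prec P_x}$, the sets $Y$ and $\set{x}$ being disjoint by height one. This completes the classification. The main obstacle is exactly the step collapsing $X$ to a single block: it succeeds only because the two hypotheses cooperate, $\mu(\Delta)=1$ furnishing the minimal singleton against which Corollary \ref{cor:n=m} can be played off, and $n\geq 3$ making the forbidden interval $[3,n]$ non-degenerate so that the corollary has any content at all. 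For $n=2$ that interval is empty and the corollary is vacuous, which is precisely why $\eta(\bfg)=2$ admits the richer family of superior sets recorded in Theorem \ref{thm:g2:zero} and must be handled separately.
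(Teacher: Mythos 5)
The statement you are proving is not proved anywhere in the paper: it is stated as an open conjecture in Section~\ref{sec:concl}, described there as only ``partially confirmed'' by Theorem~\ref{thm:height}, with the cell C2 regime supported only by computer calculations. So there is no proof of the authors' to compare against, and your argument, if it stands, actually settles the conjecture. Having checked each step against the paper's results, I believe it does stand. The framing is sound: $2\leq\eta(\bfg)<m$ forces $g_0>g_{m-1}=0$, so Corollary~\ref{cor:equiv} rules out hierarchically equivalent blocks (and Theorem~\ref{thm:connected} rules out redundant blocks, which would otherwise be pairwise equivalent), Theorem~\ref{thm:height} gives height one, and Theorem~\ref{thm:g2:zero} disposes of $\eta(\bfg)=2$ completely. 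The genuinely new contribution is the $\eta(\bfg)\geq 3$ case: since $g_1>0=g_{m-1}$, the contrapositive of Corollary~\ref{cor:n=m}, played against the singleton $\set{s}\in\min\Delta$ supplied by $\mu(\Delta)=1$, forces $s\in\set{x,y}$ for every comparable pair (the window $[3,\eta(\bfg)]$ contains $3$ precisely because $\eta(\bfg)\geq 3$), and Proposition~\ref{prop:not:inferior} (applicable since $g_1>0$) then identifies $s$ as the superior member of every such pair. Collapsing the set of maximal blocks to the single block $P_s$ makes the complete-bipartite requirement of $\ord{Y}{X}$ vacuous, which is exactly the pattern visible in Table~\ref{tab:case4} (types $E$ and $M$, always with apex the unique singleton of $\min\Delta$). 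Two caveats. First, your proof inherits all of its weight from Corollary~\ref{cor:n=m} and Proposition~\ref{prop:not:inferior}; the paper's proof of Corollary~\ref{cor:n=m} contains some loosely worded case splits (``in particular $|X|=1$'' where only $1\leq|X|\leq n-2$ follows, and a case distinction ``$|X|>2$'' versus ``$|X|=1$'' that silently covers $|X|=2$ via Lemma~\ref{lem:one:minimal}), so if you present this as a resolution of the conjecture you should re-verify that corollary independently. Second, your appeal to Theorem~\ref{thm:height} in the opening paragraph is dispensable: in the $\eta(\bfg)\geq3$ case the single-apex conclusion already implies height one, and in the $\eta(\bfg)=2$ case Theorem~\ref{thm:g2:zero} gives everything; trimming that dependency would make the argument cleaner.
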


This conjecture is partially confirmed by Theorem \ref{thm:height} that states that 
every chain in hierarchical access structure contains 1 or 2 elements. This fact applies 
only to access structures induced by uniform polymatroids. For other polymatroids one 
can construct hierarchical access structures with chains of arbitrary length. 

Some multipartite access structures determined by uniform polymatroids contain 
redundant blocks or different blocks that are equivalent. We treat such objects as 
improperly constructed. Fortunately, they appear only as extreme cases (cf. Corollary 
\ref{cor:equiv} and Theorem \ref{thm:connected}).

The results presented in Section \ref{sec:hier:pord} do not depend on the particular 
values of the rank function of $\Pol$ (or equivalently the values of $\bfg$). 
The only impact on the hierarchy of the described structures have the sequence 
of signatures of differences of consecutive entries of $\bfg$. This observation 
is additionally confirmed by computer calculations which suggest the following unproved 
conjecture. 

\begin{con}
\label{con:depend:on:sig}
Let $\bfg=(g_i)_{i\in I_m}$ and $\bfg'=(g'_i)_{i\in I_m}$ be the increment 
sequences of uniform polymatroids $\Pol$ and $\Pol'$ with the ground set $J$, 
respectively  such that $\sgn(g_{i-1}-g_i)=\sgn(g'_{i-1}-g'_i)$ for all 
$i=1,\ldots,m$. If a monotone increasing family $\Delta$ is compatible with $\Pol$ 
and $\Pol'$, then the hierarchical preorders on $\Pi$ determined by 
$\Gamma(\Pi,\Pol,\Delta)$ and $\Gamma(\Pi,\Pol',\Delta)$ are equal.
\end{con}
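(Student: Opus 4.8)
The plan is to reduce the statement to pairwise comparisons: it suffices to show that for every ordered pair $x,y\in J$ the relation $P_y\preccurlyeq P_x$ holds in $\G(\Pi,\Pol,\Delta)$ if and only if it holds in $\G(\Pi,\Pol',\Delta)$, and since the sign hypothesis is symmetric in $\bfg$ and $\bfg'$ one implication will suffice. First I would record the invariants controlled by the sign sequence. As $\bfg,\bfg'$ are nonincreasing, each $\sgn(g_{i-1}-g_i)\in\set{0,1}$, equalling $0$ exactly when $g_{i-1}=g_i$, so the hypothesis says the two sequences have identical plateau patterns. Consequently, for all $i\leq j$ one has $g_i=g_j\Leftrightarrow g'_i=g'_j$ and $g_i>0\Leftrightarrow g'_i>0$, and in particular $\eta(\bfg)=\eta(\bfg')$ since the last strict drop sits at the same index. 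Thus every relation of the form $g_i\lessgtr g_j$ or $g_i\gtrless 0$ that appears as a hypothesis or conclusion in Section~\ref{sec:acc:str} is a function of the common sign pattern alone.

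Next I would isolate a clean lemma: under the sign hypothesis, $\Delta$ is compatible with $\Pol$ if and only if it is compatible with $\Pol'$, so the standing assumption of the conjecture is in fact automatic. This follows by rewriting the two conditions of Lemma~\ref{lem:csirmaz} in terms of $\bfg$. Condition~(1) for a pair $Y\subsetneq X$ reads $\sum_{i=|Y|}^{|X|-1}g_i>0$, which by monotonicity is equivalent to $g_{|Y|}>0$, that is $|Y|<\eta(\bfg)$. Condition~(2) for $X,Y$ with $a:=|X\cap Y|$ and $b:=|X\cup Y|$ rearranges (taking $|X|\geq|Y|$) to $\sum_{i=a}^{|X|-1}g_i>\sum_{i=|Y|}^{b-1}g_i$; since these two windows have equal length with the left one consisting of smaller indices, strictness holds exactly when $\bfg$ is non-constant on $\set{a,\dots,b-1}$, i.e. when $g_a>g_{b-1}$. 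Both criteria depend only on $\eta$ and on the order relations among the $g_i$, hence only on the sign pattern, so they hold for $\bfg$ precisely when they hold for $\bfg'$.

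With compatibility settled, the core task is to show comparability itself depends only on $(\Delta,\text{sign pattern})$. I would exploit the fact that the necessary conditions already proved -- Propositions~\ref{prop:not:comparable}, \ref{prop:not:inferior}, \ref{prop:two:offmin}, \ref{prop:one:offmin} and Corollaries~\ref{cor:n=m}, \ref{cor:equiv} -- have hypotheses and conclusions phrased purely through $\min\Delta$ and the quantities $g_i>0$, $g_i=g_j$, $\eta(\bfg)$. By the first paragraph these inputs coincide for $\Pol$ and $\Pol'$, so the same pairs are forbidden from being comparable (respectively inferior, equivalent) in both structures. In the complementary direction, the constructive Theorems~\ref{thm:g2:zero}, \ref{thm:last:column}, \ref{thm:last-but-one:column} and \ref{thm:singleton} describe the exact type $\ord{Y}{X}$ or $\pord{Y}{X}$ in every cell of Table~\ref{tab:general} except~C2, with $X,Y$ read off from $\min\Delta$ and the choice between a strict order $\ord{Y}{X}$ and a preorder $\pord{Y}{X}$ carrying equivalent blocks governed by whether $g_0=g_{m-1}$ (Corollary~\ref{cor:equiv}) -- again sign data. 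Since the cell of $(\Delta,\bfg)$ is itself determined by the sign pattern, matching the cells of $(\Delta,\bfg)$ and $(\Delta,\bfg')$ yields identical preorders outside~C2.

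The main obstacle is therefore cell~C2, the case $\mu(\Delta)=1$ with $3\leq\eta(\bfg)\leq m-1$, which the paper leaves unclassified and which is the content of the preceding unlabeled conjecture. To treat it I would prove a sign-pattern characterization of authorization directly: $\U\in\G$ if and only if some $W\in\Delta$ admits a base $\V\in\B(\Pol,W)$ with $\V\leq\U$, equivalently (by the polymatroid rank formula) $|\U_T|\geq\sum_{i=|W|-|T|}^{|W|-1}g_i$ for every $T\seq W$ -- a majorization comparison between the sorted entries of $\U$ and the tail of $\bfg$. Testing $P_y\preccurlyeq P_x$ on vertex vectors, whose entries are exactly $g_0,\dots,g_{|W|-1}$ by Lemma~\ref{lem:vertex2} and the arguments of Section~\ref{sec:acc:str}, turns the requirement $\V-\E{y}+\E{x}\in\G$ into a comparison of partial sums of $\bfg$. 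The delicate point -- and the step I expect to be hardest -- is showing that the truth value of this comparison is insensitive to the actual magnitudes and flips only when the plateau pattern changes. Establishing this majorization invariance for~C2 would complete the proof and would simultaneously confirm that the order type there is again of the form $\ord{Y}{X}$.
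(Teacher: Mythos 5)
You should first note that the paper does not prove this statement at all: it is stated as Conjecture~\ref{con:depend:on:sig}, explicitly described as ``unproved'' and supported only by computer calculations for $m=4$. There is therefore no proof in the paper to compare yours against, and the question is whether your proposal closes the gap on its own. It does not. The sound parts of your plan are the observations that the sign hypothesis forces identical plateau patterns (hence $\eta(\bfg)=\eta(\bfg')$ and the coincidence of all relations $g_i=g_j$ and $g_i>0$), that compatibility with $\Pol$ and with $\Pol'$ are then equivalent via the Csirmaz conditions (your window-shifting computation for condition (2) is correct: strictness holds exactly when $g_a>g_{b-1}$ for $a=|X\cap Y|$, $b=|X\cup Y|$), and that in every cell of Table~\ref{tab:general} other than C2 the preorder type is already pinned down by $\min\Delta$ together with sign data, so the two structures agree there. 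But all of that merely repackages what the paper already establishes.

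The genuine content of the conjecture is precisely cell C2 ($\mu(\Delta)=1$, $|\min\Delta|>1$, $3\leq\eta(\bfg)\leq m-1$), which the paper leaves unclassified, and there your argument stops at a declaration of intent. The reduction of $\U\in\G$ to the majorization conditions $|\U_T|\geq\sum_{i=|W|-|T|}^{|W|-1}g_i$ for $T\seq W\in\Delta$ is a correct (if unproved in the paper) polymatroid fact, but the step you yourself flag as hardest --- that the truth value of the resulting comparison of partial sums of $\bfg$, quantified over all minimal authorized vectors rather than only vertex vectors, depends only on the plateau pattern --- is exactly the open problem, and no argument is offered for it. Testing $P_y\preccurlyeq P_x$ only on vertex vectors is also insufficient as stated: condition \eqref{eq:preceq} must be verified on all of $\min\G$, and in cell C2 the minimal vectors of $\G$ are not all vertex vectors, so even the setup of your final step needs justification. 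As it stands the proposal is a plausible research programme, not a proof.
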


Investigating which of the structures considered in this article are ideal is another 
open issue. A sufficient condition can be obtained by proving that the simple extension 
of a given uniform polymatroid is representable (cf. \cite[Corollary 6.7]{FMFP}). This 
idea has been used to show that the access structure discussed in Theorems 
\ref{thm:g2:zero} and \ref{thm:last:column} - \ref{thm:singleton} are ideal. By 
analyzing the structure of the vector space representation of the polymatroid, one can 
also prove the ideality of many other access structures. However, we cannot rule out the 
existence of non-ideal access structures derived from uniform polymatroids. In this case 
we have the following question.   
Is it true that upper bound for the information ratio of access structures obtained form 
uniform polymatroids can be significantly less than the upper bound for the information 
ratio of arbitrary matroid ports?
Let us recall, the information ratio of a secret sharing scheme is the ratio between the 
maximum length of the shares and the length of the secret with a finite domain of 
shares. The \emph{information ratio of an access structure} $\Gamma$ is the infimum of 
all information ratios taken over all secret sharing schemes  with the access structure 
$\Gamma$.

\section*{Appendix}

Table 2 presents hierarchical (pre)orders of access structures determined by uniform 
polymatroids $\Pol=(J,h,\bfg)$ where $J=\set{1,2,3,4}$. It is worth pointing out that 
types of orders are invariant with respect to permutations of elements of $J$, so 
monotony increasing families appearing in the table are representatives of invariant 
classes of the permutation group $S_4$ acting on $J$. For example, the monotone 
increasing families $\Delta_1$ and $\Delta_2$ such that 
$\min\Delta_1=\set{\set{1},\set{2,3}}$ and $\min\Delta_2=\set{\set{2},\set{3,4}}$ 
represent the same invariant class. Assuming that Conjecture \ref{con:depend:on:sig} is 
true, the Table presents a complete overview of hierarchical orders of all access 
structures obtained from uniform polymatroids $(J,h,\bfg)$ with $|J|=4$. 
If the monotonic family appearing in the first column is not compatible with the 
polymatroid represented by the values of $\bfg$ in the top rows, then the suitable cell 
of the table contains $-$. Otherwise, the types of (pre)orders are denoted according to 
the following key. 

\vspace{8mm}

\begin{flushleft} 
\begin{tikzpicture}
[elem/.style={circle,fill=black!50,draw=black,thick,inner sep=2pt,minimum size=5},
hook/.style={circle,fill=black!0,thin,inner sep=0pt}]
{

\node(set) at (0.0, 0.0) [ellipse,draw] {$P_1 P_2 P_3 P_4$};
\node[hook,label=below: {$T:= \pord{J_4}{\emptyset}  $}] (caption1) at (0.1, -0.7)  {};  
}
\end{tikzpicture}
\end{flushleft}

\vspace{-27.5mm}

\begin{center} 
\begin{tikzpicture}
[elem/.style={circle,fill=black!50,draw=black,thick,inner sep=2pt,minimum size=5},
hook/.style={circle,fill=black!0,thin,inner sep=0pt}]
{
\node[elem,label=above: $P_{1}$] (B1)  at (-0.5, 0.0) {} ;
\node[elem,label=above: $P_{2}$] (A1)  at (0.5, 0.0) {};
\node[elem,label=above: $P_{3}$] (A1)  at (1.5, 0.0) {};
\node[elem,label=above: $P_{4}$] (A1)  at (2.5, 0.0) {};
\node[hook,label=below: {\hspace{25mm}$C:= \pord{\emptyset}{J_4}  = \ord{\emptyset}{J_4}  $}] (caption1) at (0.1, -0.7)  {};  
}
\end{tikzpicture}
\end{center}

\vspace{2mm}

\begin{flushleft} 
\begin{tikzpicture}
[elem/.style={circle,fill=black!50,draw=black,thick,inner sep=2pt,minimum size=5},
hook/.style={circle,fill=black!0,thin,inner sep=0pt}]
{
\node[elem,label=above: $P_{1}$] (B1)  at (0.0, 1.5) {} ;

\node(set) at (0.0, 0.0) [ellipse,draw] {$P_2 P_3 P_4$};

\draw[->] (node cs:name=set)-- (node cs:name=B1);
\node[hook,label=below: {$I:= \pord{\{2,3,4\}}{\{1\}}  $}] (caption1) at (0.1, -0.7)  {};  
}
\end{tikzpicture}
\end{flushleft}

\vspace{-42.5mm}

\begin{center} 
\begin{tikzpicture}
[elem/.style={circle,fill=black!50,draw=black,thick,inner sep=2pt,minimum size=5},
hook/.style={circle,fill=black!0,thin,inner sep=0pt}]
{
\node[elem,label=above: $P_{1}$] (B1)  at (1.0, 1.5) {} ;
\node[elem,label=below: $P_{4}$] (A1)  at (2, 0.0) {};
\node[elem,label=below: $P_{3}$] (A2)  at (1, 0.0) {};
\node[elem,label=below: $P_{2}$] (A3)  at (0, 0.0) {};
\draw[->] (node cs:name=A1)-- (node cs:name=B1);
\draw[->] (node cs:name=A2)-- (node cs:name=B1);
\draw[->] (node cs:name=A3)-- (node cs:name=B1);
\node[hook,label=below: {\hspace{25mm}$M:= \ord{\{2,3,4\}}{\{1\}}  $}] (caption1) at (0.1, -0.7)  {};  
}
\end{tikzpicture}
\end{center}

\vspace{2mm}

\begin{flushleft} 
\begin{tikzpicture}
[elem/.style={circle,fill=black!50,draw=black,thick,inner sep=2pt,minimum size=5},
hook/.style={circle,fill=black!0,thin,inner sep=0pt}]
{
\node[elem,label=above: $P_{1}$] (B1)  at (-0.5, 1.5) {} ;
\node[elem,label=above: $P_{2}$] (B2)  at (0.7, 1.5) {} ;

\node(set) at (0.1, 0.0) [ellipse,draw] {$P_3 P_4$};

\draw[<-] (node cs:name=B1)-- (node cs:name=set);
\draw[<-] (node cs:name=B2)-- (node cs:name=set);

\node[hook,label=below: {$V:= \pord{\{3,4\}}{\{1,2\}}  $}] (caption1) at (0.1, -0.7)  {};  
}
\end{tikzpicture}
\end{flushleft}

\vspace{-42.0mm}

\begin{center} 
\begin{tikzpicture}
[elem/.style={circle,fill=black!50,draw=black,thick,inner sep=2pt,minimum size=5},
hook/.style={circle,fill=black!0,thin,inner sep=0pt}]
{
\node[elem,label=above: $P_{1}$] (B1)  at (1.5, 1.5) {} ;
\node[elem,label=above: $P_{2}$] (B2)  at (0.5, 1.5) {} ;
\node[elem,label=below: $P_{3}$] (A1)  at (1.5, 0.0) {};
\node[elem,label=below: $P_{4}$] (A2)  at (0.5, 0.0) {};
\draw[->] (node cs:name=A1)-- (node cs:name=B1);
\draw[->] (node cs:name=A1)-- (node cs:name=B2);
\draw[->] (node cs:name=A2)-- (node cs:name=B1);
\draw[->] (node cs:name=A2)-- (node cs:name=B2);
\node[hook,label=below: {\hspace{25mm}$K:= \ord{\{3,4\}}{\{1,2\}}  $}] (caption1) at (0.1, -0.7)  {};  
}
\end{tikzpicture}
\end{center}

\vspace{2mm}

\begin{flushleft} 
\begin{tikzpicture}
[elem/.style={circle,fill=black!50,draw=black,thick,inner sep=2pt,minimum size=5},
hook/.style={circle,fill=black!0,thin,inner sep=0pt}]
{
\node[elem,label=above: $P_{1}$] (B1)  at (0, 1.5) {} ;
\node[elem,label=below: $P_{4}$] (A1)  at (0.0, 0.0) {};
\node[elem,label=below: $P_{2}$] (C1)  at (1.0, 0.75) {};
\node[elem,label=below: $P_{3}$] (C2)  at (2.0, 0.75) {};
\draw[->] (node cs:name=A1)-- (node cs:name=B1);
\node[hook,label=below: {$E:= \ord{\{4\}}{\{1\}}  $}] (caption1) at (0.1, -0.7)  {};  
}
\end{tikzpicture}
\end{flushleft}

\vspace{-42.0mm}

\begin{center} 
\begin{tikzpicture}
[elem/.style={circle,fill=black!50,draw=black,thick,inner sep=2pt,minimum size=5},
hook/.style={circle,fill=black!0,thin,inner sep=0pt}]
{
\node[elem,label=above: \hspace{-5mm} $P_{1}$] (B1)  at (0, 1.5) {} ;
\node[elem,label=above: \hspace{1mm} $P_{2}$] (B2)  at (1, 1.5) {} ;
\node[elem,label=above: \hspace{1mm} $P_{3}$] (B3)  at (2, 1.5) {} ;
\node[elem,label=below: \hspace{2mm} $P_{4}$] (A1)  at (1.0, 0.0) {};
\draw[->] (node cs:name=A1)-- (node cs:name=B1);
\draw[->] (node cs:name=A1)-- (node cs:name=B2);
\draw[->] (node cs:name=A1)-- (node cs:name=B3);
\node[hook,label=below: {\hspace{60.0mm}$W:= \pord{\{4\}}{\{1,2,3\}}  = \ord{\{4\}}{\{1,2,3\}}  $}] (caption1) at (0.1, -0.7)  {};  
}
\end{tikzpicture}
\end{center}

\newpage

\begin{landscape}
\vfill

\begin{table}[ht]
\caption{Access structures in the case $m=4$.}
\vspace{1mm}
\label{tab:case4}
\begin{tabular}{|c|p{65mm} c|c|c|c|c|c|c|c|c|c|c|c|c|c|c|c|}
\hline
  & & & 1 & 2 & 3 & 4 & 5 & 6 & 7 & 8 & 9 & 10 & 11 & 12 & 13 & 14 & 15 \\
\hline 
    &\cellcolor{gray!30} &\cellcolor{gray!30} $g_0$ & \cellcolor{gray!30} 1 &\cellcolor{gray!30} 2 & \cellcolor{gray!30} 1 & \cellcolor{gray!30} 3\cellcolor{gray!30} &\cellcolor{gray!30} 2 &\cellcolor{gray!30} 2 &\cellcolor{gray!30} 1 & \cellcolor{gray!30} 3 &\cellcolor{gray!30} 2 & \cellcolor{gray!30} 4 & \cellcolor{gray!30} 3\cellcolor{gray!30} &\cellcolor{gray!30} 3 &\cellcolor{gray!30} 2 &\cellcolor{gray!30} 2 & \cellcolor{gray!30}1  \\
\hhline{|>{\arrayrulecolor{black}}->{\arrayrulecolor{gray!30}}->{\arrayrulecolor{gray!30}}->{\arrayrulecolor{black}}->{\arrayrulecolor{black}}->{\arrayrulecolor{black}}->{\arrayrulecolor{black}}->{\arrayrulecolor{black}}->{\arrayrulecolor{black}}->{\arrayrulecolor{black}}->{\arrayrulecolor{black}}->{\arrayrulecolor{black}}->{\arrayrulecolor{black}}->{\arrayrulecolor{black}}->{\arrayrulecolor{black}}->{\arrayrulecolor{black}}->{\arrayrulecolor{black}}->{\arrayrulecolor{black}}|-|}
    &\cellcolor{gray!30} &\cellcolor{gray!30} $g_1$ & \cellcolor{gray!30} 0 &\cellcolor{gray!30} 1 & \cellcolor{gray!30} 1 & \cellcolor{gray!30} 2\cellcolor{gray!30} &\cellcolor{gray!30} 2 &\cellcolor{gray!30} 1 &\cellcolor{gray!30} 1 & \cellcolor{gray!30} 2 &\cellcolor{gray!30} 2 & \cellcolor{gray!30} 3 & \cellcolor{gray!30} 3\cellcolor{gray!30} &\cellcolor{gray!30} 2 &\cellcolor{gray!30} 2 &\cellcolor{gray!30} 1 & \cellcolor{gray!30}1  \\
\hhline{|>{\arrayrulecolor{black}}->{\arrayrulecolor{gray!30}}->{\arrayrulecolor{gray!30}}->{\arrayrulecolor{black}}->{\arrayrulecolor{black}}->{\arrayrulecolor{black}}->{\arrayrulecolor{black}}->{\arrayrulecolor{black}}->{\arrayrulecolor{black}}->{\arrayrulecolor{black}}->{\arrayrulecolor{black}}->{\arrayrulecolor{black}}->{\arrayrulecolor{black}}->{\arrayrulecolor{black}}->{\arrayrulecolor{black}}->{\arrayrulecolor{black}}->{\arrayrulecolor{black}}->{\arrayrulecolor{black}}|-|}
    &\cellcolor{gray!30} &\cellcolor{gray!30} $g_2$ & \cellcolor{gray!30} 0 &\cellcolor{gray!30} 0 & \cellcolor{gray!30} 0 & \cellcolor{gray!30} 1\cellcolor{gray!30} &\cellcolor{gray!30} 1 &\cellcolor{gray!30} 1 &\cellcolor{gray!30} 1 & \cellcolor{gray!30} 1 &\cellcolor{gray!30} 1 & \cellcolor{gray!30} 2 & \cellcolor{gray!30} 2\cellcolor{gray!30} &\cellcolor{gray!30} 2 &\cellcolor{gray!30} 2 &\cellcolor{gray!30} 1 & \cellcolor{gray!30}1  \\
\hhline{|>{\arrayrulecolor{black}}->{\arrayrulecolor{gray!30}}->{\arrayrulecolor{gray!30}}->{\arrayrulecolor{black}}->{\arrayrulecolor{black}}->{\arrayrulecolor{black}}->{\arrayrulecolor{black}}->{\arrayrulecolor{black}}->{\arrayrulecolor{black}}->{\arrayrulecolor{black}}->{\arrayrulecolor{black}}->{\arrayrulecolor{black}}->{\arrayrulecolor{black}}->{\arrayrulecolor{black}}->{\arrayrulecolor{black}}->{\arrayrulecolor{black}}->{\arrayrulecolor{black}}->{\arrayrulecolor{black}}|-|} 
   &\cellcolor{gray!30} \diag{4}{65mm}{$\ \min\Delta$} &\cellcolor{gray!30} $g_3$ & \cellcolor{gray!30} 0 &\cellcolor{gray!30} 0 & \cellcolor{gray!30} 0 & \cellcolor{gray!30} 0\cellcolor{gray!30} &\cellcolor{gray!30} 0 &\cellcolor{gray!30} 0 &\cellcolor{gray!30} 0 & \cellcolor{gray!30} 1 &\cellcolor{gray!30} 1 & \cellcolor{gray!30} 1 & \cellcolor{gray!30} 1\cellcolor{gray!30} &\cellcolor{gray!30} 1 &\cellcolor{gray!30} 1 &\cellcolor{gray!30} 1 & \cellcolor{gray!30}1  \\
\hline 
\ 1 \rule[-2mm]{0mm}{.7cm}& \cellcolor{gray!30} {$\{\{1\}\}$} &\cellcolor{gray!30} & -- & --  & --  & -- & -- & -- & --  & $M$ & $M$ & $M$  & $M$ & $M$ & $M$ & $M$ & $I$\\
\hline
\ 2 \rule[-2mm]{0mm}{.7cm}& \cellcolor{gray!30} {$\{\{1\},\{2\}\}$} &\cellcolor{gray!30} & -- & --  & --  & $C$ & -- & -- & --  & -- & --  & $C$  & -- & -- & -- & -- & --\\
\hline
\ 3 \rule[-2mm]{0mm}{.7cm}& \cellcolor{gray!30} {$\{\{1\}, \{2\}, \{3\}\}$} & \cellcolor{gray!30} & -- & $W$  & --  & $C$ & -- & -- & --  & $C$ & --  & $C$  & -- & -- & -- & -- & --\\
\hline
\ 4 \rule[-2mm]{0mm}{.7cm}& \cellcolor{gray!30} {$\{\{1\}, \{2\}, \{3\}, \{4\}\}$} &\cellcolor{gray!30} & $T$ & $C$  & --  & $C$ & -- & $C$ & --  & $C$ & --  & $C$  & -- & $C$ & -- & $C$ & --\\
\hline
\ 5 \rule[-2mm]{0mm}{.7cm}& \cellcolor{gray!30} {$\{\{1\}, \{2\}, \{3,4\}\}$} &\cellcolor{gray!30} & -- & $K$  & --  & $C$ & -- & -- & --  & $C$ & --  & $C$  & -- & -- & -- & -- & --\\
\hline
\ 6 \rule[-2mm]{0mm}{.7cm}& \cellcolor{gray!30} {$\{\{1\}, \{2,3\}\}$} &\cellcolor{gray!30} & -- & --  & --  & $E$ & $E$ & -- & --  & -- & --  & $C$  & $C$ & -- & -- & -- & --\\
\hline
\ 7 \rule[-2mm]{0mm}{.7cm}& \cellcolor{gray!30} {$\{\{1\}, \{2,3\}, \{2,4\}\}$} &\cellcolor{gray!30} & -- & --  & --  & $C$ & $C$ & -- & --  & -- & --  & $C$  & $C$ & -- & -- & -- & --\\
\hline
\ 8 \rule[-2mm]{0mm}{.7cm}& \cellcolor{gray!30} {$\{\{1\}, \{2,3\}, \{2,4\}, \{3,4\}\}$} &\cellcolor{gray!30} & -- & $M$ & $M$  & $C$ & $C$ & -- & --  & $C$ & $C$  & $C$  & $C$ & -- & -- & -- & --\\
\hline
\ 9 \rule[-2mm]{0mm}{.7cm}& \cellcolor{gray!30} {$\{\{1\}, \{2,3,4\}\}$} &\cellcolor{gray!30} & -- & --  & --  & $M$ & $M$ & $M$ & $M$  & -- & --  & $C$  & $C$ & $C$ & $C$ & -- & --\\
\hline
\ 10 \rule[-2mm]{0mm}{.7cm}& \cellcolor{gray!30} {$\{\{1,2\}\}$} &\cellcolor{gray!30} & -- & --  & --  & -- & -- & -- & --  & $C$ & $C$  & $C$  & $C$ & $C$ & $C$ & $K$ & $V$\\
\hline
\ 11 \rule[-2mm]{0mm}{.7cm}& \cellcolor{gray!30} {$\{\{1,2\},\{1,3\}\}$} &\cellcolor{gray!30} & -- & --  & --  & -- & -- & -- & --  & -- & --  & $C$  & $C$ & -- & -- & -- & --\\
\hline
\ 12 \rule[-2mm]{0mm}{.7cm}& \cellcolor{gray!30} {$\{\{1,2\},\{3,4\}\}$} &\cellcolor{gray!30} & -- & --  & --  &  $C$  &  $C$  & $C$ & $C$  & -- & --  & $C$ & $C$ & $C$ & $C$ & -- & --\\
\hline
\ 13 \rule[-2mm]{0mm}{.7cm}& \cellcolor{gray!30} {$\{\{1,2\},\{1,3\},\{1,4\}\}$} &\cellcolor{gray!30} & -- & --  & --  & -- & -- & -- & --  & $C$ & $C$ & $C$ & $C$ & -- & -- & -- & --\\
\hline
\ 14 \rule[-2mm]{0mm}{.7cm}& \cellcolor{gray!30} {$\{\{1,2\},\{1,3\},\{2,3\}\}$} &\cellcolor{gray!30} & -- & --  & --  & $C$ & $C$ & -- & --  & -- & --  & $C$  & $C$ & -- & -- & -- & --\\
\hline
\end{tabular}
\end{table}

\end{landscape}

\newpage

\begin{landscape}

\begin{table}[ht]
\caption{Access structures in the case $m=4$.}
\vspace{1mm}

\begin{tabular}{|c|p{65mm} c|c|c|c|c|c|c|c|c|c|c|c|c|c|c|c|}
\hline
  & & & 1 & 2 & 3 & 4 & 5 & 6 & 7 & 8 & 9 & 10 & 11 & 12 & 13 & 14 & 15 \\
\hline
    &\cellcolor{gray!30} &\cellcolor{gray!30} $g_0$ & \cellcolor{gray!30} 1 &\cellcolor{gray!30} 2 & \cellcolor{gray!30} 1 & \cellcolor{gray!30} 3\cellcolor{gray!30} &\cellcolor{gray!30} 2 &\cellcolor{gray!30} 2 &\cellcolor{gray!30} 1 & \cellcolor{gray!30} 3 &\cellcolor{gray!30} 2 & \cellcolor{gray!30} 4 & \cellcolor{gray!30} 3\cellcolor{gray!30} &\cellcolor{gray!30} 3 &\cellcolor{gray!30} 2 &\cellcolor{gray!30} 2 & \cellcolor{gray!30}1  \\
\hhline{|>{\arrayrulecolor{black}}->{\arrayrulecolor{gray!30}}->{\arrayrulecolor{gray!30}}->{\arrayrulecolor{black}}->{\arrayrulecolor{black}}->{\arrayrulecolor{black}}->{\arrayrulecolor{black}}->{\arrayrulecolor{black}}->{\arrayrulecolor{black}}->{\arrayrulecolor{black}}->{\arrayrulecolor{black}}->{\arrayrulecolor{black}}->{\arrayrulecolor{black}}->{\arrayrulecolor{black}}->{\arrayrulecolor{black}}->{\arrayrulecolor{black}}->{\arrayrulecolor{black}}->{\arrayrulecolor{black}}|-|}
    &\cellcolor{gray!30} &\cellcolor{gray!30} $g_1$ & \cellcolor{gray!30} 0 &\cellcolor{gray!30} 1 & \cellcolor{gray!30} 1 & \cellcolor{gray!30} 2\cellcolor{gray!30} &\cellcolor{gray!30} 2 &\cellcolor{gray!30} 1 &\cellcolor{gray!30} 1 & \cellcolor{gray!30} 2 &\cellcolor{gray!30} 2 & \cellcolor{gray!30} 3 & \cellcolor{gray!30} 3\cellcolor{gray!30} &\cellcolor{gray!30} 2 &\cellcolor{gray!30} 2 &\cellcolor{gray!30} 1 & \cellcolor{gray!30}1  \\
\hhline{|>{\arrayrulecolor{black}}->{\arrayrulecolor{gray!30}}->{\arrayrulecolor{gray!30}}->{\arrayrulecolor{black}}->{\arrayrulecolor{black}}->{\arrayrulecolor{black}}->{\arrayrulecolor{black}}->{\arrayrulecolor{black}}->{\arrayrulecolor{black}}->{\arrayrulecolor{black}}->{\arrayrulecolor{black}}->{\arrayrulecolor{black}}->{\arrayrulecolor{black}}->{\arrayrulecolor{black}}->{\arrayrulecolor{black}}->{\arrayrulecolor{black}}->{\arrayrulecolor{black}}->{\arrayrulecolor{black}}|-|}
    &\cellcolor{gray!30} &\cellcolor{gray!30} $g_2$ & \cellcolor{gray!30} 0 &\cellcolor{gray!30} 0 & \cellcolor{gray!30} 0 & \cellcolor{gray!30} 1\cellcolor{gray!30} &\cellcolor{gray!30} 1 &\cellcolor{gray!30} 1 &\cellcolor{gray!30} 1 & \cellcolor{gray!30} 1 &\cellcolor{gray!30} 1 & \cellcolor{gray!30} 2 & \cellcolor{gray!30} 2\cellcolor{gray!30} &\cellcolor{gray!30} 2 &\cellcolor{gray!30} 2 &\cellcolor{gray!30} 1 & \cellcolor{gray!30}1  \\
\hhline{|>{\arrayrulecolor{black}}->{\arrayrulecolor{gray!30}}->{\arrayrulecolor{gray!30}}->{\arrayrulecolor{black}}->{\arrayrulecolor{black}}->{\arrayrulecolor{black}}->{\arrayrulecolor{black}}->{\arrayrulecolor{black}}->{\arrayrulecolor{black}}->{\arrayrulecolor{black}}->{\arrayrulecolor{black}}->{\arrayrulecolor{black}}->{\arrayrulecolor{black}}->{\arrayrulecolor{black}}->{\arrayrulecolor{black}}->{\arrayrulecolor{black}}->{\arrayrulecolor{black}}->{\arrayrulecolor{black}}|-|} 
    &\cellcolor{gray!30} \diag{4}{65mm}{$\ \min\Delta$} &\cellcolor{gray!30} $g_3$ & \cellcolor{gray!30} 0 &\cellcolor{gray!30} 0 & \cellcolor{gray!30} 0 & \cellcolor{gray!30} 0\cellcolor{gray!30} &\cellcolor{gray!30} 0 &\cellcolor{gray!30} 0 &\cellcolor{gray!30} 0 & \cellcolor{gray!30} 1 &\cellcolor{gray!30} 1 & \cellcolor{gray!30} 1 & \cellcolor{gray!30} 1\cellcolor{gray!30} &\cellcolor{gray!30} 1 &\cellcolor{gray!30} 1 &\cellcolor{gray!30} 1 & \cellcolor{gray!30}1  \\
\hline 
\ 15 \rule[-2mm]{0mm}{.7cm}& \cellcolor{gray!30} {$\{\{1,2\},\{2,3\},\{1,4\}\}$} &\cellcolor{gray!30} & -- & --  & --  & $C$ & $C$ & -- & --  & -- & --  & $C$  & $C$ & -- & -- & -- & --\\
\hline
\ 16 \rule[-2mm]{0mm}{.7cm}& \cellcolor{gray!30} {$\{\{1,3\},\{2,3\},\{1,4\},\{2,4\}\}$}&\cellcolor{gray!30}  & -- & --  & --  & $C$ & $C$ & -- & --  & -- & --  & $C$  & $C$ & -- & -- & -- & --\\
\hline
\ 17 \rule[-2mm]{0mm}{.7cm}& \cellcolor{gray!30} {$\{\{1,2\},\{1,3\},\{2,3\} \{1,4\}\}$}&\cellcolor{gray!30}  & -- & --  & --  & $C$ & $C$ & -- & --  & -- & --  & $C$  & $C$ & -- & -- & -- & --\\
\hline
\ 18 \rule[-2mm]{0mm}{.7cm}& \cellcolor{gray!30} {$\{\{1,2\},\{1,3\},\{2,3\}, \{1,4\}, \{2,4\}\}$}&\cellcolor{gray!30} & -- & --  & --  & $C$ & $C$ & -- & --  & -- & --  & $C$  & $C$ & -- & -- & -- & --\\
\hline
\ 19 \rule[-2mm]{0mm}{.7cm}& \cellcolor{gray!30} {$\{\{1,2\},\{1,3\},\{2,3\}, \{1,4\}, \{2,4\}, \{3,4\}\}$} &\cellcolor{gray!30}& -- & $C$  & $C$  & $C$ & $C$ & -- & --  & $C$ & $C$  & $C$  & $C$ & -- & -- & -- & --\\ 
\hline
\ 20 \rule[-2mm]{0mm}{.7cm}& \cellcolor{gray!30} {$\{\{1,2\}, \{1,3,4\}\}$} &\cellcolor{gray!30} & -- & --  & --  & -- & -- & -- & --  & -- & --  & $C$  & $C$ & $C$ & $C$ & -- & --\\
\hline
\ 21 \rule[-2mm]{0mm}{.7cm}& \cellcolor{gray!30} {$\{\{1,2\}, \{1,3\}, \{2,3,4\}\}$}&\cellcolor{gray!30}  & -- & --  & --  & $C$ & $C$ & -- & --  & -- & --  & $C$ & $C$ & -- & -- & -- & --\\
\hline
\ 22 \rule[-2mm]{0mm}{.7cm}& \cellcolor{gray!30} {$\{\{1,2\}, \{1,3\}, \{1,4\}, \{2,3,4\}\}$}&\cellcolor{gray!30}  & -- & --  & --  & $C$ & $C$ & -- & --  & -- & --  & $C$ & $C$ & -- & -- & -- & --\\
\hline
\ 23 \rule[-2mm]{0mm}{.7cm}& \cellcolor{gray!30} {$\{\{1,2\}, \{1,3,4\},\{2,3,4\}\}$}&\cellcolor{gray!30}  & -- & --  & --  & $C$ & $C$ & $C$ & $C$  & -- & --  & $C$ & $C$ & $C$ & $C$ & -- & --\\
\hline
\ 24 \rule[-2mm]{0mm}{.7cm}& \cellcolor{gray!30} {$\{\{1,2,3\}\}$} &\cellcolor{gray!30} & -- & --  & --  & -- & -- & -- & --  & $C$ & $C$  & $C$  & $C$ & $C$ & $C$ & $W$ & $W$\\
\hline
\ 25 \rule[-2mm]{0mm}{.7cm}& \cellcolor{gray!30} {$\{\{1,2,3\},\{1,2,4\} \}$}&\cellcolor{gray!30}  & -- & --  & --  & -- & -- & -- & --  & -- & --  & $C$  & $C$ & $C$ & $C$ & -- & --\\
\hline
\ 26 \rule[-2mm]{0mm}{.7cm}& \cellcolor{gray!30} {$\{\{1,2,3\},\{1,2,4\},\{1,3,4\}\}$}&\cellcolor{gray!30}  & -- & --  & --  & -- & -- & -- & --  & -- & --  & $C$  & $C$ & $C$ & $C$ & -- & --\\
\hline
\ 27 \rule[-2mm]{0mm}{.7cm}& \cellcolor{gray!30} {$\{\{1,2,3\},\{1,2,4\},\{1,3,4\},\{2,3,4\}\}$}&\cellcolor{gray!30} & -- & --  & --  & $C$ & $C$ & $C$ & $C$  & -- & --  & $C$  & $C$ & $C$ & $C$ & -- & --\\
\hline
\ 28 \rule[-2mm]{0mm}{.7cm}& \cellcolor{gray!30} {$\{\{1,2,3,4\}\}$} &\cellcolor{gray!30} & -- & --  & --  & -- & -- & -- & --  & $C$ & $C$  & $C$  & $C$ & $C$ & $C$ & $C$ & $C$\\
\hline
\end{tabular}
\end{table}

\end{landscape}

\end{document}